\numberwithin{equation}{section}
\numberwithin{figure}{section}
\theoremstyle{plain}
\newtheorem*{thm*}{Theorem}
\theoremstyle{plain}
\newtheorem{thm}{Theorem}[section]
\newtheorem{lem}[thm]{Lemma}
\newtheorem{prop}[thm]{Proposition}
\theoremstyle{definition}
\newtheorem{defn}[thm]{Definition}
\newtheorem*{defn*}{Definition}
	\newtheorem{exmp}[thm]{Example}
\newtheorem{rem}[thm]{Remark}
\tikzset{
  big arrow/.style={
    decoration={markings,mark=at position 1 with {\arrow[scale=1.5,#1]{>}}},
    postaction={decorate},
    shorten >=0.4pt},
  big arrow/.default=black}
\numberwithin{equation}{section}
\begin{document}
\begin{titlepage}
\begin{center}
\vspace{4cm}
{\Huge\bfseries   The Geometry of F$_4$-Models\\  }
\vspace{2cm}
{%
\LARGE  Mboyo Esole$^{\spadesuit}$, Patrick Jefferson$^\clubsuit$, Monica Jinwoo Kang$^\clubsuit$\\}
\vspace{1cm}

{\large $^{\spadesuit}$ Department of Mathematics, Northeastern University}\par
{\large  360 Huttington Avenue, Boston, MA 02115, USA}\par

\vspace{.3cm}
{\large $^\clubsuit$ Department of Physics, Jefferson Physical Laboratory, Harvard University}\par
{ 17 Oxford Street, Cambridge, MA 02138, U.S.A}\par
 \scalebox{.95}{\tt  j.esole@northeastern.edu,  patrickjefferson@fas.harvard.edu, jkang@fas.harvard.edu }\par
\vspace{3cm}
{ \bf{Abstract:}}\\
\end{center}
{\date{\today\  \currenttime}}

We study the geometry of elliptic fibrations satisfying the conditions of Step 8 of Tate's algorithm. We call such geometries  F$_4$-models, as the dual graph of their special fiber is the twisted affine Dynkin diagram $\widetilde{\text{F}}_4^t$. 
These geometries are used in string theory to model gauge theories with the exceptional Lie group F$_4$ on a smooth divisor $S$ of the base. 
Starting with a singular Weierstrass model of an F$_4$-model,  we present a crepant resolution of its singularities. 
We study the fiber structure of this smooth elliptic fibration and identify the fibral divisors up to isomorphism as schemes over $S$.  
These are $\mathbb{P}^1$-bundles over $S$ or double covers of $\mathbb{P}^1$-bundles over $S$. 
We compute basic topological invariants such as the double and triple intersection numbers of the fibral divisors and the Euler characteristic of the F$_4$-model. 
 In the case of Calabi-Yau threefolds, we compute  the linear form induced by the second Chern class and 
the Hodge numbers.  
We also  explore the meaning of these geometries for the physics of gauge theories in five and six-dimensional minimal supergravity theories with eight supercharges. 
 We also introduce the notion of\    \    ``frozen representations'' and explore the role of the Stein factorization in the study of fibral divisors of elliptic fibrations. 

\vfill 

{Keywords: Elliptic fibrations, Crepant morphism, Resolution of singularities, Weierstrass models}

\end{titlepage}

\tableofcontents

\section{Introduction}

 An elliptic fibration is a proper projective morphism $\varphi: Y\longrightarrow B$ between normal varieties such that the generic fiber is a nonsingular projective curve of genus one and the fibration is endowed with a rational section. 
Under mild assumptions, an elliptic fibration is birational to a potentially singular Weierstrass model \cite{Formulaire,MumfordSuominen}. 
The locus of points of $B$ over which the elliptic fiber is singular is called the discriminant locus. 
The discriminant locus of a Weierstrass model is a Cartier divisor that we denote by $\Delta$. 
The type of the fiber over the generic point of an irreducible component of the discriminant locus 
of an elliptic fibration is well understood following the work of Kodaira \cite{Kodaira}, N\'eron\cite{Neron}, and Tate\cite{Tate}. 
Fibers over higher dimensional loci are not classified and are the subject of much interdisciplinary research by both mathematicians and physicists \cite{Miranda.smooth,Szydlo.Thesis,Morrison:2011mb,EY,ESY1,ESY2,Braun:2013cb,Anderson:2016ler,Marsano,Hayashi:2014kca,Kuntzler:2012bu, Lawrie:2012gg,Fullwood:2012kj,Cvetic:2012xn,Taylor:2012dr,EJJN2, EJJN1,Esole:2015xfa,Fullwood:SVW,FH2}. 
Crepant resolutions of a Weierstrass model are relative minimal models in the sense of Mori's program \cite{Matsuki}. Different crepant resolutions of the same Weierstrass model are  connected to each other by a sequence of flops.

\subsection{$G$-models and F-theory}
A classical problem in the study of elliptic fibrations is understanding the geometry of the crepant resolutions of Weierstrass models and their flop transitions.
A natural set of singular Weierstrass models to start with are the $G$-models. 
The constructions of $G$-models are deeply connected to the classification of singular fibers of Weierstrass models and provide an interesting scene to explore higher dimensional elliptic fibrations with  a view inspired by their applications to physics. We follow the definitions and notation of Appendix C  of \cite{MMP1}. 
The framework of $G$-models   naturally includes a geometric formulation of basic notions of representation theory such as the theory of root systems and weights of representations. 
 The data characterizing a $G$-model can be understood in the framework of gauge theories, which provides a natural language to talk about the geometry of these elliptic fibrations. 
 F-theory enables a description of gauge theories in string theory and M-theory via geometric engineering based on elliptic fibrations
\cite{Vafa:1996xn,Morrison:1996na,Morrison:1996pp,Bershadsky:1996nh}. 
The data of a gauge theory that can be extracted from an elliptic fibration are its  Lie algebra, its  Lie group, and the set of irreducible representations defining how charged particles transform under the action of the gauge group.

In F-theory, the Lie algebra  is determined by the dual graphs of the fibers over the generic points of the irreducible components of the discriminant locus of the elliptic fibration. 
 The Mordell-Weil group of the elliptic fibration is  conjectured to be isomorphic to the first homotopy group of the gauge group \cite{deBoer:2001wca}. 
 Hence, the Lie group depends on both the singular fibers and the Mordell-Weil group of the elliptic fibration. 
In F-theory, the singular fibers responsible for non-simply laced Lie algebras are not affine Dynkin diagrams, but  twisted affine Dynkin diagrams, as presented in Table \ref{Table:DualGraph} and Figure \ref{figure.F4}, respectively, on pages \pageref{figure.F4} and \pageref{Table:DualGraph}. These twisted affine Dynkin diagrams are the Dynkin duals of the corresponding affine Dynkin diagrams. 
The Dynkin dual of a Dynkin diagram is obtained by inverting all the arrows. In the language of Cartan matrices, two Dynkin diagrams  are dual to each other if their Cartan matrices are transposes of each other.
  The Langlands duality interchanges  B$_n$ and C$_n$,  but preserves all the other simple Lie algebras. 
 For affine Dynkin diagrams, only the ADE series are preserved under the Langlands duality. In particular, the Langlands duals of $\widetilde{\text{B}}_n$,
$\widetilde{\text{C}}_n$, $\widetilde{\text{G}}_2$ and  $\widetilde{\text{F}}_4$ are respectively denoted in the notation of Carter as 
$\widetilde{\text{B}}^t_n$,
$\widetilde{\text{C}}^t_n$, $\widetilde{\text{G}}^t_2$ and  $\widetilde{\text{F}}^t_4$.

\subsection{F$_4$-models: definition and first properties}

One of the major achievements of  F-theory  is the geometric engineering of exceptional Lie groups.  These elliptic fibrations play an essential role in the study of superconformal field theories even in absence of a Lagrangian description. 
 The study of non-simply laced Lie algebras in F-theory started in  May of 1996 during the second string revolution  with  a paper  of  Aspinwall and Gross \cite{Aspinwall:1996nk}, followed shortly afterwards by the classic F-theory paper of Bershadsky, Intriligator, Kachru, Morrison, Sadov, and Vafa  \cite{Bershadsky:1996nh}.  M-theory compactifications giving rise to non-simply laced gauge groups are studied  in \cite{IMS,Diaconescu:1998cn}.

In this paper,  we study the geometry of F$_4$-models, namely,  $G$-models with $G=\text{F}_4$, the exceptional simple Lie group of rank $4$ and dimension $52$. 
  F$_4$ is a  simply connected and non-simply laced  Lie group. 

An F$_4$-model is mathematically constructed as follows.
 Let $B$ be a smooth projective variety of dimension two or higher. Let $S$ be an effective Cartier divisor in $B$ defined as the zero scheme of a section $s$ of a line bundle $\mathscr{S}$.  
 Since $S$ is smooth, the residue field of its generic point is a discrete valuation ring. We denote the valuation with respect to $S$ as  $v_S$. 
An F$_4$-model is defined by the crepant resolution of the following  Weierstrass model 
\begin{equation}\label{Eq:Defining}
 y^2z=x^3+ s^{3+\alpha}a_{4,3+\alpha} x z^2 + s^4 a_{6,4} z^3, \quad \alpha\in\mathbb{Z}_{\geq 0},
\end{equation}
where $v_S(a_{6,4})=0$, and either $v_S(a_{4,3+\alpha})=0$ or $a_{4,3+\alpha}=0$.
We assume that  $a_{6,4}$ is generic; in particular, $a_{6,4}$ is not a perfect square modulo $s$. This ensures that the generic fiber is of type  IV$^{*\text{ns}}$ rather than  IV$^{*\text{s}}$. 

 The defining equation \eqref{Eq:Defining} is of type   (c6) in N\'eron's classification of minimal Weierstrass models, defined  over a perfect residue field of characteristic different from $2$ and $3$ \cite{Neron}. 
This corresponds to Step 8 of Tate's algorithm \cite{Tate}. Hence,   the geometric fiber over the generic point of $S$ is of Kodaira type IV$^*$. As discussed earlier, the generic fiber over $S$  has a dual graph that is  the twisted Dynkin diagram $\widetilde{\text F}^t_4$, which is  the Langlands dual of the affine Dynkin diagram  $\widetilde{\text F}_4$.The structure of the generic fiber of the F$_4$-model is due to the arithmetic restriction that $a_{6,4}$ is not a  perfect square modulo $s$. \\

 The   innocent arithmetic condition which characterizes a F$_4$-model has surprisingly deep topological implications for the elliptic fibration. 
When  $a_{6,4}$  is a  perfect square modulo $s$, the elliptic fibration is called an E$_6$-model; the generic fiber  IV$^{*\text{ns}}$ is replaced by the fiber  IV$^{*\text{s}}$ whose dual graph  is the  affine Dynkin diagram $\widetilde{\text{E}}_6$.  
Both E$_6$-models and F$_4$-models share the same geometric fiber over the generic point of $S$; however, the generic fiber type is different. 
Both are characterized by  Step 8 of Tate's algorithm, but while  the  generic fiber of an E$_6$-model is already made of geometrically irreducible components, the generic fiber of an F$_4$-model requires a quadratic field extension to make all its components  geometrically irreducible.   The Weierstrass model of an E$_6$-model has more complicated singularities, and thus requires a  more involved crepant resolution. 
In  a sense, the F$_4$-model is a more rigid  version of the E$_6$-model as some fibral divisors are glued together by the arithmetic condition on $a_{6,4}$. 
 It follows from the Shioda-Tate-Wazir theorem \cite{Wazir} that the Picard number of an E$_6$-model is bigger than the one for an F$_4$-model. Furthermore, their Poincar\'e-Euler characteristics are also different, as was recently analyzed in \cite{MMP1}. A crepant resolution of the Weierstrass model of an E$_6$-model has fourteen distinct minimal models \cite{Hayashi:2014kca}, while that of an F$_4$-model  has only one \cite[Theorem 1.27]{EJJN1}. An F$_4$-model is a flat elliptic fibration, while this is not  the case for an E$_6$-model of dimension four  or higher, as certain fibers over codimension-three points contain rational surfaces \cite{Kuntzler:2012bu}. 
 F$_4$-models are also studied from different points of view in \cite{Braun:2014oya,Bershadsky:1996nh,Bonora:2010bu,Esole:2012tf}.

The discriminant locus of the elliptic fibration \eqref{Eq:Defining} is 
\begin{align*}
\Delta= s^8 (4  s^{1+3\alpha} a_{4,3+\alpha}^3+27 a_{6,4}^2).
\end{align*} 
The discriminant is composed of two irreducible components not intersecting transversally. The first one is the divisor $S$, and the fiber above its generic point is of type IV$^{*\text{ns}}$. 
 The second component is a singular divisor, and the fiber over its generic point is a nodal curve, i.e. a Kodaira fiber of type I$_1$. The fiber I$_1$ degenerates to a cuspidal curve over $a_{4,3+\alpha}=a_{6,4}=0$. 
 The two components of the discriminant locus collide at $s=a_{6,4}=0$, where we expect the singular fiber IV$^{*\text{ns}}$ to degenerate, further producing  a non-Kodaira fiber.

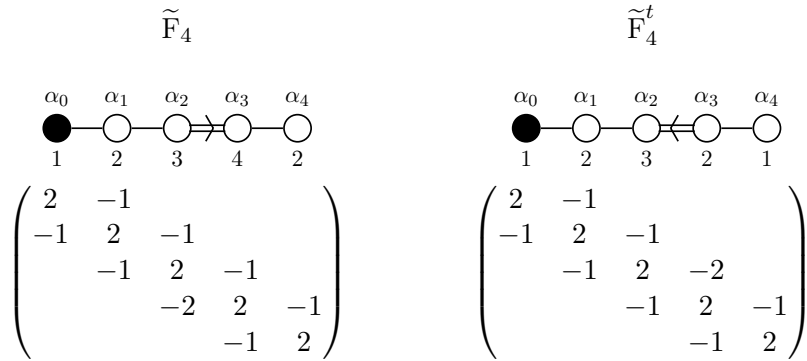
\begin{figure}[htb]
\begin{center}
\begin{tabular}{ccc}
$\widetilde{\text{F}}_4$& \phantom{xxxx}&$\widetilde{\text{F}}^t_4$\\
 \\
\scalebox{.8}{$\begin{array}{c}{
\begin{tikzpicture}
				\node[draw,circle,thick,scale=1.25,fill=black,label=below:{$1$}, label=above:{$\alpha_0$}] (1) at (0,0){};
				\node[draw,circle,thick,scale=1.25,label=below:{$2$},label=above:{$\alpha_1$}] (2) at (1,0){};
				\node[draw,circle,thick,scale=1.25,label=below:{$3$}, label=above:{$\alpha_2$}] (3) at (2,0){};
				\node[draw,circle,thick,scale=1.25,label=below:{$4$},label=above:{$\alpha_3$}] (4) at (3,0){};
				\node[draw,circle,thick,scale=1.25,label=below:{$2$}, label=above:{$\alpha_4$}] (5) at (4,0){};
				\draw[thick] (1) to (2) to (3);
				\draw[thick]  (4) to (5);
				\draw[thick] (2.2,0.05) --++ (.6,0);
				\draw[thick] (2.2,-0.05) --++ (.6,0);
				\draw[thick]
					(2.6,0) --++ (120:.25)
					(2.6,0) --++ (-120:.25);
	\end{tikzpicture}}
	\end{array}
	$}
	
	&&
	\scalebox{.8}{
	$\begin{array}{c}
	{\begin{tikzpicture}
				\node[draw,circle,thick,scale=1.25,fill=black,label=below:{1}, label=above:{$\alpha_0$}] (1) at (0,0){};
				\node[draw,circle,thick,scale=1.25,label=below:{2}, label=above:{$\alpha_1$}] (2) at (1,0){};
				\node[draw,circle,thick,scale=1.25,label=below:{3}, label=above:{$\alpha_2$}] (3) at (2,0){};
				\node[draw,circle,thick,scale=1.25,label=below:{2}, label=above:{$\alpha_3$}] (4) at (3,0){};
				\node[draw,circle,thick,scale=1.25,label=below:{1}, label=above:{$\alpha_4$}] (5) at (4,0){};
				\draw[thick] (1) to (2) to (3);
				\draw[thick]  (4) to (5);
				\draw[thick] (2.2,0.05) --++ (.6,0);
				\draw[thick] (2.2,-0.05) --++ (.6,0);
				\draw[thick]
					(2.4,0) --++ (60:.25)
					(2.4,0) --++ (-60:.25);
			\end{tikzpicture}}\end{array}$}
			\\
		$\begin{pmatrix}
		2 & -1& & & \\
	          -1 & 2&-1 & & \\
                 & -1& 2&-1 & \\
		& & -2&2 &-1 \\
	     & & & -1&2 
		\end{pmatrix}
		$	& &
		$\begin{pmatrix}
		2 & -1& & & \\
	          -1 & 2&-1 & & \\
                 & -1& 2&-2 & \\
		& & -1&2 &-1 \\
	     & & & -1&2 
		\end{pmatrix}
		$
		\\
			\end{tabular}
	\end{center}
	\caption{ {\bf Affine Dynkin diagram $\widetilde{\text{F}}_4$ vs. twisted affine Dynkin diagram $\widetilde{\text{F}}^t_4$}. 
	Their Cartan matrices are transposes of each other. Since the matrices are not symmetric, taking the transpose means inverting the arrow of the Dynkin diagram and changing the multiplicities of the nodes. 
	These matrices have rank four and therefore a kernel of dimension one. The normalization of the zero direction in terms of relatively prime integers gives  the multiplicities of the nodes of the Dynkin diagram.
	 In the notation of Kac,  $\widetilde{\text{F}}^t_4$ is denoted as $\widetilde{\text{E}}_6^{(2)}$ and $\widetilde{\text{F}}_4$ is denoted as
	 $\widetilde{\text{F}}_4${\color{blue},} or sometimes $\widetilde{\text{E}}_6^{(1)}$. 
	 The dual graph that appears in the theory of elliptic fibration is $\widetilde{\text{F}}^t_4$ and never $\widetilde{\text{F}}_4$	. 
	 	}\label{figure.F4}
		\end{figure}

\subsection{Representations associated to an F$_4$-model and flops. }

 To      determine the irreducible representations associated with a given $G$-model, we adopt the approach of Aspinwall and Gross  \cite{Aspinwall:1996nk}, which is deeply rooted in geometry.  The representation associated to a $G$-model  
  is  characterized by  its weights, which are geometrically given by  the intersection numbers, with a negative sign,   of the fibral divisors with  curves appearing over codimension-two loci over which the generic fiber IV$^{*\text{ns}}$ degenerates. 
 This approach,  also used in \cite{IMS,Aspinwall:2000kf, Marsano} and closely related to the approach of  \cite{Morrison:2011mb}, transcends its application to physics and allows an intrinsic determination of a representation for any $G$-model. 
   The representation induced by the weights of vertical curves over codimension-two points is not always  physical as it is possible that no hypermultiplet is charged under that 
  representation. In such a case, the representation is said to be ``frozen" as discussed in \S \ref{sec:frozen}.    

  For F$_4$-models, we find that over $V(s,a_{6,4})$, the fiber IV$^{*\text{ns}}$ degenerates to a non-Kodaira fiber of type $1-2-3-4-2$. This fiber consists of  a chain of rational curves intersecting transversally; each number gives the multiplicity of the corresponding rational curve. We refer to this  non-Kodaira fiber by the symbol  IV$^{*}_{(2)}$.  The last two nodes, of multiplicities $4$ and $2$,  have weights in the  representation $\mathbf{26}$ of F$_4$, namely 
   $\boxed{0\ 1\  -2 \  1 }$ and $\boxed{0\ 0\  1 \  -2 }$ as illustrated in  Figure  \ref{Figure:Degeneration} on page \pageref{Figure:Degeneration}.
  The representation $\mathbf{26}$ is a quasi-minuscule fundamental representation comprised of  two zero weights and 24 non-zero weights that form a unique Weyl orbit. 
  The possibility of flops between different crepant resolutions of the same singular Weierstrass model  can be explained by the relative minimal model program \cite{Matsuki}.

  For each non-simply laced Lie algebra $\mathfrak{g}_0$, there is a specific  quasi-minuscule representation $\mathbf{R}_0$ 
 defined by the branching rule $\mathfrak{g}=\mathfrak{g}_0\oplus\mathbf{R}_0$, where $\mathfrak{g}_0$ is defined by a  folding of $\mathfrak{g}$ of degree $d$.
The degree $d$ is the ratio of the squared lengths  of  long roots and short roots of $\mathbf{R}_0$. In other words, except for G$_2$, for which $d=3$, all other non-simply laced Lie algebras have $d=2$.
 This branching rule is related to the arithmetic degeneration $K^{\text{ns}}\to K^{\text{s}}$, where $K$ is a Kodaira fiber.  
For F$_4$, we have $\text{E}_6=\text{F}_4\oplus \mathbf{26}$, and the representation $\mathbf{R}_0$ coincides with the representation we find by computing weights of the curves over 
$V(s,a_{6,4})$.

  \subsection{Counting hypermultiplets:  Witten's genus formula}\label{Sec:counting}
 M-theory compactified on an elliptically fibered Calabi-Yau threefold  $Y$ gives rise  to  a five-dimensional supergravity theory with eight supercharges   coupled to $h^{2,1}(Y)+1$ neutral hypermultiplets and $h^{1,1}(Y)-1$ vector multiplets \cite{Cadavid:1995bk}. Taking into account the graviphoton, there are a total of $h^{1,1}(Y)$ gauge fields. 
 The  kinetic terms, the Chern-Simons coefficients of the vector multiplets, and the graviphoton are all completely determined by the intersection ring of the Calabi-Yau variety.

   Witten determined the number of states appearing when a curve collapses to a point using a quantization argument \cite{Witten:1996qb}. 
 In particular, he showed that the number of hypermultiplets transforming in the adjoint representation  is  the genus of the curve $S$ over which the gauge group is localized.
  Aspinwall, Katz, and Morrison subsequently applied Witten's quantization argument to  the case of non-simply laced  groups in \cite{Aspinwall:2000kf}. 
 If $S^{\prime}$ is a $d$ cover of $S$ with ramification divisor $R$, then 
   number $n_{{ \mathbf{R}_0}}$ of hypermultiplets transforming in  the representation $\mathbf{R}_0$ is given by \cite{Aspinwall:2000kf}
 \begin{equation}
 n_{{ \mathbf{R}_0}}=g'-g,\nonumber
 \end{equation}
where $g'$ is the genus of $S'$ and $g$ is the genus of $S$. 
   This method is consistent with the six-dimensional anomaly cancellation conditions \cite{GM1}.
Computing $g'-g$ is a classical exercise whose answer is given by the following theorem. 
   \begin{thm}[Riemann-Hurwitz, {see \cite[Chap. IV, Cor. 2.4 and Example 2.5.4.]{Hartshorne}}] \label{prop:genusCover}
Let $f:S'\to S$ be a finite, separable morphism of curves of degree $d$ branched with ramification divisor $R$. If $g'$ is the  genus of $S'$, $g$ is the genus of $S$,  and $R$ is the ramification divisor, then 
$$g'-g=(d-1)( g-1) +\frac{1}{2} \textnormal{deg}\ R.$$
\end{thm}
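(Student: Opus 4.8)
The plan is to deduce the formula from the Hurwitz relation between the canonical divisors, namely the linear equivalence $K_{S'}\sim f^{\ast}K_S+R$, and then simply compare degrees. Since $f$ is finite and separable of degree $d$ between smooth projective curves, the sheaf of relative differentials $\Omega_{S'/S}$ is torsion, supported exactly on the ramification locus, and $R$ is by definition the divisor of this torsion. The key local input is the following computation at a point $P\in S'$ lying over $Q=f(P)\in S$: choosing uniformizers $t$ at $Q$ and $u$ at $P$, one has $f^{\ast}t=v\,u^{e_P}$ for a unit $v$ and ramification index $e_P$, so
\[
f^{\ast}(dt)=\bigl(e_P v\,u^{e_P-1}+u^{e_P}\tfrac{dv}{du}\bigr)\,du .
\]
Hence the coefficient of $P$ in $R$ is the order of vanishing of $f^{\ast}(dt)$ along $P$, which equals $e_P-1$ in the tame case. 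This is precisely \cite[Chap.~IV, Prop.~2.3]{Hartshorne}, and it globalizes: a nonzero meromorphic differential $\omega$ on $S$ pulls back to a meromorphic differential $f^{\ast}\omega$ on $S'$ with $\mathrm{div}(f^{\ast}\omega)=f^{\ast}\bigl(\mathrm{div}(\omega)\bigr)+R$, giving $K_{S'}\sim f^{\ast}K_S+R$.

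Next I would take degrees of both sides. On a smooth projective curve of genus $\gamma$ the canonical divisor has degree $2\gamma-2$, so $\deg K_{S'}=2g'-2$ and $\deg K_S=2g-2$; and pullback of divisors along a finite morphism of degree $d$ multiplies degrees by $d$, so $\deg f^{\ast}K_S=d(2g-2)$. This yields
\[
2g'-2=d(2g-2)+\deg R .
\]
Dividing by $2$ gives $g'-1=d(g-1)+\tfrac{1}{2}\deg R$, and subtracting $g-1$ from both sides produces
\[
g'-g=(d-1)(g-1)+\tfrac{1}{2}\deg R ,
\]
which is the claimed identity (and incidentally shows $\deg R$ is even). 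An alternative route, giving the same relation, is the topological one: a triangulation of $S$ pulls back to a triangulation of $S'$, whence $\chi(S')=d\,\chi(S)-\deg R$ with $\chi=2-2g$; but the differential-geometric argument above is the one packaged in \cite{Hartshorne}.

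The only nontrivial step is the local ramification analysis and the resulting equivalence $K_{S'}\sim f^{\ast}K_S+R$; everything afterwards is bookkeeping with degrees. The point requiring care is \emph{tameness}: the coefficient of $P$ in $R$ equals $e_P-1$ exactly when $e_P$ is prime to the residue characteristic, and otherwise only $\geq e_P-1$. In the situations relevant here the covers $S'\to S$ have degree $2$ (or $3$ for $\mathrm{G}_2$) over a field of characteristic zero, so ramification is automatically tame; this is exactly the hypothesis under which the statement is recorded in \cite[Chap.~IV, Cor.~2.4 and Example 2.5.4]{Hartshorne}, so in practice one simply invokes that result.
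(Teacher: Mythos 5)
Your proof is correct and is precisely the argument of the cited reference: the paper itself offers no proof of this statement, simply invoking \cite[Chap.~IV, Cor.~2.4]{Hartshorne}, and your derivation (the Hurwitz linear equivalence $K_{S'}\sim f^{\ast}K_S+R$ followed by taking degrees) is exactly the proof packaged there. Your remark on tameness is a sensible precaution, and indeed harmless here since all covers in this paper live in characteristic zero.
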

Physically, this is the number of charged hypermultiplets in the representation $\mathbf{R}_0$, as expected from Witten's quantization argument\footnote{ This is a direct application of 
the Riemann-Hurwitz's theorem (Theorem \ref{prop:genusCover}) in the context of Witten's genus for non-simply laced groups  \cite{Aspinwall:2000kf}. See 
section 3 of \cite{GM2}. 
 }:
\begin{equation}
n_{\mathbf{R}_0}=(d-1)\Big( g-1\Big)+\frac{1}{2} \textnormal{deg}\  R.
\end{equation}
For an F$_4$-model, $\mathbf{R}_0=\mathbf{26}$,  the ramification locus $R$ is  $V(s,a_{6,4})$, and its degree is
\begin{equation}
\textnormal{deg}\ R= 12 (1-g)+2 S^2.\nonumber
\end{equation}
Hence, by Witten's argument, we expect for a generic F$_4$-model the following multiplicities: 
\begin{equation}
n_{\mathbf{52}}=g, \quad 
n_{\mathbf{26}}= 5(1-g) +S^2.\nonumber
\end{equation}

We later derive  in Theorem \ref{Thm:n52andn26} the number of matter representations from a direct comparison of the triple intersection numbers and the one-loop  Intrilligator-Morrison-Seiberg prepotential of a five-dimensional gauge theory \cite{IMS}. 
This matches exactly the number $n_{\bf{26}}$ derived by Witten's genus formula. This provides a  confirmation of the number of charged hypermultiplets from a purely five-dimensional point of view, thereby avoiding a six-dimensional argument  based on cancellations of anomalies \cite{GM1} and the  subtleties of the Kaluza-Klein circle compactification \cite{Bonetti:2013ela,Grimm:2015zea}.

\subsection{Arithmetic versus geometric degenerations}

Given an elliptic fibration $\varphi:Y\to B$, if  $S$ is an irreducible component of the discriminant locus, the generic fiber over $S$ can degenerate further over subvarieties of $S$. 
 We distinguish between two types of degenerations \cite{MMR.I}.  A degeneration is said to be {\em arithmetic} if it modifies the type of the fiber without changing the type of the geometric fiber. 
 A degeneration is said to be {\em geometric} if it modifies the geometric type of the fiber.  

\begin{exmp}[
Arithmetic degeneration] Let $K$ be a Kodaira fiber. Then, $K^{\text{ns}}\to K^{\text{s}}$ or $K^{\text{ns}}\to K^{\text{ss}}$ are arithmetic degenerations.
\end{exmp}

\begin{exmp}\label{exmp:Kd}
[Geometric degeneration] Let $K$ be a Kodaira fiber. Consider the non-split fiber $K^{\text{ns}}$. Denote by $K_{(d)}$ the non-Kodaira fibers defined in the limit where $d$ non-split curves of $K^{\text{ns}}$ coincide. 
Then the fiber $K^{\text{ns}}\to K_{(d)}$ is a geometric degeneration. 
\end{exmp}
 Arithmetic degenerations and geometric degenerations are respectively responsible for non-localized and localized matter in physics.    
 For example, in the case of an F$_4$-model over a base of dimension three or higher, we have the degeneration $\textnormal{IV}^{*\text{ns}}\to \textnormal{IV}^{*\text{s}}$ over the intersection of $S$ with any double cover of $a_{6,4}$ as it is clear from the explicit resolution of singularities. 
 Such a double cover has equation $\zeta^2=a_{6,4}$ where $\zeta$  is a section of $\mathscr{L}^{\otimes 3}\otimes \mathscr{S}^{\otimes 2}$. 
We can have such an arithmetic degeneration over any point of $S$ away from $V(s,a_{6,4})$. 
 
  In the case of an F$_4$-model, we get a fiber of type $1-2-3-4-2$, which is of the type IV$^*_{(2)}$ discussed in Example \ref{exmp:Kd}. 
 The geometry of the fiber shows explicitly that  localized matter fields at $V(s,a_{6,4})$ are in the representation $\mathbf{26}$  and do not come from an enhancement  $F_4\to  E_6$. That is because over the locus 
$V(s,a_{6,4})$, the generic fiber is the  non-Kodaira fiber $1-2-3-4-2$. Such  a fiber  can only be seen as the result of an enhancement of type IV$^{*\text{ns}}$ (F$_4$) to either an incomplete III$^*$ (E$_7$) or an incomplete II$^*$ (E$_8$), depending on the valuation of  the Weierstrass coefficient $a_4$. We have either the enhancement 
 $\text{F}_4\longrightarrow \text{E}_7$ if $v(a_4)=3$, or $\text{F}_4\longrightarrow \text{E}_8$ if $v(a_4)\geq 4$.
\footnote{The fiber $1-2-3-4-2$ also appears in Miranda's models at  the transverse collision II+IV$^*$ where it is presented as a contraction of a fiber of type II$^*$ (E$_8$), see Table 14.1 on page 130 of \cite{Miranda.smooth}. }  
\subsection{Frozen representations }\label{subsec:frozen}
As we have explained before, we identify representations by their weights, and we compute the weights geometrically by the intersection of fibral divisors with vertical curves located over codimension-two points  on  the base. 
It is important to keep in mind that the presence of a given weight is a necessary condition  but  not  a sufficient condition  for the existence of hypermultiplets transforming under  the corresponding representation. 
We always have to keep in mind  that the geometric representations deduced  by the weights of vertical curves over codimension-two points are not necessarily  carried by physical states. 
 A  representation $\mathbf{R}$ deduced geometrically on an elliptic fibration is  said to be {\em frozen} 
when the elliptic fibration has vertical curves (over a codimension-two locus of the base) carrying the weights of the  representation $\mathbf{R}$, but  no hypermultiplet is charged under the representation $\mathbf{R}$.

It is known that the adjoint representation is frozen when the gauge group is on a curve of genus zero \cite{Witten:1996qb}. 
However, it is less appreciated that other representations can also be frozen. A natural candidate is the   representation $\mathbf{R}_0$ discussed in \S\ref{Sec:counting} for the case of a non-simply laced gauge group.
  We will discuss the existence of a frozen representation for  an F$_4$-model in \S\ref{sec:frozen}. 
  In particular, Theorem \ref{Prop:frozen} asserts that the representation $\mathbf{26}$ is frozen if and only if the curve $S$ has genus zero and self-intersection $-5$.

It is a folklore theorem of D-brane model building that the number of representations appearing at the transverse collision of two branes is the number of collision points. 
In other words, one would expect one hypermultiplet for each intersection points\footnote{ There are some subtleties: when over  a collision point the same weight is induced by $n$ distinct  vertical curves, we can have up $n$ hypermultiplets localized at that point. See for example \cite{GM1,Katz:1996xe, Arras:2016evy}.}. While this is true for localized matter fields, it is not usually true when 
the same representation  appears both as localized
 and  non-localized. 
As a rule of thumb, when  localized and non-localized matter fields transforming in  the same representation coexist, the number of representations is given by assuming that all the matter is non-localized \cite{Morrison:2012np}. 

The notion of frozen representation  discussed in this paper should not be confused with the frozen singularities of ref. \cite{Tachikawa:2015wka,Atiyah:2001qf,deBoer:2001wca}.

  \subsection{Summary of results}

The purpose of  this paper is to study the geometry of F$_4$-models.  We define an F$_4$-model as an elliptic fibration over a smooth variety of dimension two or higher such that the singular fiber over the  generic point of a chosen  irreducible Cartier divisor $S$ is of type  IV$^{*\text{ns}}$, and the fibers are irreducible (smooth elliptic curves, type II or type I$_1$) away from $S$. 
 Such an F$_4$-model is realized by a crepant resolution of  a Weierstrass model, whose coefficients have valuations with respect to $S$, matching the generic case of Step 8 of Tate's algorithm. 
 An  F$_4$-model can  is always birational  to  a singular short Weierstrass equation  
  whose singularities are due to the valuations $v_S(c_4)\geq 3$ and $v_S(c_6)=4$ of its coefficients. 
Such a singular Weierstrass equation can be  traced back to N\'eron's seminal paper \cite{Neron}  where it corresponds to type c6.
The road map to the rest of the  paper is the following. 

 In section \ref{Sec:Coventions}, we summarize our most common conventions and basic definitions.  We also discuss Step 8 of Tate's algorithm, which characterizes the Weierstrass model of F$_4$-models. 

   In section \ref{Sec:Resolution} (see Theorem \ref{Thm:blowups} on page \pageref{Thm:blowups}), we  present a crepant resolution of the singular Weierstrass model, giving a flat fibration. 
The resolution is given by a sequence of four blowups with centers that are regular monomial ideals. 

 In section  \ref{Sec:Fiber}, we analyze in details the degeneration of the singular fiber and determine the geometry of the fibral divisors (see Theorem \ref{Thm:fibralGeom} on  page \pageref{Thm:fibralGeom} and  Figure \ref{Fig:FibralDiv} on page  \pageref{Fig:FibralDiv}). 
The generic fiber over $S$ degenerates along $V(a_{6,4})\cap S$ to produce a non-Kodaira fiber of type $1-2-3-4-2$.
 This  non-Kodaira fiber appears as  an incomplete Kodaira fiber of type III$^*$  or II$^*$ resulting from the (non-transverse) collision of the divisor $S$ with the remaining factor of the discriminant locus.  
Such a collision is not of Miranda-type since it involves two
 fibers of different $j$-invariants\cite{Miranda.smooth}. 

We show that the fibral divisors $D_3$ and $D_4$ corresponding to the root $\alpha_3$ and $\alpha_4$ of the F$_4$ Dynkin diagram are {\bf not} $\mathbb{P}^1$-bundles over the divisor $S$, but rather double covers of $\mathbb{P}^1$-bundles over $S$, with ramification locus  $V(a_{6,4})$.  
The geometry of these fibral divisors is illustrated in Figure \ref{Fig:FibralDiv}.
 The difference is important since it affects  the computation of triple intersection numbers and the degeneration of the fibers in codimension two, which is responsible for the appearance of 
 weights of the  representation $\mathbf{26}$. 
  We use the Stein factorization to have more control on the geometry of $D_3$ and $D_4$. 
 Consider the morphism $f: D_3\to S$.  The  geometric generic fiber is not connected and consists of two rational curves. 
Since the morphism is proper, we consider its  Stein factorization  $D_3\overset{f'}{\longrightarrow }S'\overset{\pi}{\longrightarrow } S$. 
By definition, the morphism $\pi: S'\to S$ is a finite map of degree two; each geometric point of the fiber represents
a connected component of the fiber of $D_3\to S$.  
The morphism $f': D_3\to S'$ has connected fibers that are all smooth rational curves. Hence, $f': D_3\to S'$ gives  $D_3$ the structure of a $\mathbb{P}^1$-bundle over $S'$ rather than over $S$.

We  determine in  \S \ref{Sec:Rep} the geometric weights that identify the representation naturally associated to the degeneration of the generic fiber over the codimension-two loci.  
The last two nodes of the fiber $1-2-3-4-2$ are responsible for generating the representation $\mathbf{26}$.

 In section \ref{Sec:Topology}, we compute the following topological invariants: the Euler characteristic of the elliptic fibration, the Hodge numbers in the Calabi-Yau threefold case,  the double and triple intersection numbers of the fibral divisors
 (Theorem \ref{Thm:degeneration} on page \pageref{Thm:degeneration}), and the linear form induced in the Chow ring by the second Chern class in the case of a Calabi threefold. 

 In section  \ref{Sec:Mtheory}, we leverage our understanding of the geometry to make a few statements on the physics of F$_4$ gauge theories in different dimensions. 
We specialize to the case of a Calabi-Yau threefold and consider an M-theory compactified on an F$_4$-model that is also a Calabi-Yau threefold. 
We compute the number of  hypermultiplets in the adjoint and fundamental representations using the triple intersection numbers.  We then match them to the coefficients of the 
five-dimensional cubic prepotential computed at the one-loop level in \cite{IMS}. 
 We check that the resulting spectrum is consistent with an anomaly-free parent six-dimensional gauge theory. 
Finally,  in \S\ref{sec:frozen} we discuss in detail the existence of frozen representations for an F$_4$-model.

\section{Basic conventions and  definitions}\label{Sec:Coventions}

We work over the complex numbers and assume that $B$ is a nonsingular variety, $\mathscr{L}$ is a line bundle over $B$, and $S=V(s)$ is a smooth irreducible subvariety of $B$ given by the zero scheme of a section $s$ of a line bundle $\mathscr{S}$. 
We use the conventions of Carter  and denote an affine Dynkin diagram by $\tilde{\mathfrak{g}}$, where $\mathfrak{g}$ is the Dynkin diagram of a simple Lie algebra \cite{Carter}.
We write $\tilde{\mathfrak{g}}^t$ for the twisted Dynkin diagram whose Cartan matrix is the transpose of the Cartan matrix of $\tilde{\mathfrak{g}}$. This notation is only relevant when $\mathfrak{g}$  is not simply laced, that is, for $\mathfrak{g}=$ G$_2$, F$_4$, B$_{3+k}$, or C$_{2+k}$. 
Given a vector bundle $\mathscr{V}$, we denote by $\mathbb{P}[\mathscr{V}]$ the projective bundle of lines of $\mathscr{V}$. In intersection theory, we follow the conventions of Fulton \cite{Fulton.Intersection}. 
We denote the geometric fibers of an elliptic surface by Kodaira symbols. To denote a generic fiber, we decorate the Kodaira fiber by an index ``ns'' , ``ss'' , or ``s'' that characterizes the degree of the field extension necessary to move from the generic fiber to the geometric generic fiber.

\subsection{Geometric weights}

Let $\varphi: Y\to B$ be a smooth  flat elliptic fibration whose discriminant has a unique component $S$ over which the generic fiber is reducible with dual graph the affine Dynkin diagram $\tilde{\mathfrak{g}}^t$. 
We denote the   irreducible components of the generic fiber over $S$ as  $C_a$. 
  If $\mathfrak{g}$ is not simply laced, the curves $C_a$ are not all geometrically irreducible. 
 Let   $D_a$ be the fibral divisors over $S$. By definition, $\varphi^* (S)=\sum_a m_a D_a$. 
   The curve $C_a$ can also be thought of as the generic fiber of $D_a$ over $S$.
Let $C$ be a vertical curve of the elliptic fibration. 

We define the weight of a vertical $C$ with respect to a fibral divisor $D_a$ as the intersection number 
$$
\varpi_a(C):=- \int_Y D_a \cdot C.
$$
 Using intersection of curves with fibral divisors to determine a representation from an elliptic fibration is a particularly robust algorithm  since the intersections of divisors and curves are well-defined even in the presence of singularities \cite{Fulton.Intersection}. 
We can ignore the intersection number of the divisor touching the section of the elliptic fibration as it is fixed in terms of the others, thus allowing us to write 
\begin{align*}
\varpi(C)=\boxed{\varpi_1(C)\  , \    \varpi_2(C)\  , \  \cdots \   ,\     \varpi_n(C)}.
\end{align*}
We interpret $\varpi(C)$ as the weight of the vertical curve $C$ in the basis of fundamental weights. This interpretation implies that the fibral divisors play the role of   co-roots of $\mathfrak{g}$, while vertical curves are identified with elements of the weight lattice 
of $\mathfrak{g}$.

The notion of {\em a saturated set of weights} is introduced in  Bourbaki (Groups and Lie Algebras,  Chap.VIII.\S 7. Sect. 2.) and provides 
the algorithm to determine a representation from a subset of its weights. See \cite{MMR.I} for more details.

  It follows from the general theory of elliptic fibration that the intersection of the generic fibers with the fibral divisors gives the invariant form of the affine Lie algebra $\tilde{\mathfrak{g}}^t$, where $\mathfrak{g}$ is the Lie algebra of $G$. 
The matrix  $\varpi_a(C_b)$ is the invariant form of the Lie algebra $\tilde{\mathfrak{g}}$ in the normalization  where short roots have diagonal entries $2$.

\subsection{Step 8 of Tate's algorithm}

 We follow the notation of Fulton \cite{Fulton.Intersection}. The terminology is borrowed from \cite{MMP1}.  
 Let $Y_0\longrightarrow B$ be a  Weierstrass model over a smooth base $B$, in which we choose a smooth Cartier divisor $S\subset B$. 
The local ring  $\mathscr{O}_{B,\eta}$ in $B$ of the generic point $\eta$ of $S$ is a discrete valuation ring with valuation $v_S$ given by the multiplicity along $S$. 
Using Tate's algorithm, the valuation of the coefficients of the Weierstrass model with respect to $v_S$ determines the type of the singular fiber over the generic point of $S$. 
Kodaira fibers refer to the type of the geometric fiber over the generic point of irreducible components of the discriminant locus of the Weierstrass model.

An  F$_4$-model describes  the generic case of  Step 8 of  Tate's algorithm, which characterizes the  Kodaira fiber of type IV$^{*\text{ns}}$ in  F-theory notation or IV$^*_{2}$ in the notation of Liu.
By definition, Kodaira fibers classify geometric fibers over the generic point of a component of the discriminant locus of an elliptic fibration. 
When the elliptic fibration is given by a Weierstrass model, this can be expressed in the language of a discrete valuation ring. 
Let $S$ be the relevant component. We assume that $S$ is smooth with generic point $\eta$. 
The local ring  at $\eta$ defines a discrete valuation ring with valuation $v$ that is essentially the multiplicity along $S$. 
We then have the following characterization:
\begin{thm}[Tate's algorithm \cite{Tate}, Step 8]
If  $v(a_1)\geq 1$, $v(a_2)\geq 2$, $v(a_3)\geq 2$, $v(a_4)\geq 3$, $v(a_6)\geq 4$ and the quadric polynomial  
$
Q(T)=T^2 + a_{3,2}  T-a_{6,4}
$
has two distinct solutions, then the  geometric special fiber is of Kodaira type IV$^*$. If the roots of $Q(T)$ are rational in the residue field, the generic fiber is of type  IV$^{*\text{s}}$,  otherwise (if the solutions are not rational in the residue field) the generic fiber is of  type  IV$^{*\text{ns}}$. 
\end{thm}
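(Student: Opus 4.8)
The statement is Step~8 of Tate's algorithm \cite{Tate}, and the plan is to recall its verification through an explicit resolution over the discrete valuation ring $\mathscr{O}_{B,\eta}$, with valuation $v=v_S$ and uniformizer $s$. First I would normalize the Weierstrass equation: the inequalities $v(a_1)\ge 1$, $v(a_2)\ge 2$, $v(a_3)\ge 2$, $v(a_4)\ge 3$, $v(a_6)\ge 4$ say precisely that the cubic $P(T)=T^3+a_{2,1}T^2+a_{4,2}T+a_{6,3}$ governing Steps~6--7 has a triple root, which an admissible translation $x\mapsto x+rs$ moves to the origin; in residue characteristic $\ne 2,3$ one may further complete the square, in which case $a_1=a_3=0$, the quadratic of the theorem is simply $Q(T)=T^2-a_{6,4}$, and the equation takes the short form of \eqref{Eq:Defining}. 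A direct computation then shows that distinctness of the roots of $Q$ is equivalent to $v(\Delta)=8$; combined with the triple-root condition (which excludes the $\mathrm{I}_n^*$ series), this forces the model to be minimal with a special fiber of Kodaira type IV$^*$, rather than a type III$^*$, II$^*$, or non-minimal degeneration.

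Next I would resolve the unique singular point of the special fiber by a sequence of blowups along regular monomial centers (for the F$_4$-model four suffice; cf.\ Theorem~\ref{Thm:blowups}), keeping track of the proper transform and of the multiplicities of the exceptional curves through the successive monomial charts. The exceptional configuration that emerges is a star of smooth rational curves whose dual graph is the affine Dynkin diagram $\widetilde{\mathrm E}_6$ --- which is exactly what it means for the geometric special fiber to be of Kodaira type IV$^*$. The arithmetic refinement is the delicate point. At the last stage of the resolution one of the exceptional loci is cut out, over the residue field $\kappa(\eta)$, by the binary quadratic $Q(T)$, whose discriminant is a unit precisely because $Q$ has two distinct roots; hence this locus is a pair of lines. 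If the roots of $Q$ lie in $\kappa(\eta)$, these lines --- two of the seven fibral components --- are individually defined over $\kappa(\eta)$, the dual graph over $\kappa(\eta)$ is the full $\widetilde{\mathrm E}_6$, and the generic fiber is of type IV$^{*\text{s}}$. If instead $Q$ is irreducible over $\kappa(\eta)$, Galois conjugation over its quadratic splitting field exchanges these lines and, correspondingly, identifies the geometric fibral components in conjugate pairs; over $\kappa(\eta)$ the seven-component geometric fiber descends to a five-component fiber whose dual graph is the $\mathbb{Z}/2$-folding of $\widetilde{\mathrm E}_6$, namely the twisted affine diagram $\widetilde{\mathrm E}_6^{(2)}=\widetilde{\mathrm F}_4^t$ of Figure~\ref{figure.F4}, so the generic fiber is of type IV$^{*\text{ns}}$. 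This is the asserted dichotomy.

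The main obstacle is the honest bookkeeping of the iterated blowups --- selecting the right affine charts, propagating the valuations of the Weierstrass coefficients, and in particular isolating the exact stage at which $Q(T)$ governs the geometry near the tip of the fiber; the remaining steps are essentially formal. A way to sidestep most of the computation is to observe that the normalized equation is N\'eron's type c6 \cite{Neron}, so that the geometric fiber type IV$^*$ may be quoted directly from N\'eron's classification over a perfect residue field of characteristic $\ne 2,3$, leaving only the split/non-split criterion to be established; and that criterion follows from the \'etale-local structure of the multiple fibral component over $\kappa(\eta)$, whose field of definition acquires the quadratic extension cut out by $Q$ exactly when $Q$ is irreducible.
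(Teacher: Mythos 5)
Your outline is correct, and it is essentially the standard verification of Step~8 of Tate's algorithm: normalize so that the triple root of the cubic sits at the origin, check that distinctness of the roots of $Q$ is equivalent to $v(b_{6,4})=0$ and hence to $v(\Delta)=8$, resolve over the discrete valuation ring to exhibit the seven-component geometric fiber with dual graph $\widetilde{\mathrm{E}}_6$, and read off the split/non-split dichotomy from whether the pair of lines cut out by $Q(T)$ is defined over $\kappa(\eta)$ or only over its quadratic extension. The paper itself offers no proof of this statement --- it is quoted from \cite{Tate}, with the remark that $\operatorname{disc}Q=b_{6,4}$ --- but your program is exactly what the paper carries out later in the normalized short form \eqref{Eq:Defining}: the four blowups of Theorem~\ref{Thm:blowups} produce the five components $C_0,\dots,C_4$ of Theorem~\ref{Thm:degeneration}, of which $C_3$ and $C_4$ split into conjugate lines precisely after adjoining $\sqrt{a_{6,4}}$, giving the folding $\widetilde{\mathrm{E}}_6\to\widetilde{\mathrm{F}}_4^t$ you describe. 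So your proof buys an actual argument where the paper relies on a citation. Two small points of care: first, the equivalence ``$Q$ has distinct roots $\iff v(\Delta)=8$'' should be justified by the valuation count $v(b_2^2b_8)\geq 10$, $v(b_4^3)\geq 12$, $v(b_2b_4b_6)\geq 10$, $v(b_6^2)\geq 8$, so that $v(\Delta)=8$ exactly when $v(b_6)=4$; second, the four blowups of Theorem~\ref{Thm:blowups} are tailored to the non-split case (for split $a_{6,4}$ the total space retains singularities where the conjugate lines would separate), so for the statement as given --- which covers both cases --- you should either work with the minimal regular model over the DVR directly, as Tate does, or note that the geometric special fiber is insensitive to this distinction, which is the shortcut via N\'eron's type c6 that you already mention.
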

\begin{rem}
The discriminant of the quadric $Q(T)=T^2 + a_{3,2}  T-a_{6,4}$, is exactly $b_{6,4}$. It follows that the fiber is of type IV$^*$ if and only if $v(b_{6,4})=0$. Moreover, the fiber is of either type   IV$^{*\text{s}}$ or  IV$^{*\text{ns}}$, depending respectively on whether or not $b_{6,4}$ is a perfect square.
\end{rem}
In view of the multiplicities, we can safely complete the square in $y$ and the cube in $x$ and write the Tate equation of a  IV$^{*\text{ns}}$ model as 
$$
y^2 z=x^3 + a_{4,3+\alpha } s^{3+\alpha}  x z^2 + s^4 a_{6,4} z^3, \quad \alpha \in \mathbb{Z}_{\geq 0},
$$
where $a_{6,4}$ is  {\bf not} a perfect square modulo $s$. 

The simplest way to identify a fiber of type IV$^{*\text{ns}}$ is to use the short Weierstrass equation since it does not require performing any translation. 
\begin{thm}[Tate's algorithm\cite{Tate}, Step 8]
$$
\begin{cases}
v(c_4)\geq 3, \quad v(c_6)=4\\
c_6 \quad \text{not a square modulo $s$}
\end{cases}
\iff \text{IV}^{*\text{ns}}.
$$
\end{thm}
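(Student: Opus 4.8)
The plan is to reduce the statement to the previous formulation of Step~8 --- the one phrased in terms of the valuations of the $a_i$ and the quadric $Q(T)=T^2+a_{3,2}T-a_{6,4}$ --- by completing the square and the cube and then bookkeeping valuations. Since we work in characteristic $0$ these completions are harmless, and after them $a_1=a_2=a_3=0$, so the standard identities (see \cite{Formulaire}) reduce to $b_2=0$, $b_4=2a_4$, $b_6=4a_6$, $c_4=-48\,a_4$ and $c_6=-864\,a_6$. I first record that ``$c_6$ not a square modulo $s$'' has to be read as: writing $c_6=s^4c_{6,4}$, the leading coefficient $c_{6,4}$ restricted to $S$ is not the square of a rational function on $S$ --- the only sensible reading, since $v_S(c_6)=4>0$ forces $c_6\equiv 0$, which is literally a square.

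For the implication $\textnormal{IV}^{*\text{ns}}\Rightarrow$ (the short conditions), I would invoke the classical completeness of Tate's algorithm \cite{Tate}: a fiber of type $\textnormal{IV}^{*}$ occurs precisely at Step~8, so one has a Weierstrass model with $v(a_1)\ge 1$, $v(a_2)\ge 2$, $v(a_3)\ge 2$, $v(a_4)\ge 3$, $v(a_6)\ge 4$ whose quadric $Q(T)$ has two distinct roots, not rational in $\kappa(S)$. Substituting these valuations into the identities for $b_2,b_4,b_6$ gives $v(b_2)\ge 2$, $v(b_4)\ge 3$, $v(b_6)\ge 4$, hence $v(c_4)=v(b_2^2-24b_4)\ge 3$ together with
\[
v(c_6)=v\bigl(-b_2^3+36\,b_2b_4-216\,b_6\bigr)\ge 4,\qquad c_{6,4}\equiv -216\,b_{6,4}\pmod{s},
\]
where $b_{6,4}$ is, by the Remark following the first Step~8 theorem, exactly the discriminant $a_{3,2}^2+4a_{6,4}$ of $Q(T)$. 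Because $Q$ has distinct roots, $b_{6,4}\not\equiv 0$, so $c_{6,4}\not\equiv 0$ and $v(c_6)=4$; because the roots of $Q$ are not rational in $\kappa(S)$, $b_{6,4}$ is not a square there, and since $-216$ is a nonzero square in $\mathbb{C}$, neither is $c_{6,4}=-216\,b_{6,4}$.

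For the converse I would run the argument backwards. Given $v(c_4)\ge 3$, $v(c_6)=4$ and $c_{6,4}$ not a square modulo $s$, take the short Weierstrass equation $y^2z=x^3+Axz^2+Bz^3$ with $A=-c_4/48$ and $B=-c_6/864$, so that $v(A)\ge 3$ and $v(B)=4$; writing $B=s^4 a_{6,4}$ with $a_{6,4}=-c_{6,4}/864$, this is a nonvanishing function on $S$ which, $-864$ being a square, is again not a square. Reading this as a Tate model with $a_1=a_2=a_3=0$, $a_4=A$, $a_6=B$, all valuation hypotheses of Step~8 hold, and $Q(T)=T^2-a_{6,4}$ has discriminant $4a_{6,4}$, which is nonzero and a non-square modulo $s$; thus $Q$ has two distinct roots, not rational in $\kappa(S)$, and the first Step~8 theorem yields geometric fiber $\textnormal{IV}^{*}$ and generic fiber $\textnormal{IV}^{*\text{ns}}$.

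The only point needing a little care --- and the closest thing to an obstacle --- is verifying that ``not a square modulo $s$'' is preserved under multiplication by the explicit nonzero constants ($-48$, $-216$, $-864$) produced by the coordinate changes; over $\mathbb{C}$ this is automatic since every nonzero complex number is a square, although the statement would have to be refined over a general perfect residue field. Everything else is routine valuation bookkeeping on top of the already-established version of Step~8 and its Remark, so no further geometric input is required.
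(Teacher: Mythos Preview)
Your reduction is correct, but there is nothing to compare it against: the paper does not supply a proof of this theorem. It is stated as a citation to Tate's algorithm (just like the preceding $a_i$-formulation and the subsequent N\'eron criterion), and the surrounding text only uses the result. Your argument --- pushing the valuation hypotheses through Deligne's formulaire to pass between the $(a_i,b_i)$-data of the first Step~8 theorem and the $(c_4,c_6)$-data here --- is exactly the natural bridge between the two formulations and is essentially how one would reconstruct the statement from \cite{Tate} and \cite{Formulaire}. The one clarification you added, that ``$c_6$ not a square modulo $s$'' must be read as a condition on the leading coefficient $c_{6,4}$, is implicit in the paper (cf.\ the Remark on $b_{6,4}$) and is the right reading. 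Your caveat about the constants $-216$, $-864$ being squares is also well placed: over $\mathbb{C}$ it is automatic, and the paper explicitly works over the complex numbers, so no refinement is needed here.
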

The condition on $c_4$ and $c_6$ can be traced back to N\'eron and forces the discriminant to have valuation $8$. N\'eron also points out that a fiber of type  IV$^*$ is uniquely identified by the valuation of its 
 $j$-invariant and its discriminant locus:
\begin{thm}[N\'eron \cite{Neron}]
$v(j)>0 \quad \text{and}\quad v(\Delta)=8\iff \text{IV}^*$. 
\end{thm}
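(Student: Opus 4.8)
The plan is to convert both hypotheses into conditions on the valuations of the Weierstrass coefficients $c_4$ and $c_6$ and then to invoke the form of Step 8 of Tate's algorithm recorded above. Over our discrete valuation ring --- whose residue field has characteristic different from $2$ and $3$ --- we have $1728\,\Delta = c_4^3 - c_6^2$ with $1728$ a unit, and $j = 1728\, c_4^3 / \Delta$, so that $v(j) = 3\,v(c_4) - v(\Delta)$, with the convention $v(c_4) = +\infty$ (hence $v(j) = +\infty$) when $c_4 = 0$. The only non-elementary input I would use is the equivalence ``$v(c_4) \geq 3$ and $v(c_6) = 4$'' $\iff$ ``geometric special fiber of Kodaira type $\mathrm{IV}^*$'', which is exactly Step 8 of Tate's algorithm; the split/non-split refinement stated above subdivides $\mathrm{IV}^*$ according to whether $c_6$ is a square modulo $s$, but does not change the geometric type.

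For the direction $\mathrm{IV}^* \Rightarrow$, I would simply observe that Tate's algorithm gives $v(c_4) \geq 3$ and $v(c_6) = 4$, whence $3\,v(c_4) \geq 9 > 8 = 2\,v(c_6)$; the ultrametric inequality then forces $v(\Delta) = v(c_4^3 - c_6^2) = 2\,v(c_6) = 8$, and so $v(j) = 3\,v(c_4) - 8 \geq 1 > 0$. For the converse, I would start from $v(j) > 0$, i.e.\ $v(\Delta) < 3\,v(c_4)$. If $3\,v(c_4) \leq 2\,v(c_6)$ held, then $v(\Delta) = v(c_4^3 - c_6^2) \geq \min(3\,v(c_4), 2\,v(c_6)) = 3\,v(c_4)$ --- in the boundary case $3\,v(c_4) = 2\,v(c_6)$ cancellation can only \emph{increase} $v(\Delta)$ --- contradicting $v(\Delta) < 3\,v(c_4)$. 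Hence $3\,v(c_4) > 2\,v(c_6)$, so $v(\Delta) = 2\,v(c_6)$; combined with $v(\Delta) = 8$ this yields $v(c_6) = 4$ and $3\,v(c_4) > 8$, i.e.\ $v(c_4) \geq 3$ (the value $+\infty$ included). Step 8 of Tate's algorithm then identifies the geometric special fiber as $\mathrm{IV}^*$, completing the equivalence.

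I do not expect a genuine obstacle, since the statement is really a fragment of N\'eron's classification table; the only points I would watch are the boundary case $3\,v(c_4) = 2\,v(c_6)$ of the ultrametric inequality (handled above) and the degenerate possibility $c_4 = 0$, absorbed by the convention $v(j) = +\infty$. As a sanity check, one can also argue directly on Kodaira's list: the singular fibers with $v(\Delta) = 8$ are precisely $\mathrm{I}_8$, $\mathrm{I}_2^*$ and $\mathrm{IV}^*$, with $v(j) = -8$, $-2$ and $\geq 1$ respectively, so that $v(j) > 0$ isolates $\mathrm{IV}^*$; but the argument through $v(c_4)$ and $v(c_6)$ is preferable here as it only uses the version of Tate's algorithm already at hand.
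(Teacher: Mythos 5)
Your proposal is correct. Note that the paper offers no proof of this statement at all --- it is quoted as a classical result of N\'eron and left as a citation --- so any complete argument is already more than the paper provides. Your derivation is clean and self-contained given the inputs actually recorded in the paper: you reduce everything to the characterization ``$v(c_4)\geq 3$ and $v(c_6)=4 \iff \mathrm{IV}^*$'' from Step 8 of Tate's algorithm (stated two lines above in the text), and then the identity $1728\,\Delta=c_4^3-c_6^2$ together with $v(j)=3\,v(c_4)-v(\Delta)$ does the rest via the ultrametric inequality. Both directions are handled correctly, including the two genuine edge cases: the boundary situation $3\,v(c_4)=2\,v(c_6)$, where cancellation can only raise $v(\Delta)$ and hence still contradicts $v(\Delta)<3\,v(c_4)$, and the degenerate case $c_4=0$ absorbed by $v(j)=+\infty$. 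One small point worth making explicit is that the Step 8 characterization presupposes a minimal Weierstrass equation; in the converse direction this is harmless because you derive $v(c_6)=4<6$, which itself certifies minimality. Your closing sanity check against Kodaira's table ($\mathrm{I}_8$, $\mathrm{I}_2^*$, $\mathrm{IV}^*$ being the types with $v(\Delta)=8$, separated by the sign of $v(j)$) is also accurate and is essentially how one reads the statement off N\'eron's classification directly.
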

This implies in particular that a fiber of type IV$^*$ has a vanishing $j$-invariant.

\section{Crepant resolution}\label{Sec:Resolution}

Let  $X_0=\mathbb{P}[\mathscr{O}_B\oplus \mathscr{L}^{\otimes 2}\oplus\mathscr{L}^{\otimes 3}]$ be the projective bundle in which the singular Weierstrass model 
 is defined as  a hypersurface.  The tautological line bundle of the projective bundle $X_0$ is denoted $\mathscr{O}(-1)$ and its dual $\mathscr{O}(1)$ has first Chern class   $H=c_1 \big( \mathscr{O}(1) \big)$. 
 
 Let $X$ be a nonsingular variety. 
 Let $Z\subset X$ be a complete intersection defined by the transverse intersection of $r$ hypersurfaces $Z_i=V(g_i)$, where $g_i$ is a section of the line bundle $\mathscr{I}_i$ and $(g_1, \cdots, g_r)$ is a regular sequence. 
 We denote the blowup of a nonsingular variety $X$ along the complete intersection $Z$ by 
 $$\begin{tikzpicture}
	\node(X0) at (0,-.3){$X$};
	\node(X1) at (3,-.3){$\widetilde{X}.$};
	\draw[big arrow] (X1) -- node[above,midway]{$(g_1,\cdots ,g_{r}|e_1)$} (X0);	
	\end{tikzpicture}
	$$
The blowup of $X$ with center $Z$ is the morphism	$f:\widetilde{X}=\mathbf{Proj}_X(\oplus_d {\mathscr{I}}^d)\to X$. The exceptional divisor of $f$ is the pre-image of the center $Z$, that is, $\widetilde{Z}=\mathbf{Proj}_X(\oplus_d {\mathscr{I}}^{d}/{\mathscr{I}}^{d+1})$. The exceptional divisor is $f$-relatively  ample. 
	If  $Z$ is a complete intersection, then $ {\mathscr{I}}/{\mathscr{I}}^2$ is locally free. Hence,  $\mathbf{Sym}^d ( {\mathscr{I}}/{\mathscr{I}^2})={\mathscr{I}}^{d}/{\mathscr{I}}^{d+1} $ and 
	$\widetilde{Z}=\mathbb{P}_X(\mathscr{I}/\mathscr{I}^{2})$. The normal sheaf $N_{\widetilde{Z}|\widetilde{X}}$ is $\mathscr{O}_{\tilde{Z}}(-1)$,   $E_1=c_1 (\mathscr{O}_{\widetilde{Z}}(1))$  is the first Chern class of the exceptional divisor $\widetilde{Z}=V(e_1)$, and $[\widetilde{Z}]=E_1\cap [\widetilde{X}]$.

We abuse notation and use the same symbols for $x$, $y$, $s$, $e_i$ and their successive proper transforms. We also do not write the obvious pullbacks.

\begin{lem}
Let $Z\subset X$ be a  smooth complete intersection of $n+1$ hypersurfaces meeting transversally. Let $Y$ be a hypersurface in $X$ singular along $Z$. If $Y$ has multiplicity $n$ along $Z$, then the blowup of $X$ along $Z$ restricts to a crepant morphism $\overline{Y}\longrightarrow Y$ for the proper transform of $Y$. 
\end{lem}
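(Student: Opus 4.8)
The plan is to work in local coordinates near a point of $Z$ and compute both the discrepancy of the blowup $f:\widetilde X\to X$ and the multiplicity of the total transform of $Y$ along the exceptional divisor, then check that they cancel. Since $Z$ is a smooth complete intersection of $n+1$ hypersurfaces meeting transversally, étale-locally we may choose coordinates $z_0,\dots,z_n$ on $X$ (together with the remaining coordinates) such that $Z=V(z_0,\dots,z_n)$. The blowup of $X$ along such a regular center is the standard construction: on the chart where $e_1$ is the exceptional coordinate one writes $z_i = e_1 z_i'$ for $i=0,\dots,n$, and the exceptional divisor $E_1$ is $V(e_1)$. The key numerical input is the relative canonical bundle formula for the blowup of a smooth variety along a smooth subvariety of codimension $c=n+1$, namely
\begin{equation*}
K_{\widetilde X} = f^* K_X + (c-1)\,E_1 = f^* K_X + n\,E_1.
\end{equation*}

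Next I would track $Y$. By hypothesis $Y$ has multiplicity exactly $n$ along $Z$, so its defining section $g$, expressed in the local coordinates above, lies in $\mathscr I_Z^{\,n}$ but not in $\mathscr I_Z^{\,n+1}$, where $\mathscr I_Z=(z_0,\dots,z_n)$. Pulling back, every degree-$n$ monomial in the $z_i$ acquires a factor $e_1^{\,n}$, so $f^* g = e_1^{\,n}\, \tilde g$ where $\tilde g$ is the defining section of the proper transform $\overline Y$. In divisor terms, $f^*Y = \overline Y + n\,E_1$. This is the content of the multiplicity hypothesis: the total transform splits off the exceptional divisor with multiplicity $n$, and since the multiplicity is exactly $n$ (not more), $\overline Y$ does not contain $E_1$, i.e.\ $\overline Y\to Y$ is genuinely the blowup of $Y$ along $Z$ and is a proper birational morphism.

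Finally I would combine the two computations by adjunction. Writing $g:\overline Y\to Y$ for the induced morphism and using that $Y$ (resp.\ $\overline Y$) is a Cartier divisor in $X$ (resp.\ $\widetilde X$), adjunction gives $K_Y = (K_X + Y)|_Y$ and $K_{\overline Y} = (K_{\widetilde X} + \overline Y)|_{\overline Y}$. Substituting the two formulas above,
\begin{equation*}
K_{\widetilde X} + \overline Y = f^*K_X + n E_1 + f^* Y - n E_1 = f^*(K_X + Y),
\end{equation*}
so restricting to $\overline Y$ yields $K_{\overline Y} = g^*\big((K_X+Y)|_Y\big) = g^* K_Y$. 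Hence $g:\overline Y\to Y$ has vanishing discrepancy, i.e.\ it is crepant. The main obstacle — really the only subtle point — is justifying that $\overline Y$ is again Cartier in the smooth ambient $\widetilde X$ and that adjunction applies on the possibly still-singular $\overline Y$; this is handled by noting that $\overline Y=V(\tilde g)$ is cut out by a single section of a line bundle on the smooth variety $\widetilde X$, so it is Cartier and the adjunction formula for Cartier divisors (valid without smoothness of the divisor itself) applies verbatim. One should also remark that the local computation suffices because being crepant is checked on the exceptional locus, which lies over $Z$, and the transversality of the complete intersection guarantees the local model above is valid in an étale neighborhood of every point of $Z$.
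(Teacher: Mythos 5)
Your proposal is correct and follows essentially the same route as the paper: the discrepancy $K_{\widetilde X}=f^*K_X+nE$ for the blowup of a smooth codimension-$(n+1)$ center, the decomposition $f^*Y=\overline Y+nE$ from the multiplicity hypothesis, and adjunction to conclude $K_{\overline Y}=g^*K_Y$. The paper's proof is a three-line version of exactly this cancellation; your additional remarks on the local model and on why adjunction applies to the possibly singular Cartier divisor $\overline Y$ are correct but not needed beyond what the paper records.
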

\begin{proof}
Let $\widetilde{X}=Bl_Z X$  be the blowup of $X$ along $Z$ and $E$ be the class of the exceptional divisor. Then  $c_1(TX)=f^* c_1(TX)-n E$. Since $Y$ has a multiplicity $n$ along $Z$, we have  $f^* Y=\overline{Y}+n E$, where $\overline{Y}$ is the proper transform of $Y$. 
By adjunction, $c_1(\overline{Y})= c_1(\widetilde{X})-\overline{Y}=f^* c_1(X)-f^* Y=f^* c_1(Y)$.    
\end{proof}

\begin{thm}\label{Thm:blowups}
Consider the  following  Weierstrass equation where $S=V(s)$ is a Cartier divisor of the base $B$ :
\begin{align}\nonumber
\mathscr{E}_0:\quad y^2z  =x^3 +s^{3+\alpha} fxz^2 + s^4 gz^3, \quad \alpha\in\mathbb{Z}_{\geq 0}, 
\end{align}
where $f$, $g$, and $s$ are respectively assumed to be generic sections of  $\mathscr{L}^{\otimes 4}\otimes \mathscr{S}^{-\otimes (3+\alpha)}$, 
$\mathscr{L}^{\otimes 6}\otimes \mathscr{S}^{-\otimes 4}$, and $\mathscr{S}$. 
Let  $X_0=\mathbb{P}_B[\mathscr{O}_B\oplus \mathscr{L}^{\otimes 2}\oplus\mathscr{L}^{\otimes 3}]$ be the ambient space in which $\mathscr{E}_0$ is defined. 
The following sequence of blowups provides a crepant resolution of the singular Weierstrass model $\mathscr{E}_0$: 
\begin{equation}\label{Eq:blowups}
 \begin{tikzpicture}
	\node(X0) at (0,0){$X_0$};
	\node(X1) at (2.5,0){$X_1$};
	\node(X2) at (5,0){$X_2$};
	\node(X3) at (8,0){$X_3$};
	\node(X4) at (11,0){$X_4.$};
	\draw[big arrow] (X1) -- node[above,midway]{$(x,y,s|e_1)$} (X0);	
	\draw[big arrow] (X2) -- node[above,midway]{$(y,e_1|e_2)$} (X1);
	\draw[big arrow] (X3) -- node[above,midway]{$(x,e_2|e_3)$} (X2);		
	\draw[big arrow] (X4) -- node[above,midway]{$(e_2,e_3|e_4)$} (X3);
	\end{tikzpicture}
	\end{equation}
We describe this sequence of blowups starting with the projective bundle $X_0$, which serves as the ambient space of the Weierstrass equation. 
The first blowup  $X_1\longrightarrow X_0$ is centered at the regular monomial ideal $(x,y,s)$, where $s$ is a section of $\mathscr{S}=\mathscr{O}_B(S)$. 
The exceptional divisor $E_1$ of the first blowup is a $\mathbb{P}^2$ bundle. 
 The second blowup  $X_2\longrightarrow X_1$, parametrized by $[x:s]$,  is centered along the fiber of $E_1$ defined by the proper transform of $V(y)$ and  its exceptional divisor is $E_2$.
The third blowup $X_3\longrightarrow X_2$ is centered in $E_2$ along the fiber over $V(x)$ and has exceptional divisor $E_3$.
The last blowup   $X_4\longrightarrow X_3$ is centered in $E_3$ along the fiber given by $V(e_2)$ and has exceptional divisor $E_4$.
 
\end{thm}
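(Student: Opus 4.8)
The plan is to verify Theorem~\ref{Thm:blowups} in two stages: first check that the proposed sequence of four blowups actually resolves the singularities of $\mathscr{E}_0$, and second check that each blowup is crepant so that the composite morphism $X_4 \supset \mathscr{E}_4 \to \mathscr{E}_0$ is a crepant resolution. The crepancy half is essentially bookkeeping once the geometry is understood: each center is a regular (monomial) complete intersection in the relevant ambient space, so the preceding Lemma applies blowup-by-blowup provided I can check that the proper transform of the Weierstrass hypersurface has exactly the right multiplicity along each center. So the substance of the proof is the local singularity analysis.

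First I would set up affine charts. Near the locus $x=y=s=0$ the Weierstrass hypersurface $\mathscr{E}_0: y^2 z = x^3 + s^{3+\alpha} f x z^2 + s^4 g z^3$ is singular (one checks the vanishing of all partials), and on the patch $z=1$ this is a hypersurface singular along the codimension-three complete intersection $V(x,y,s)$. I would compute the multiplicity of $\mathscr{E}_0$ along $(x,y,s)$: assigning weight $1$ to each of $x,y,s$, the lowest-order terms are $y^2$ (order $2$) versus $x^3$ (order $3$) and $s^{3+\alpha}fx$ (order $\geq 4$) and $s^4 g$ (order $4$), so the multiplicity is $2$, matching $n=2$ for a center of codimension $n+1=3$; hence the first blowup $(x,y,s|e_1)$ is crepant by the Lemma. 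Then I would pass to the charts of $X_1$ (the $x$-chart: $x\to x, y\to xy, s\to xs$ with $e_1 = x$; similarly the $y$- and $s$-charts), substitute into the proper transform, and locate the remaining singular locus. I expect the proper transform after the first blowup to read (in the $e_1$-chart generated by, say, the variable along which one divides) something like $y^2 z = e_1(x^3 + \dots)$ with a new singular locus along $V(y,e_1)$, which dictates the second center $(y,e_1|e_2)$; iterating, the third center $(x,e_2|e_3)$ and fourth center $(e_2,e_3|e_4)$ emerge. For each of these I must (i) confirm the center is a smooth complete intersection meeting transversally in the current ambient space, (ii) compute the multiplicity of the current proper transform along it and check it equals the codimension minus one so the Lemma gives crepancy, and (iii) confirm that after the blowup the total space of the elliptic fibration is regular. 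After the fourth blowup I would check in every chart that the Jacobian criterion is satisfied, i.e. $\mathscr{E}_4$ is smooth, and finally verify flatness of $\mathscr{E}_4 \to B$ by checking that no fiber component has excess dimension (equivalently, that the resolved total space is Cohen--Macaulay and equidimensional over $B$, or directly that all fibers are one-dimensional).

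The main obstacle will be the chart-by-chart substitution and the verification that the sequence of centers I am handed is exactly the one that works: in particular, after the second and third blowups the proper transform acquires cross-terms from the $s^{3+\alpha} f x z^2$ and $s^4 g z^3$ pieces, and one must be careful that the cases $\alpha = 0$ and $\alpha \geq 1$ (and the subcase $f=0$) do not change which monomial controls the multiplicity along the next center. I would organize this by treating $e_1,e_2,e_3,e_4$ and $x,y,s$ symbolically and tracking the exceptional-divisor exponents, so that the multiplicity computations reduce to reading off the minimum of a small finite set of monomial orders; the genericity hypotheses on $f$ and $g$ (namely $v_S(g)=0$ and $v_S(f)=0$ or $f=0$) are what guarantee these minima are achieved only by the expected terms and that the resulting fibral configuration is the twisted $\widetilde{\mathrm{F}}_4^t$ one rather than a more degenerate type. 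A secondary point to be careful about is globalizing: the centers must be checked to be the zero loci of genuine sections of line bundles over the successive ambient spaces (not merely local monomial ideals), which follows from the standard formulas for how $H$, $E_1, E_2, E_3, E_4$, and the classes of $x,y,s$ transform under each blowup, and I would record these class transformations explicitly as they are also needed later for the intersection-theoretic computations in Section~\ref{Sec:Topology}.
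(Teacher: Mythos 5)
Your proposal follows essentially the same route as the paper: blow up the regular monomial centers one at a time, verify crepancy of each step via the preceding Lemma by checking that the multiplicity of the (proper transform of the) hypersurface along each complete-intersection center equals the codimension of the center minus one (multiplicity $2$ along the codimension-three locus $V(x,y,s)$, then multiplicity $1$ along the codimension-two non-Cartier Weil divisors $V(y,e_1)$, $V(x,e_2)$, $V(e_2,e_3)$), and conclude by checking smoothness of the final proper transform with the Jacobian criterion. The paper's proof is just a terser version of this same plan, adding only the observation that the smoothness check is simplified because the equation is a double cover (singularities lie on the branch locus) and because several pairs of variables cannot vanish simultaneously after the blowups.
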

\begin{proof}
We recall that blowup up of a divisor is an isomorphism away from the singular locus. The Weierstrass model has a singular scheme supported on the ideal $(x,y,s)$. 
\begin{enumerate}
\item \textit{First blowup.} Since the generic point of this ideal is a double point singularity of the Weierstrass model and the ideal has length $3$, blowing up $(x,y,s)$ is a crepant morphism. 
\item \textit{Second blowup.} We are in $X_1$ and the singular locus is supported on $(y, x,e_1)$. At this point, we could choose to blowup again $(x,y,e_1)$ since it is a locus of double points and the ideal has length $3$. 
However, we could also blowup $(y,e_1)$, which is a  non-Cartier Weil divisor. This is clearly crepant since $(y,e_1)$ has length $2$ and multiplicity one. Blowing up this divisor is not an isomorphism since it contains $(x,y,e_1)$, the support of the singular locus.
\item \textit{Third blowup.} We blowup the ideal  $(x,e_2)$, which corresponds to a   non-Cartier Weil divisor  of multiplicity one. 
\item \textit{Fourth blowup.} We finally blowup $(e_2,e_3)$, which is also a non-Cartier Weil divisor of multiplicity one. This is crepant because the ideal has length $2$ and the defining equation has multiplicity one along $(e_2,e_3)$.
\end{enumerate}

 After the fourth blowup, we check using the Jacobian criterion that there are no singularities left. 
We can also simplify  computations by noticing that the defining equation is a double cover and therefore, the singularities should be on the branch locus.  Moreover, certain variables cannot vanish at the same time due to the centers of the blowups. 
In particular, each of $(x,y,s)$, $(y,e_1)$, $(s,e_3)$, $(s,e_4)$, $(x,e_2)$, $(x,e_4)$, and $(e_2, e_3)$ corresponds to the empty set in $X_4$. 
\end{proof}
The divisor classes of the different variables in $X_i$ are given in the following table:\\

\noindent \scalebox{.95}{
\begin{tabular}{|c||   l  |  l  |  l  |  l  |  l |  l |   l  |  l  |}
\hline
& $x$ & $y$ &  $z$ &  $s$ & $e_1$ & $e_2$& $e_3$ &$e_4$\\
 \hline 
 $X_0$& $2L+H$ &  $3L+H$ & $H$ & $S$ & - & -& -& - \\
 \hline
   $X_1$& $2L+H-E_1$ &  $3L+H-E_1$ & $H$ & $S-E_1$ & $E_1$ &- &- &-\\
 \hline 
  $X_2$& $2L+H-E_1$ &  $3L+H-E_1-E_2$ & $H$ & $S-E_1$ & $E_1-E_2$ & $E_2$ &- &-\\
 \hline 
  $X_3$& $2L+H-E_1-E_3$ &  $3L+H-E_1-E_2$ & $H$ & $S-E_1$ & $E_1-E_2$ & $E_2-E_3$ & $E_3$ &-\\
 \hline 
$ X_4$& $2L+H-E_1-E_3$ &  $3L+H-E_1-E_2$ & $H$ & $S-E_1$ & $E_1-E_2$ & $E_2-E_3-E_4$ & $E_3-E_4$ & $E_4$ \\
 \hline 
 \end{tabular}}
 \newline
 
 The proper transform of the Weierstrass model is a smooth elliptic fibration $\varphi:Y\longrightarrow B$
	\begin{align}\nonumber
Y:		e_2 y^2 z&=e_1( e_3^2 e_4 x^3+ e_1 e_2 e_3 (e_1 e_2 e_3 e_4^2)^{\alpha} s^{3+\alpha} a_{4,3+\alpha}xz^2+ e_1 e_2 s^4a_{6,4}z^3).
	\end{align}
The successive relative ``projective coordinates'' for the fibers of  $X_i$ over $X_{i-1}$ are $(i=1,2,3,4)$
	\begin{align}\nonumber
 [e_1 e_2 e_3^2 e_4^3 x : e_1 e_2^2 e_3^2 e_4^4 y : z], ~[ e_3 e_4 x : e_2 e_3 e_4^2 y : s],                        [y:e_1],~[ x:e_2 e_4], ~[e_2 :e_3].
	\end{align}
These projective coordinates are not independent of each other, as we have a tower of  projective bundles defined over subvarieties of projective bundles. 
The interdependence between the different projective bundles are captured  by the following scalings:
	$$
	\begin{aligned}
		X_0/B &\quad [e_1 e_2 e_3^2 e_4^3 x : e_1 e_2^2 e_3^2 e_4^4 y :  z] \\
		X_1/ X_0 &\quad  [\ell_1 ( e_3 e_4 x) : \ell_1 (e_3 e_4^2 e_2 y) : \ell_1 s] \\
		X_2/ X_1 &\quad [\ell_1 \ell_2 y:\ell_1^{-1}\ell_2 e_1]\\
		X_3/X_2 &\quad [\ell_1\ell_3 x:\ell_2^{-1}\ell_3 (e_2 e_4)]\\
		X_4/X_3 &\quad [\ell_4 \ell_3 \ell_2^{-1} e_2 :\ell_4 \ell_3^{-1} e_3]
	\end{aligned}
	$$
	where $\ell_1$, $\ell_2$, $\ell_3$, and $\ell_4$ are used to denote the scalings of each blowup.

\section{Fiber structure}\label{Sec:Fiber}
 In this section, we explore the geometry of the crepant resolution  $Y\to \mathscr{E}_0$ obtained in the previous section.  
Composing with the projection of $\mathscr{E}_0$ to the base $B$, we have a surjective morphism $\varphi:Y\to B$, which is an elliptic fibration over $B$. 
We denote by $\eta$ a generic point of $S$. 
We study in details the generic fiber $Y_\eta$ of the elliptic fibration and its specialization. 
Its dual graph is  the twisted Dynkin diagram $\widetilde{\text{F}}_4^t$, namely, the dual of the affine Dynkin diagram $\tilde{\text F}_4$.
We call $C_a$ the irreducible components of the generic fiber, and $D_a$ the irreducible fibral divisors.
 We can think of $C_a$ as the generic fiber of $D_a$ over $S$. Given a section $u$ of a line bundle, we denote by $V(u)$ the vanishing scheme of $u$. As a set of point, $V(u)$ is defined by the equation  $u=0$.  
If $\cal{I}$ is an ideal sheaf, we also denote by $V(\cal{I})$ its zero scheme.

\subsection{Structure of the generic fiber}
After the blowup, the generic fiber over $S$ is composed of five curves since the total transform of $s$ is $se_1 e_2 e_3 e_4^2$. 
 The irreducible components of the generic curve $Y_\eta$ are the following five curves:  
\begin{align}
C_0 :& \quad s= e_2 y^2-e_1e_3^2 e_4 x^3=0\\
C_1 :&\quad e_1=e_2=0\\
C_2 :& \quad e_2=e_4=0 \\
C_3 :&\quad e_4=  y^2-e_1^2  s^4a_{6,4}z^2=0\\
C_4 :& \quad e_3=  y^2-e_1^2 s^4a_{6,4}z^2=0
\end{align}
Their respective multiplicities are $1$, $2$, $3$, $2$, and $1$.

The curve $C_a$ is the generic fiber of the fibral divisor $D_a$ ($a=0,1,2,3,4$). 
The fibral divisors can also be defined as the irreducible components of $\varphi^* S$:
\begin{equation}
\varphi^* S= D_0 + 2 D_1+ 3 D_2 + 2 D_3 + D_4.
\end{equation}

Furthermore, we have the following relations:
	\begin{align}
		V(s)=D_0, \quad V(e_1) = D_1,\quad  V(e_2) = D_1+D_2,\quad V(e_4)= D_2+D_3,\quad  V(e_3) =D_4.
	\end{align}
Denoting by $E_i$ the exceptional divisor of the $i$th blowup and by $S$ the class of $S$,  we identify the classes  of the  five fibral divisors to be
	\begin{align}\label{eqn:divisorClass}
	\begin{aligned}
		&D_0=S-E_1,\quad D_{1} = E_1-E_2,\quad D_{2} =2 E_2 - E_1-E_3-E_4,\\
		  &D_3 = 2E_4-2E_2+E_1+E_3  ,\quad D_4 = E_3-E_4.
		  \end{aligned}
	\end{align}

The curve $C_0$ is the normalization of a cuspidal curve. 
The curves $C_1$ and $C_2$ are smooth rational curves. 
The curves $C_3$ and $C_4$ are not geometrically irreducible. 
After a field extension that includes the square root of $a_{6,4}$, they split into two smooth rational curves. 
Hence, $D_0$, $D_1$, and $D_2$ are $\mathbb{P}^1$-bundles while 
$D_3$ and $D_4$ are double coverings of $\mathbb{P}^1$-bundles. Geometrically, when $D_3$ and $D_4$  are seen as  families of curves over $S$, $D_3$ and $D_4$ are  families of pair of lines
.

In the next subsection, we determine what these $\mathbb{P}^1$-bundles are  up to an isomorphism.

	\subsection{Fibral divisors}

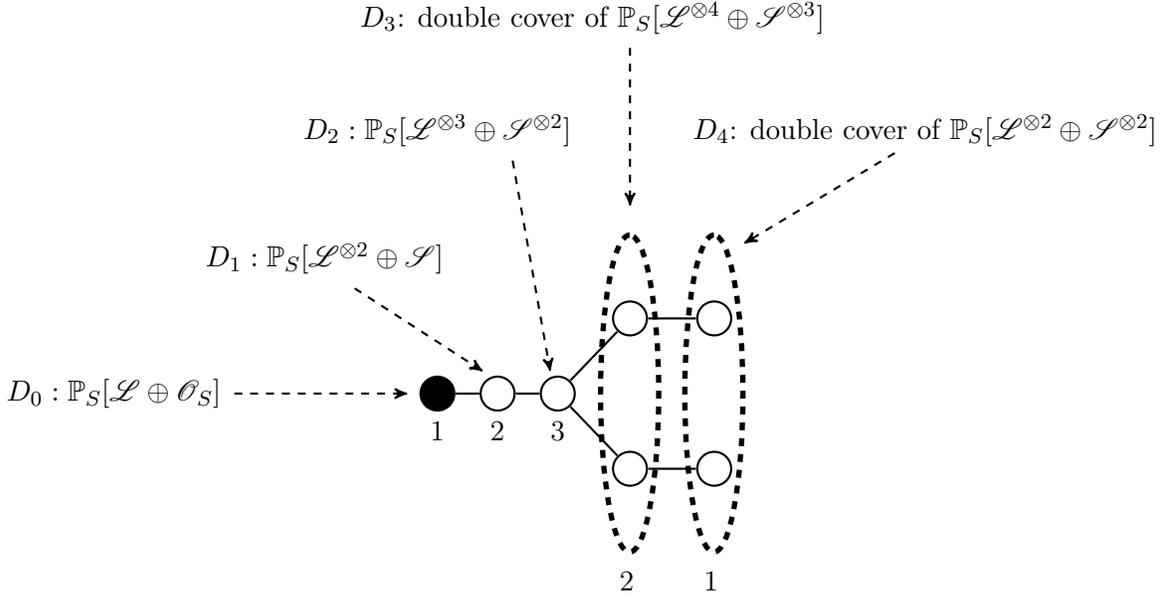
\begin{figure}[htb]
\begin{center}
\begin{tikzpicture}
				\node[draw,circle,thick,scale=1.25,fill=black,label=below:{1}] (0) at (0,0){};
				\node[draw,circle,thick,scale=1.25,label=below:{2}] (1) at (.8,0){};
				\node[draw,circle,thick,scale=1.25,label=below:{3}] (2) at (.8*2,0){};
				\node[draw,circle,thick,scale=1.25] (3) at (.8*3.2,-1){};
				\node[draw,circle,thick,scale=1.25] (4) at (.8*4.6,-1){};
								\node[draw,circle,thick,scale=1.25] (5) at (.8*3.2,1){};
																\node[draw,circle,thick,scale=1.25] (6) at (.8*4.6,1){};
			        \node at (.8*3.15,0)[draw,dashed, line width=2pt, ellipse, minimum width=120pt, minimum height=22pt,rotate=90,yshift=-1pt]{};
				\node at (.8*4.55,0)[draw,dashed, line width=2pt, ellipse, minimum width=120pt, minimum height=22pt,rotate=90,yshift=-1pt]{};
									\node   at (.8*3.15,-2.5) { 2};\node   at (.8*4.55,-2.5) { 1};

				\draw[thick] (0)--(1)--(2)--(3)--(4);
				\draw[thick] (2)--(5)--(6);
								\node (D0) at (-4.3,0) {$D_0:\mathbb{P}_S[\mathscr{L}\oplus\mathscr{O}_S]$};
				\node (D1) at (-1.5,1.8) {$D_1:\mathbb{P}_S[\mathscr{L}^{\otimes 2}\oplus\mathscr{S}]$};
					\node (D2) at (0,3.5) {$D_2:\mathbb{P}_S[\mathscr{L}^{\otimes 3}\oplus\mathscr{S}^{\otimes 2}]$};
						\node (D3) at ($(3)+(-.5,6)$) {$D_3$: double cover of $ \mathbb{P}_S[\mathscr{L}^{\otimes 4}\oplus\mathscr{S}^{\otimes 3}]$};
							\node (D4) at ($(4)+(2.8,4.5)$){$D_4$: double cover of $\mathbb{P}_S[\mathscr{L}^{\otimes 2}\oplus\mathscr{S}^{\otimes 2}]$};
							\draw[->,>=stealth',thick=1mm,dashed]  ($(D0)+(1.6,0)$)--($(0)-(0.4,0)$);
													\draw[->,>=stealth',thick=1mm,dashed]  ($(D1)+(0.4,-.4)$)--($(1)+(-.2,.3)$);
													\draw[->,>=stealth',thick=1mm,dashed]  ($(D2)+(1,-.4)$)--($(2)+(-.1,.3)$);
				\draw[->,>=stealth',thick=1mm,dashed]  ($(D3)+(.5,-.4)$)--($(3)+(0,3.5)$);
				\draw[->,>=stealth',thick=1mm,dashed]  ($(D4)+(-.4,-.3)$)--($(4)+(.4,3)$);
								
	\end{tikzpicture}
\end{center}
		\caption{  
		Fibral divisors of an F$_4$-model as schemes over $S$. See Theorem \ref{Thm:fibralGeom}.
		The fibral divisors $D_0$, $D_1$, and $D_2$ are $\mathbb{P}^1$-bundles over $S$; each of $D_3$ and $D_4$ is a double cover of $S$ branched at $V(s,a_{4,6})$. 
		The generic fiber of $f_i:D_i\to S$  (with $i=3$ or $4$) is not connected and consists of two non-intersecting rational curves. 
		The Stein factorization gives a morphism $f'_i:D_i\to S'$ with connected fibers and a finite morphism  $\pi: S'\to S$ that is a double cover branched at $V(s,a_{4,6})$. 
		The morphism $f'_3:D_3\to S'$ is the $\mathbb{P}^1$-bundle $\mathbb{P}^1_{S'}[\pi^*(\mathscr{L}^{\otimes 2}\oplus\mathscr{S}^{\otimes 3})]\to S'$; the morphism 
		$f'_4:D_4\to S'$ is the $\mathbb{P}^1$-bundle $\mathbb{P}^1_{S'}[\pi^*(\mathscr{L}^{\otimes 2}\oplus\mathscr{S}^{\otimes 2})]\to S'.$
\label{Fig:FibralDiv}				}
		\end{figure}

	In this section, we study the geometry of the fibral divisors. We recall that for a  $\mathbb{P}^1$-bundle, all fibers are  smooth projective curves with no multiplicities. A conic bundle has a discriminant locus, over which the fiber is reducible  when it is composed of two rational curves meeting transversally or is a double line.

In the case of an F$_4$-model, the fibral divisors $D_0$, $D_1$, and $D_2$ are $\mathbb{P}^1$-bundles while $D_3$ and $D_4$ are double covers of $\mathbb{P}^1$-bundles. 
The generic fiber of $D_3$ and $D_4$ is geometrically composed of two non-intersecting rational curves. 
 $D_3$ and $D_4$ have $V(a_{6,4})$ as a  discriminant locus. 
Over the discriminant locus of these double covers, the fiber is composed of a double rational curve.

 We can also simply describe $D_3$ and $D_4$ as flat double coverings of $\mathbb{P}^1$-bundles over $S$  or as geometrically reducible conic bundles over $S$. 
\begin{thm}\label{Thm:fibralGeom}
The fibral divisors $D_0$, $D_1$, and $D_2$ are $\mathbb{P}^1$-bundles. 
$D_3$ and $D_4$ are double covers of $\mathbb{P}^1$-bundles branched at $V(a_{6,4})$.
The corresponding projective bundles are\footnote{We do not write explicitly the obvious pullback of line bundles. For example, if $\sigma: S\hookrightarrow B$ is the embedding of $S$ in $B$ and $\mathscr{L}$ is a line bundle on $B$, we abuse notation by writing
$ \mathbb{P}_S[\mathscr{L}\oplus \mathscr{O}_S]$ for  $\mathbb{P}_S[\sigma^* \mathscr{L}\oplus \mathscr{O}_S]$.}(see Figure \ref{Fig:FibralDiv})  

  $$
\begin{aligned}
\bullet \  D_0 & \quad\text{is isomorphic to } \quad \mathbb{P}_S[\mathscr{L}\oplus \mathscr{O}_S]\\
\bullet \  D_1  &\quad\text{is isomorphic to } \quad \mathbb{P}_S[\mathscr{L}^{\otimes 2}\oplus \mathscr{S}] \\
\bullet \  D_2  & \quad\text{is isomorphic to } \quad \mathbb{P}_S[\mathscr{L}^{\otimes 3}\oplus \mathscr{S}^{\otimes 2}] \\
\bullet \  {D}_3&\quad\text{is isomorphic to } \   \    \text{ a double covering of }  \quad  \mathbb{P}_S[\mathscr{L}^{\otimes 4}\oplus \mathscr{S}^{\otimes 3}]  \quad \text{ramified in $V(a_{6,4})\cap S$}\\
\bullet \  {D}_4 & \quad \text{is isomorphic to } \  \   \text{ a double covering of  }  \quad  \mathbb{P}_S[\mathscr{L}^{\otimes 2}\oplus \mathscr{S}^{\otimes 2}]\quad \text{ramified in $V(a_{6,4})\cap S$}
\end{aligned}
$$
where $\mathscr{L}$ is the fundamental line bundle of the Weierstrass model and $S$ is the zero scheme of a regular section of the line bundle $\mathscr{S}=\mathscr{O}_B(S)$. 
\end{thm}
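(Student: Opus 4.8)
\emph{Strategy.} The plan is to read each fibral divisor off directly from the explicit crepant resolution of Theorem~\ref{Thm:blowups}. Inside $Y\subset X_4$ we have $D_0=V(s)$, $D_1=V(e_1)$, $D_4=V(e_3)$, while $D_2=V(e_2)\cap V(e_4)$ and $D_3$ is the residual component of $V(e_4)$; moreover $X_4$ is an explicit tower of projective bundles over $B$, with the relative homogeneous coordinates and scalings recorded right after Theorem~\ref{Thm:blowups} and the line-bundle class of every variable recorded in the accompanying table. The first step is to use the emptiness relations established in the proof of Theorem~\ref{Thm:blowups} — that $(x,y,s)$, $(y,e_1)$, $(s,e_3)$, $(s,e_4)$, $(x,e_2)$, $(x,e_4)$ and $(e_2,e_3)$ all cut out the empty set in $X_4$ — to determine, for each $a$, which of the variables $x,y,z,s,e_1,\dots,e_4$ are nowhere vanishing on $D_a$ (for instance $e_3,e_4$ are units on $D_0$; $s,e_1,e_2$ are units on $D_4$; $s,x,e_1$ are units on $D_3$). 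Since a nowhere-vanishing section trivializes the line bundle it lies in, rescaling these units to $1$ collapses most of the projective-bundle tower and leaves, over $S$, a completely explicit small model of $D_a$, on which the only remaining constraint is the restriction of the Weierstrass equation of $Y$.

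\emph{The $\mathbb{P}^1$-bundles $D_0,D_1,D_2$.} After the reduction just described, one checks that for $a=0,1,2$ the restricted Weierstrass equation carries no residual information — it is either automatically satisfied, or it merely eliminates the superfluous relative coordinates in favour of one surviving pair of homogeneous coordinates — so that $D_a\to S$ is a Zariski-locally trivial $\mathbb{P}^1$-bundle. To name the projectivized rank-two bundle it then suffices to record the line-bundle class attached to each of the two surviving coordinates, read off from the divisor-class table (suppressing, as elsewhere, the pullback along $S\hookrightarrow B$); this produces $\mathbb{P}_S[\mathscr{L}\oplus\mathscr{O}_S]$, $\mathbb{P}_S[\mathscr{L}^{\otimes2}\oplus\mathscr{S}]$ and $\mathbb{P}_S[\mathscr{L}^{\otimes3}\oplus\mathscr{S}^{\otimes2}]$ respectively. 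As a consistency check one recognizes $D_0$ as the proper transform, hence the normalization, of the cuspidal Weierstrass fibration $y^2z=x^3$ over $S$, with normalization map $[u:v]\mapsto[x:y:z]=[u^2v:u^3:v^3]$ from $\mathbb{P}_S[\mathscr{L}\oplus\mathscr{O}_S]$. Smoothness of $D_0,D_1,D_2$ follows either from the explicit charts or from the Jacobian check in the proof of Theorem~\ref{Thm:blowups} together with transversality of $V(\cdot)$ with $Y$ along these loci.

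\emph{The double covers $D_3,D_4$.} Running the same reduction on $D_3$ and $D_4$, the restriction of the equation of $Y$ does retain information: the mixed term proportional to $a_{4,3+\alpha}$ is killed by setting $e_4=0$ (for $D_3$) or $e_3=0$ (for $D_4$), and what survives is the relative quadric $y^2=e_1^2s^4a_{6,4}z^2$ — exactly the equation of the generic fibres $C_3$ and $C_4$ displayed above — inside a $\mathbb{P}^1$-bundle over $S$ coordinatized by one remaining pair of homogeneous coordinates. Since $e_1$ and $s$ are units there, this quadric is $\propto a_{6,4}$ times a perfect square; it therefore presents $D_3$ (resp.\ $D_4$) as a flat double cover of that $\mathbb{P}^1$-bundle, branched precisely over $V(a_{6,4})\cap S$, which is smooth because $a_{6,4}$ is generic — whence $D_3,D_4$ are smooth. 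Matching line-bundle weights as before names the base $\mathbb{P}^1$-bundles $\mathbb{P}_S[\mathscr{L}^{\otimes4}\oplus\mathscr{S}^{\otimes3}]$ and $\mathbb{P}_S[\mathscr{L}^{\otimes2}\oplus\mathscr{S}^{\otimes2}]$. Finally, because $a_{6,4}$ is \emph{not} a perfect square modulo $s$, the quadric $y^2=e_1^2s^4a_{6,4}z^2$ is irreducible over the residue field $k(S)$ but splits into two disjoint lines over $\overline{k(S)}$, so $f\colon D_3\to S$ (and likewise $D_4\to S$) is proper with geometrically disconnected fibres. In the Stein factorization $D_3\xrightarrow{f'}S'\xrightarrow{\pi}S$ the morphism $\pi=\mathbf{Spec}_S f_*\mathscr{O}_{D_3}\to S$ is then finite of degree two — the double cover $\zeta^2=a_{6,4}$ inside the relevant line bundle, branched at $V(a_{6,4})\cap S$ — while $f'$ has connected smooth $\mathbb{P}^1$ fibres together with a section (e.g.\ the restriction of an adjacent fibral divisor), so $f'$ is the $\mathbb{P}^1$-bundle over $S'$ displayed in Figure~\ref{Fig:FibralDiv}, namely $\mathbb{P}_{S'}[\pi^*(\mathscr{L}^{\otimes2}\oplus\mathscr{S}^{\otimes3})]$ for $D_3$ and $\mathbb{P}_{S'}[\pi^*(\mathscr{L}^{\otimes2}\oplus\mathscr{S}^{\otimes2})]$ for $D_4$.

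\emph{Main obstacle.} The conceptual skeleton — $\mathbb{P}^1$-bundle versus branched double cover, the latter forced for $D_3,D_4$ by the arithmetic condition on $a_{6,4}$ — is straightforward; the work is bookkeeping. One must track, chart by chart through the four blow-ups and the scalings $\ell_1,\dots,\ell_4$, exactly which variables are units on each $D_a$ and perform the coordinate changes that make the bundle (or double-cover) structure manifest, and one must carry the Chern classes of $E_1,\dots,E_4$ through the sequence in order to pin down the exact twists $\mathscr{L}^{\otimes k}\oplus\mathscr{S}^{\otimes m}$ rather than merely the isomorphism type of the fibration. I expect the genuinely delicate points to be, for $D_3$ and $D_4$: checking that the discriminant of the surviving relative quadric is $a_{6,4}$ up to squares (so that the branch divisor is exactly $V(a_{6,4})\cap S$, with the $a_{4,3+\alpha}$-term disappearing as claimed), verifying connectedness of the cover — which is precisely where ``$a_{6,4}$ not a perfect square modulo $s$'' is indispensable and is what separates the F$_4$-model from the E$_6$-model — and producing the section of $f'\colon D_i\to S'$ needed to identify the rank-two bundle on $S'$.
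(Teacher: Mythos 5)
Your proposal is correct and follows essentially the same route as the paper's proof: exploit the explicit tower of projective bundles from the four blowups, use the emptiness relations to normalize the variables that are units on each $D_a$ via the scalings $\ell_1,\dots,\ell_4$, read off the surviving pair of relative homogeneous coordinates and their line-bundle classes for $D_0,D_1,D_2$, and for $D_3,D_4$ observe that the residual equation is the quadric $y^2=e_1^2s^4a_{6,4}z^2$, exhibiting a double cover branched at $V(a_{6,4})\cap S$. The extra material you include (smoothness, the Stein factorization, the identification of the $\mathbb{P}^1$-bundles over $S'$) is consistent with what the paper records in Figure \ref{Fig:FibralDiv} and \S\ref{Sec:Topology} rather than in the proof itself.
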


\begin{proof}
The strategy for this proof is as follows.   We use the knowledge of the explicit sequence of blowups to parametrize each curve. 
Since each blowup has a center that is a complete intersection with normal crossing, each successive blowup gives a  projective bundles. The successive blowups give a tower of projective bundles over projective bundles. 
We keep track of the projective coordinates of each projective bundle relative to its base. 
 An important part of the proof is to properly normalize the relative projective coordinates when working in a given patch, as they are twisted with respect to previous blowups. 
 We show that $D_0$, $D_1$, and $D_2$ are $\mathbb{P}^1$-bundles over $S$ while $D_3$ and $D_4$ are  conic bundles defined by a double cover of a $\mathbb{P}^1$-bundle over $S$.

The fiber  $C_0$ can be studied after the first blowup since the remaining  blowups are away from $C_0$. We can work in the patch $x\neq 0$. We use the defining equation of $C_0$ to solve for $e_1$ since $x$ is a unit. 
We then observe that $C_0$ has the  parametrization 
$$
C_0\quad [t^2:t^3:1] [1:t:0], \quad t=y/x. 
$$
This is the usual normalization of a cuspidal cubic curve. It follows that $C_0$ is a rational curve parametrized by $t$. Since $t=y/x$ is a section of $\mathscr{L}$, it follows that the fibral divisor  $D_0$ is isomorphic to the  $\mathbb{P}^1$-bundle $\mathbb{P}_S[\mathscr{L}\oplus \mathscr{O}_S]$ over $S$. 

$D_1$ is the Cartier divisor $V(e_1)$ in $Y$, which corresponds to the complete intersection $V(e_1,e_2)$ in $X_4$. The generic fiber of $D_1$ over $S$ is the rational curve $C_1$, which is parametrized as
\begin{align}
C_1\quad  [0:0 :  z] 		[\ell_1 ( e_3 e_4 x) :0 : \ell_1 s] 
		 [\ell_1 \ell_2 y:0]
		[\ell_1\ell_3 x:0]
		[0:\ell_4\ell_3^{-1} e_3].\nonumber
	\end{align}
We can use $\ell_4$, $\ell_3$, and $\ell_2$ to fix the scalings. But  $C_1$ is parametrized by $[x:s]$ and $D_1$ is isomorphic to the $\mathbb{P}^1$-bundle 
$\mathbb{P}_S[\mathscr{L}^{\otimes 2}\oplus \mathscr{S}]$ over $S$. 

 $C_2$ is defined as the generic fiber with $e_2=e_4=0$. This gives 
\begin{align} C_2\quad
		 [0:0 :  z]  [0:0 : \ell_1 s] 
		 [\ell_1 \ell_2 y:\ell_1^{-1}\ell_2 e_1]
		[\ell_1\ell_3 x:0]
		[0 :\ell_4 \ell_3^{-1}e_3].\nonumber
	\end{align}
	Fixing the scaling as $\ell_1=s^{-1}$,\quad $\ell_3=s x^{-1}$, we see that $C_2$ is  a rational curve parametrized by $[y:s^2]$ and $D_2$ is isomorphic to $\mathbb{P}_S[\mathscr{L}^{\otimes 3}\oplus \mathscr{S}^{\otimes 2}]$.

 For $C_3$ take $\ell_1=s^{-1}$, $\ell_3=sx^{-1}$,    $\ell_2=s^{-1}$, $\ell_3=s x^{-1}$, 
	$$
	\begin{aligned}
	C_3	\quad&[0 : 0:  1]		\quad[0 : 0: 1] 		\quad
		 [y s^{-2}:1]		\quad
				\quad [1:0]  		\quad	[ \ell_4 s^2 x^{-1}:\ell_4 x s^{-1}].\nonumber
	\end{aligned}
	$$
The double cover is $\left(\frac{y}{s^{2}}\right)^2=a_{6,4}$. This is clearly a double cover of $D_3^+$, where $D_3^+$ is  
	$\mathbb{P}^1$-bundle over $S$, whose fiber is  parametrized by $[ s^2 x^{-1}:x s^{-1}]$. Such a $\mathbb{P}^1$-bundle is isomorphic to $\mathbb{P}_S[\mathscr{L}^{\otimes 4}\oplus \mathscr{S}^{\otimes 3}]$.
	\item For $C_4$, take $\ell_1=s^{-1}$, $\ell_2=s^{-1}$, $\ell_4=s\ell_3^{-1}$, 
	\begin{align}
	C_4 \quad	&[0 : 0:  1]		\quad[0 : 0: 1] 		\quad
		 [y s^{-2}:1]		\quad
				\quad	[ s^{-1} x: s] \quad [1:0]\nonumber
	\end{align}
	The double cover is again $\left(\frac{y}{s^{2}}\right)^2=a_{6,4}$.
\end{proof} 

\begin{thm}\label{Thm:degeneration}
The crepant resolution defined in Theorem  \ref{Thm:blowups}
has the following properties: 
\begin{enumerate}[label=(\roman*)]  \itemsep2pt
\item The resolved variety is a flat elliptic fibration over the base $B$. 
\item The fiber over the generic point of $S$  has dual graph $\widetilde{\text{F}}_4^t$ and the geometric generic fiber is of Kodaira type IV$^*$. 
\item The fiber degenerates over $V(s,a_{6,4})$ as 
$$
V(s,a_{6,4}) \quad 
	\begin{cases}
	C_3 \longrightarrow 2 C_3',\\
	C_4  \longrightarrow 2 C_4'.
	\end{cases}
	$$
 where $C_3$ and $C_4$ are generic curves defined over $S$, and $C_3'$ and $C_4'$ are generic curves over $V(s, a_{6,4})$.
The generic fiber over $V(s, a_{6,4})$ is a non-Kodaira fiber composed of five geometrically irreducible  rational curves. The reduced curves  meet  transversally  with multiplicities $1-2-3-4-2$.

	\end{enumerate}
\end{thm}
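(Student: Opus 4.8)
Each of the three parts is obtained by pushing the explicit four-blowup description of Theorem~\ref{Thm:blowups} to the relevant stratum of the base, so the proof is essentially a bookkeeping exercise in the charts of $X_4$ built on the parametrizations already recorded in Section~\ref{Sec:Fiber} and in the proof of Theorem~\ref{Thm:fibralGeom}.

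\textbf{Flatness (i).} The plan is to invoke ``miracle flatness'': a morphism from a Cohen--Macaulay scheme to a regular scheme is flat if and only if all its scheme-theoretic fibers are equidimensional of the expected dimension. Here $Y$ is a hypersurface in the smooth ambient variety $X_4$, hence Cohen--Macaulay, and $B$ is smooth, so it suffices to check that every fiber of $\varphi$ has dimension $\dim Y-\dim B=1$. Away from the discriminant the fibers are smooth genus-one curves, and over the generic point of each irreducible component of $\Delta$ they are the (twisted) Kodaira fibers already described, all of dimension one. The only place a jump could a priori occur is the collision locus $V(s,a_{6,4})$ and the strata it contains; there I would argue directly in the charts of $X_4$, using the list of monomial loci that the four blowups render empty --- namely $(x,y,s)$, $(y,e_1)$, $(s,e_3)$, $(s,e_4)$, $(x,e_2)$, $(x,e_4)$, $(e_2,e_3)$, as recorded at the end of the proof of Theorem~\ref{Thm:blowups}. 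These exclusions force the restriction of $Y$ to each fiber of $X_4\to B$ to be a genuine hypersurface inside a one-dimensional ambient fiber, so no fiber of $\varphi$ acquires a surface component. Hence $\varphi$ is flat.

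\textbf{Generic fiber over $S$ (ii).} Here I would assemble facts already in hand. From $\varphi^{*}S=D_0+2D_1+3D_2+2D_3+D_4$ and the incidence relations $V(e_1)=D_1$, $V(e_2)=D_1+D_2$, $V(e_4)=D_2+D_3$, $V(e_3)=D_4$, the five components form the linear chain $C_0-C_1-C_2-C_3-C_4$; the intersection numbers $-D_a\cdot C_b$ are computed from the divisor classes in \eqref{eqn:divisorClass} together with intersection theory on $X_4$, and the resulting matrix is the invariant form attached to $\widetilde{\mathrm F}_4^{t}$, with $C_3,C_4$ the short simple roots. Since $a_{6,4}$ is not a square modulo $s$ (the defining hypothesis of an F$_4$-model), the curves $C_3$ and $C_4$ are not geometrically irreducible; after adjoining $\sqrt{a_{6,4}}$ each splits into two smooth rational curves and the configuration becomes the seven-component $\widetilde{\mathrm E}_6$, so the geometric generic fiber is of Kodaira type IV$^{*}$ --- which is also immediate from Step~8 of Tate's algorithm for the Weierstrass model, a property preserved by a crepant resolution over the generic point of $S$.

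\textbf{Degeneration over $V(s,a_{6,4})$ (iii), and the main obstacle.} The crucial computation is to specialize the parametrizations from the proof of Theorem~\ref{Thm:fibralGeom} to the locus $a_{6,4}=0$ on $S$. There $C_3$ and $C_4$ are the curves $\big(y/s^{2}\big)^{2}=a_{6,4}$ inside the $\mathbb P^1$-bundles $D_3^{+}$ and $D_4^{+}$; setting $a_{6,4}=0$ turns each into $\big(y/s^{2}\big)^{2}=0$, i.e.\ the reduced, now geometrically irreducible, rational curve $C_3'$ (resp.\ $C_4'$) taken with multiplicity two, so $C_3\to 2C_3'$ and $C_4\to 2C_4'$. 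The curves $C_0,C_1,C_2$ do not involve $a_{6,4}$ and are unaffected, hence over the generic point of $V(s,a_{6,4})$ the fiber is supported on the chain $C_0-C_1-C_2-C_3'-C_4'$ with multiplicities obtained from $D_0+2D_1+3D_2+2D_3+D_4$ under $C_3=2C_3'$, $C_4=2C_4'$, namely $1-2-3-4-2$. One then checks in the charts of $X_4$ that these five reduced curves are smooth rational and meet pairwise transversally in single points along the chain, with no further fibral curves appearing; since the multiplicity vector $(1,2,3,4,2)$ does not occur in Kodaira's classification of singular fibers of elliptic surfaces, this fiber is non-Kodaira (an incomplete $\mathrm{III}^{*}$ or $\mathrm{II}^{*}$ according to $v_S(a_4)$, as explained in the introduction). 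I expect the genuine difficulties to be exactly two: in (i), ruling out a two-dimensional component over a \emph{deep} stratum inside $V(s,a_{6,4})$, which requires a careful chart-by-chart analysis resting on the emptiness statements of Theorem~\ref{Thm:blowups}; and in (iii), the transversality and incidence bookkeeping --- one must correctly normalize the relative projective coordinates of the successive blowup charts (the $\ell_i$ rescalings), since that is precisely where the degenerate fiber could be misidentified.
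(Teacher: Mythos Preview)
Your approach is correct and aligns with the paper's own proof, which likewise reads off the dual graph from the intersection matrix $\deg(D_a\cdot C_b)$ and obtains the degeneration $C_3\to 2C_3'$, $C_4\to 2C_4'$ by specializing the double-cover equations $(y/s^2)^2=a_{6,4}$ to $a_{6,4}=0$. The one noteworthy difference is that the paper's proof does not actually address part~(i) at all --- flatness is asserted but not argued --- so your miracle-flatness strategy (Cohen--Macaulay hypersurface in smooth $X_4$ over regular $B$, plus the chart-by-chart check that no fiber jumps to dimension two using the emptiness of the ideals $(x,y,s),(y,e_1),\dots,(e_2,e_3)$) is a genuine addition rather than a paraphrase; it is the standard and correct way to fill this gap.
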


\begin{proof}
The special fiber is the fiber over the generic point of $S$. Note that $C_2$ and $C_3$ intersect at a divisor of degree two, composed of two points that are non-split. Hence, the dual graph of this fiber is the twisted affine Dynkin diagram of type $\widetilde{\text{F}}_4^t$. All the curves are geometrically irreducible with the exception of $C_3$ and $C_4$,  which are the double covers of a geometrically irreducible rational curve and the branching locus is $a_{6,4}=0$. 
Each of these two curves splits into two geometrically  irreducible curves in a field extension that includes a square root of  $a_{6,4}$. They  degenerate into a double  rational curve over $a_{6,4}=0$. Over the branching locus, the singular fiber is a chain $1-2-3-4-2$. 
The geometric generic fiber has a dual graph that is a $\widetilde{\text{E}}_6$ affine Dynkin diagram. 
The fibers $C_a$ are fibers of fibral divisors $D_a$. 
 The matrix of intersection numbers $\mathrm{deg}  (D_a\cdot C_a)$ is the opposite of the invariant form of the twisted affine Dynkin diagram of type $\tilde{\text{F}}_4^t$, normalized in such a way that the short  roots have length square $2$:
$$
	\textnormal{deg} (D_a \cdot C_b)= \left({
	\begin{array}{c|cccc}
	-2 & 1 & 0 & 0 & 0 \\
	\hline
	1 & -2 &  1 & 0 & 0 \\
	0 & 1 & -2 & 2 & 0 \\
	0 & 0 & 2 & -4 & 2 \\
	0 & 0 &  0 &2 & -4 
	\end{array}}\right)
$$	
This matrix has a kernel generated by the vector $(1,2,3,2,1)$. 
The entries of this vector give the multiplicities of the curve $C_a$, or equivalently, of the fibral divisors $D_a$. 

\end{proof}

\begin{exmp}
If  $B$ is the total space of the line bundle $\mathscr{O}_{\mathbb{P}^1}(-n)$ with $n\in \mathbb{Z}_{\geq 0}$, the Picard group of $B$ is generated by one element, which we call $\mathscr{O}(1)$.  
 In particular, the compact curve $\mathbb{P}^1$ is a section of $\mathscr{O}(-n)$. 
A local Calabi-Yau threefold can be defined by a Weierstrass model with 
 $\mathscr{L}=\mathscr{O}(2-n)$. 
 Consider the case of the F$_4$-model, defined with $S$ a regular section of $\mathscr{S}=\mathscr{O}(-n)$.  
 This requires that $1\leq n\leq 5$.  Denoting the Hirzebruch surface of degree $d$ by $\mathbb{F}_d$, we have 
$D_0=\mathbb{F}_{n-2}$, $D_1=\mathbb{F}_{n-4}$, $D_2=\mathbb{F}_{n-6}$, $D_3$ a double cover of $\mathbb{F}_{n-8}$, and $D_4$  a double cover of $\mathbb{F}_{4}$.
In particular for $n=5$, the divisors are  $D_0=\mathbb{F}_{3}$, $D_1=\mathbb{F}_{1}$, $D_2=\mathbb{F}_{1}$, $D_3$  a double cover of $\mathbb{F}_{3}$, and $D_4$  a double cover of $\mathbb{F}_{4}$. 
\end{exmp}

		\subsection{ Representation associated to the elliptic fibration}\label{Sec:Rep}
\begin{figure}[htb]
\begin{center}
\scalebox{1}{\begin{tikzpicture}
				\node[draw,circle,thick,scale=1.25,fill=black,label=below:{1}] (0) at (0,0){};
				\node[draw,circle,thick,scale=1.25,label=below:{2}] (1) at (.8,0){};
				\node[draw,circle,thick,scale=1.25,label=below:{3}] (2) at (.8*2,0){};
				\node[draw,circle,thick,scale=1.25] (3) at (.8*3.2,-1){};
				\node[draw,circle,thick,scale=1.25] (4) at (.8*4.6,-1){};
								\node[draw,circle,thick,scale=1.25] (5) at (.8*3.2,1){};
																\node[draw,circle,thick,scale=1.25] (6) at (.8*4.6,1){};
			        \node at (.8*3.15,0)[draw,dashed, line width=2pt, ellipse, minimum width=120pt, minimum height=22pt,rotate=90,yshift=-1pt]{};
				\node at (.8*4.55,0)[draw,dashed, line width=2pt, ellipse, minimum width=120pt, minimum height=22pt,rotate=90,yshift=-1pt]{};
									\node   at (.8*3.15,-2.5) { 2};\node   at (.8*4.55,-2.5) { 1};

				\draw[thick] (0)--(1)--(2)--(3)--(4);
				\draw[thick] (2)--(5)--(6);
				\draw[->,>=stealth',thick=4mm]  (5,0) -- (7,0);
				\node[draw,circle,thick,scale=1.25,fill=black,label=below:{1}] (0a) at (8,0){};
				\node[draw,circle,thick,scale=1.25,label=below:{2}] (1a) at (8+.8,0){};
				\node[draw,circle,thick,scale=1.25,label=below:{3}] (2a) at (8+.8*2,0){};
				\node[draw,circle,thick,scale=1.25,label=below:{4}] (3a) at (8+.8*3,0){};
				\node[draw,circle,thick,scale=1.25,label=below:{2}] (4a) at (8+.8*4,0){};
				\draw[thick] (0a)--(1a)--(2a)--(3a)--(4a);
			\node   at (6,.5) { \scalebox{1.25}{$\displaystyle a_{6,4}=0$}};
				\node (w1) at (9+.8*4,2) { \scalebox{1}{$\boxed{ 0\ 0\  1 \  -2}$}};
				\node (w2)  at (7+.8*4,2) { \scalebox{1}{$\boxed{0\ 1\  -2 \  1 }$}};
			\draw[->,>=stealth',thick=1mm,dashed]  (w1)--($(4a)+(0.2,.5)$);
								\draw[->,>=stealth',thick=1mmm,dashed]  (w2)--($(3a)+(-0,.5)$);

	\end{tikzpicture}}
\end{center}
\caption{Degeneration of the F$_4$ fiber at the  non-transverse collision  $\text{IV}^{*\text{ns}}+\text{I}_1$.
The nodes represent geometrically irreducible curves. The dashed lines identify the irreducible components of the generic fiber that are geometrically irreducible. 
They split inside their interior nodes after  a $\mathbb{Z}/2\mathbb{Z}$ field extension. 
The degeneration produces weights $\boxed{0\ 1\  -2 \  1 }$ and $\boxed{0\ 0\  1 \  -2 }$ that identify $\mathbf{26}$ as the representation associated with the elliptic fibration. 
This fiber can be seen as an incomplete Kodaira fiber of type III$^*$ with its dual graph $\tilde{\text{E}}_7$ if $\alpha=0$ or 
an incomplete Kodaira fiber of type II$^*$ with dual graph $\tilde{\text{E}}_8$ if $\alpha>0$. 
\label{Figure:Degeneration}}
\end{figure}
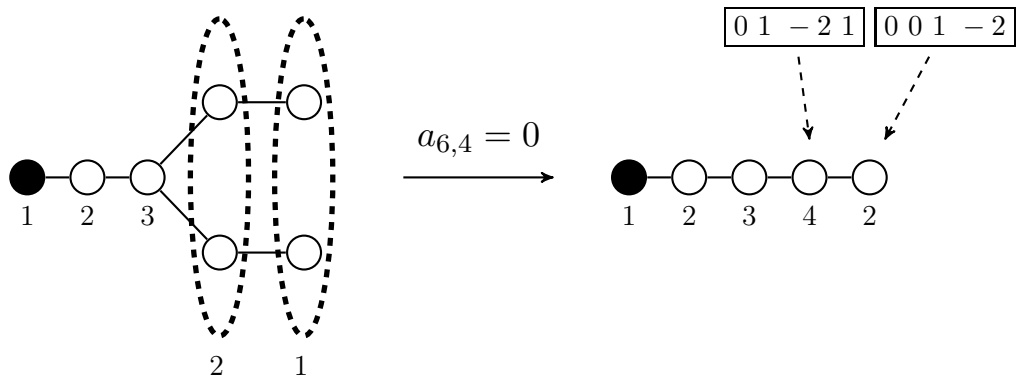

In this subsection, we compute the weights of the vertical curves appearing over codimension-two points. There is only one case to consider. 
The generic fiber over  $V(a_{6,4})\cap S$ is a fiber of type $1-2-3-4-2$ resulting from the following specialization:
	$$
	\begin{cases}
	C_3 \longrightarrow 2 C_3',\\
	C_4  \longrightarrow 2 C_4'.
	\end{cases}
	$$
	\begin{thm}\label{Thm:weights}
	\begin{enumerate}
\item 	The intersection numbers of the generic curves $C_3'$ and $C_4'$ with the fibral divisors D$_a$ for $a=0,1,\ldots, 4$   are 
	$$
\varpi(C_3')=(0,0,1,-2,1), \quad \varpi ( C_4')= (0,0,0,1,-2).
	$$
	\item The representation associated to an F$_4$-model  is  the quasi-minuscule representation $\mathbf{26}$ of F$_4$. 
	\end{enumerate}
	\label{thm:rep26}
	\end{thm}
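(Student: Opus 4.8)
The plan is to prove the two assertions in sequence: part~(1) is a computation in the intersection ring of the resolution $Y$ of Theorem~\ref{Thm:blowups}, and part~(2) then follows from part~(1) by a short representation-theoretic argument. For part~(1) I would avoid re-parametrising the curves from the blowup charts and instead combine Theorem~\ref{Thm:degeneration} with the description of the fibral divisors in Theorem~\ref{Thm:fibralGeom}. Over the generic point of $S$ the pairing $\deg(D_a\cdot C_b)$ is the matrix recorded in Theorem~\ref{Thm:degeneration}; in particular its columns indexed by $C_3$ and $C_4$ are $(0,0,2,-4,2)$ and $(0,0,0,2,-4)$.

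The key input is then the specialization $C_3\to 2C_3'$, $C_4\to 2C_4'$ of Theorem~\ref{Thm:degeneration}(iii), which I would make quantitative using the Stein factorization. By Theorem~\ref{Thm:fibralGeom} the divisors $D_3$ and $D_4$ are double covers of $\mathbb{P}^1$-bundles over $S$ branched precisely along $V(s,a_{6,4})$; factoring $D_3\to S$ as $D_3\to S'\to S$ with $S'\to S$ the degree-two Stein factorization, $D_3\to S'$ is a $\mathbb{P}^1$-bundle. Hence the two components $C_3^{+},C_3^{-}$ of the fibre of $D_3\to S$ over a general closed point of $S$ and the reduced fibre $C_3'$ over a branch point are all fibres of this $\mathbb{P}^1$-bundle, so they are rationally equivalent in $D_3$; since the fibre over a general point is $C_3^{+}\sqcup C_3^{-}$ while over a branch point it is $2C_3'$, flatness of $D_3\to S$ gives $[C_3]=[C_3^{+}]+[C_3^{-}]=2[C_3']$ in $D_3$, hence in $Y$, and likewise $[C_4]=2[C_4']$. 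Therefore $D_a\cdot C_3'=\tfrac12\,D_a\cdot C_3$ and $D_a\cdot C_4'=\tfrac12\,D_a\cdot C_4$, so halving the two columns above produces $(0,0,1,-2,1)$ and $(0,0,0,1,-2)$, which are the values in the statement. (As an independent check one can instead read $C_3'$ and $C_4'$ off the chart descriptions in the proof of Theorem~\ref{Thm:fibralGeom}, set $a_{6,4}$ equal to a square, and intersect with the divisor classes~\eqref{eqn:divisorClass}; the two routes must agree.)

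For part~(2) I would drop the affine node $D_0$, so that the two weights become the F$_4$-weights $\boxed{0\ 1\ -2\ 1}$ and $\boxed{0\ 0\ 1\ -2}$ in the basis of fundamental weights (Figure~\ref{Figure:Degeneration}). In the standard orthonormal realisation of the F$_4$ root system a one-line computation identifies these two weights with $-e_4$ and $\tfrac12(-e_1+e_2+e_3+e_4)$ (up to an overall sign, which is harmless since $\mathbf{26}$ is self-dual), i.e.\ with short roots of F$_4$; in particular each lies in the single Weyl orbit of the $24$ short roots, which is exactly the orbit of the fundamental weight $\omega_4$, the highest weight of $\mathbf{26}$. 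Since $\mathbf{26}$ is quasi-minuscule its weight system is that orbit together with the zero weight, and it is the smallest saturated set (in the sense of Bourbaki) containing any one short root; in particular it is properly contained in the weight system of the adjoint $\mathbf{52}$. Hence the saturated set generated by the weights of the vertical curves over the unique codimension-two locus $V(s,a_{6,4})$ equals the weight system of $\mathbf{26}$, and therefore the representation associated with the F$_4$-model is $\mathbf{26}$. This is consistent with the branching $\text{E}_6=\text{F}_4\oplus\mathbf{26}$ and with the geometric generic fibre over $S$ having dual graph $\widetilde{\text{E}}_6$ (Theorem~\ref{Thm:degeneration}(ii)).

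I expect the main obstacle to be establishing the cycle identity $[C_3]=2[C_3']$ (and its analogue for $C_4$) with the correct multiplicity $2$: the factor $\tfrac12$, and hence the precise tuples, rest entirely on it, and this is exactly where the double-cover/Stein-factorization structure of $D_3$ and $D_4$ from Theorem~\ref{Thm:fibralGeom} and Figure~\ref{Fig:FibralDiv} does the work; once the multiplicities are pinned down the remainder of part~(1) is bookkeeping. A secondary subtlety worth an explicit remark is that every weight produced above is already a root of F$_4$, so the quasi-minuscule/saturation argument in part~(2) is genuinely needed in order to conclude that the associated representation is $\mathbf{26}$ rather than the adjoint.
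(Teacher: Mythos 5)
Your proposal is correct and follows essentially the same route as the paper: the paper likewise obtains the weights by halving the $C_3$ and $C_4$ columns of the intersection matrix of Theorem \ref{Thm:degeneration}, using the specialization $C_3\to 2C_3'$, $C_4\to 2C_4'$ together with linearity of the intersection product, and then identifies the two resulting F$_4$-weights $\boxed{0\ 1\ -2\ 1}$ and $\boxed{0\ 0\ 1\ -2}$ as lying in the single Weyl orbit that constitutes the nonzero weights of $\mathbf{26}$. The one small imprecision is your claim that the fibers of the $\mathbb{P}^1$-bundle $D_3\to S'$ over distinct points are rationally equivalent --- for $S'$ of positive genus they are only algebraically (hence numerically) equivalent, but that, or simply conservation of number in the flat family $D_3\to S$, is all that is needed to justify $D_a\cdot C_3'=\tfrac12\,D_a\cdot C_3$, so the argument stands.
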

		\begin{proof}
	
By the linearity of the intersection product,  
 the geometric weights  $\varphi(C)=(D_a\cdot C)$ of $C=C_3'$ and $C=C_4'$ are half of the geometric weights of $C_3$ and $C_4'$: 
	$$
	\varpi(C_3')=(0,0,1,-2,1), \quad \varpi ( C_4')= (0,0,0,1,-2).
		$$
	Ignoring the weight of $D_0$, we get the following two weights of F$_4$:
	$$
	\boxed{0\  1 \ -2 \ 1}\quad \boxed{0\ 0\  1 \  -2 }. 
	$$
	These two weights are quasi-minuscule and are in the same Weyl orbit, which consists of the non-zero weights of the representation $\mathbf{26}$ of F$_4$. See Theorem \ref{thm:rep26} for the proof. This is a  fundamental  representation corresponding to the fundamental weight $\alpha_4$. 
	This representation is also quasi-minuscule. 
	Since the weight system is invariant  under a change of signs,  the representation is quaternionic  and we can consider half of the representation. 
	Both weights coming from the degeneration of the main fiber are in the same half quaternionic set of weights.
	\end{proof}

		\begin{rem}
An important consequence of   Theorem \ref{thm:rep26} is that the elliptic fibration does not have flop transitions to another smooth elliptic fibration  since all the curves move in families. 
		\end{rem}

The reduced discriminant has two components, namely $S=V(s)$ and $\Delta'=V( a^3_{ 4,3+\alpha} s^{1+3\alpha}+27 a_{6,4}^2)$  intersecting non-transversally at $V(s,a_{6,4})$. 
Their intersection is  exactly the locus over which the fiber IV$^{*\text{ns}}$ degenerates. 
The generic fiber over 
$\Delta'$ is of Kodaira fiber I$_1$. Hence, what we are witnessing is a collision of type $\text{IV}^{*\text{ns}}+\text{I}_1$, leading to an incomplete $\text{III}^*$ or an incomplete  II$^*$
$$
\text{IV}^{*\text{ns}}+\text{I}_1\longrightarrow\quad 1-2-3-4-2\quad (\text{incomplete}\  \   \text{III}^* \quad \text{or}\quad \text{incomplete}\  \    \text{II}^*).
$$
This is clearly not a collision of Miranda models since the fibers have different $j$-invariants and do not intersect transversally. The $j$-invariants of  fibers of type IV$^*$  and I$_1$ are, respectively, zero and  infinity. 

By using an elliptic surface whose bases pass through the  collision point, the singular fiber at the collision point is  of Kodaira type III$^*$ for $\alpha=0$ and Kodaira type II$^*$ for $\alpha>0$.
Interestingly, we can think of the singular fiber $1-2-3-4-2$ as a contraction of a fiber of type III$^*$ or a fiber of type II$^*$,  as expected from the analysis of Cattaneo 
\cite{Cattaneo:2013vda}.

	\section{Topological invariants}\label{Sec:Topology}
	In this section we compute several topological invariants of the crepant resolution. 
Using the pushforward theorem of \cite{MMP1}, we can compute the Euler characteristic of an F$_4$-model. 
We need to know the classes of the centers of the sequence of blowups  that define the crepant resolution. 
The center of the $n$th blowup is a smooth complete intersection of $d_n$ divisors of classes $Z_i^{(n)}$, where $i=1,2,\cdots, d_n$.

We recall that $H=c_1(\mathscr{O}(1))$, $L=c_1(\mathscr{L})$, and $S=[S]$.  	
 The classes associated to the  centers of each blowup are \cite{MMP1}	
	\begin{align}
	\begin{split}
	\begin{array}{lll}
		Z^{(1)}_1 = H+2 L\quad\quad\quad & Z_2^{(1)} =H + 3 L\quad\quad\quad&Z_3^{(1)} = S\\
		Z^{(2)}_1 =Z^{(1)}_2 - E_1  & Z^{(2)}_2 = E_1 & \\
			Z^{(3)}_1 =Z^{(1)}_1 -E_1& Z^{(3)}_2 =E_2& \\
				Z^{(4)}_1 = E_2 - E_3 & Z^{(4)}_2 =E_3 & 
				\end{array}
	\end{split}
	\end{align}
\begin{thm}[\cite{MMP1}]
	 The  Euler characteristic of an F$_4$-model obtained by a crepant resolution of the Weierstrass model 
$ y^2z=x^3+ s^{3+\alpha}a_{4,3+\alpha} x z^2 + s^4 a_{6,4} z^3\quad (\alpha\in\mathbb{Z}_{\geq 0})$ over the base $B$ 
 is
	\begin{align}\nonumber
\chi(Y)=\int	 12 \frac{(  L + 3 S L-2 S^2)}{(1+ S)(1  + 6 L-4 S ) } c(B),
	\end{align}
	where $L=c_1(\mathscr{L})$ and $S$ is the class of $V(s)$. 
	In particular, denoting $c_i(TB)$ simply as $c_i$:  
	\begin{center}
\begin{tabular}{|c|c|}
\hline 
  & $\chi(Y)$  \\
 \hline 
$\dim Y=3$& $12  (c_1 L-6 L^2+6 L S-2 S^2)$ \\
$\dim Y=3$ and $c_1(TY)=0$ &$12( 5 c_1^2-6 c_1 S+2 S^2)$\\
\hline
$\dim Y=4$ &$12( -6 c_1 L^2+c_2 L+36 L^3+6 c_1 L S-60 L^2 S-2 c_1 S^2+34 L S^2-6 S^3)$ \\
$\dim Y=4$ and $c_1(TY)$ & $12 t^3 (30 c_1^3  + c_1 c_2  - 54 c_1^2 S+32 c_1 S^2 - 6 S^3)$\\
\hline 
\end{tabular}
\end{center}
\end{thm}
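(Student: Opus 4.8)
The plan is to apply the pushforward theorem of \cite{MMP1} mechanically to the explicit resolution constructed in Theorem \ref{Thm:blowups}. The input data is already assembled: the resolution is a sequence of four blowups along smooth complete intersections, and the classes $Z^{(n)}_i$ of their centers are recorded above. The strategy is to compute the total Chern class of the resolved fourfold $Y$ (or threefold) by starting from the Weierstrass model $\mathscr{E}_0 \subset X_0$, pushing its Chern class forward through the blowup tower one step at a time, and then taking the degree-$\dim Y$ part and integrating against the base.

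First I would set up the ambient calculation. The projective bundle $X_0 = \mathbb{P}_B[\mathscr{O}_B \oplus \mathscr{L}^{\otimes 2}\oplus \mathscr{L}^{\otimes 3}]$ has total Chern class $c(TX_0) = \pi^*c(TB)\,(1+H)(1+H+2L)(1+H+3L)$ by the relative Euler sequence, and the Weierstrass hypersurface has class $[\mathscr{E}_0] = 3H + 6L$, so by adjunction $c(T\mathscr{E}_0) = c(TX_0)/(1+3H+6L)$ restricted to $\mathscr{E}_0$. Then for each blowup $f_n: X_n \to X_{n-1}$ along a codimension-$d_n$ smooth center of class $\prod_i Z^{(n)}_i$, I would use the standard formula relating $c(TX_n)$ to $f_n^* c(TX_{n-1})$ and the exceptional class $E_n$, together with the proper-transform bookkeeping for the hypersurface — the multiplicities with which the defining equation vanishes along each center are exactly what made the Lemma before Theorem \ref{Thm:blowups} give crepancy, so the proper transform's class is $f_n^*[\,\cdot\,] - m_n E_n$ with $m_n$ read off from that Lemma. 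Crepancy (canonical class unchanged) is the consistency check that the leading behavior is right. The pushforward theorem of \cite{MMP1} packages precisely this iterated computation into a rational-function substitution rule, so in practice I would invoke it directly rather than redo the blowup algebra by hand.

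Concretely, the pushforward theorem produces $\varphi_* c(TY) \cap [Y]$ as a polynomial (really a rational function expanded to the appropriate order) in $L$, $S$, and $c(TB)$; the Euler characteristic is $\chi(Y) = \int_Y c_{\dim Y}(TY) = \int_B \varphi_*\big(c_{\dim Y}(TY)\cap[Y]\big)$. Reading off the stated closed form $\chi(Y)=\int 12\,\frac{L+3SL-2S^2}{(1+S)(1+6L-4S)}\,c(B)$ amounts to recognizing that the denominators $(1+S)$ and $(1+6L-4S)$ come from the fibral divisor classes $D_0 = S-E_1$ and the combination governing $D_3$, $D_4$ — indeed $(1,2,3,2,1)$-weighted structure and the double-cover nature of $D_3,D_4$ (Theorem \ref{Thm:fibralGeom}) are what deform the naive $E_6$-type answer. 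I would then specialize: impose $\dim B = 2$ or $3$, truncate to the relevant degree, and in the Calabi-Yau case substitute $c_1(TY)=0$, which fixes $L$ (equivalently $[\mathscr{E}_0]$) in terms of $c_1(TB)$ via the Calabi-Yau condition $[\mathscr{E}_0] = $ anticanonical, so $6L = c_1$ after accounting for the section twists; this collapses the two-variable polynomial to the one-variable expressions in $c_1$ and $S$ in the table.

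The main obstacle is purely organizational rather than conceptual: correctly tracking the tower of twisted relative projective coordinates through four successive blowups so that the multiplicities $m_n$ and the center classes $Z^{(n)}_i$ feed into the pushforward formula without sign or twist errors — the $X_i/X_{i-1}$ scaling relations listed after Theorem \ref{Thm:blowups} are exactly the data needed, but propagating them is error-prone. A secondary subtlety is the Calabi-Yau specialization: one must be careful that $c_1(TY)=0$ is genuinely the statement $6L - $ (correction from $\mathscr{S}$ twists) $= c_1(TB)$ with the correct coefficient of $S$, since the $S^2$ and $S^3$ terms in the table are sensitive to it. Once those are handled, verifying the table entries is a finite, if tedious, expansion; the nontrivial mathematical content — that this resolution is crepant and flat, and that $D_3,D_4$ are double covers — has already been established in Theorems \ref{Thm:blowups}–\ref{Thm:degeneration}, so the Euler-characteristic computation is a corollary of that structural work combined with the \cite{MMP1} machinery.
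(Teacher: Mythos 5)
Your approach is the same as the paper's: the paper gives no independent proof of this theorem but simply invokes the pushforward theorem of \cite{MMP1} applied to the listed center classes $Z^{(n)}_i$ of the four blowups, which is exactly what you propose. Your setup of the ambient Chern class via the relative Euler sequence, adjunction against the hypersurface class $3H+6L$, and the proper-transform bookkeeping with multiplicities read off from the crepancy lemma is the correct way to unpack what that theorem packages.

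One concrete slip: your Calabi--Yau specialization is wrong. From $c(TX_0)=\pi^*c(TB)(1+H)(1+H+2L)(1+H+3L)$ and $[\mathscr{E}_0]=3H+6L$, adjunction gives $c_1(TY)=c_1(TB)+3H+5L-(3H+6L)=c_1(TB)-L$, so the condition $c_1(TY)=0$ is $L=c_1(TB)$ --- the $H$-dependence cancels and there is no factor of $6$ and no correction from $\mathscr{S}$-twists. (The paper states this explicitly elsewhere: ``In the case of a Calabi--Yau threefold, we have $L=-K_B$.'') Substituting $6L=c_1$ instead of $L=c_1$ would produce entirely different polynomials in rows two and four of the table, so this step as written would fail. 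With $L=c_1$ the general $\dim Y=3$ entry $12(c_1L-6L^2+6LS-2S^2)$ specializes to $12(-5c_1^2+6c_1S-2S^2)$, which is the negative of the printed Calabi--Yau entry --- a sign discrepancy internal to the paper's table that you would have caught had you run the correct substitution, and worth noting rather than reproducing.
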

	\begin{thm}[\cite{MMP1}]
Let $Y$  be a Calabi-Yau threefold that is an F$_4$-model obtained from a crepant resolution. Then  the Hodge numbers of $Y$ are
	\begin{align}
		h^{1,1}(Y) &= 15 - K^2,\quad h^{2,1}(Y)= 15  + 29 K^2 + 36 S K + 12 S^2.\nonumber 
	\end{align}
\end{thm}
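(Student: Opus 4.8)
The plan is to get $h^{1,1}(Y)$ from the Picard number, using the Shioda-Tate-Wazir theorem and Noether's formula on the base, and then to read off $h^{2,1}(Y)$ from the Euler characteristic computed in the preceding theorem via $\chi(Y)=2\bigl(h^{1,1}(Y)-h^{2,1}(Y)\bigr)$. For $h^{1,1}$: since $Y$ is a Calabi-Yau threefold, $h^{2,0}(Y)=0$, so $H^{2}(Y,\mathbb{C})=H^{1,1}(Y)$ and the Lefschetz $(1,1)$-theorem gives $h^{1,1}(Y)=\rho(Y)$, the Picard number. As $Y\to B$ is a flat elliptic fibration with section (Theorem~\ref{Thm:degeneration}), the Shioda-Tate-Wazir theorem~\cite{Wazir} gives $\rho(Y)=\rho(B)+1+\operatorname{rk}\mathrm{MW}(\varphi)+\sum_{v}(m_v-1)$, the sum over codimension-one points $v$ of $B$ with reducible fiber and $m_v$ the number of its irreducible components over the residue field. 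For the generic F$_4$-model the Mordell-Weil rank vanishes and only the generic point of $S$ contributes, with $m=5$ because the dual graph $\widetilde{\text{F}}_4^t$ has five nodes: one must count the five fibral divisors $D_0,\dots,D_4$, each irreducible over $\mathbb{C}$ (by Theorem~\ref{Thm:fibralGeom} the divisors $D_3,D_4$ are double covers of $\mathbb{P}^1$-bundles, but they do not split over $\mathbb{C}$ and each counts once --- this is exactly why the F$_4$-model has smaller Picard number than the E$_6$-model, whose geometric fiber $\widetilde{\text{E}}_6$ has seven components). Hence $\rho(Y)=\rho(B)+5$. Since a genuine Calabi-Yau threefold forces $B$ to be a rational surface, $\chi(\mathscr{O}_B)=1$ and $b_1(B)=0$, so Noether's formula $12\chi(\mathscr{O}_B)=K_B^{2}+c_2(TB)$ together with $c_2(TB)=e(B)=2+b_2(B)=2+\rho(B)$ yields $\rho(B)=10-K_B^{2}$; writing $K=K_B$, this gives $h^{1,1}(Y)=15-K^{2}$.

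For $h^{2,1}(Y)$ I would use the Euler characteristic of the preceding theorem (the pushforward computation of~\cite{MMP1}, which rests on the explicit crepant resolution of Theorem~\ref{Thm:blowups} and the classes of its blowup centers): in dimension three $\chi(Y)=12\bigl(c_1(TB)\cdot L-6L^{2}+6LS-2S^{2}\bigr)$. Imposing the Calabi-Yau condition $\mathscr{L}\cong\mathscr{O}_B(-K_B)$, i.e. $L=-K=c_1(TB)$, specializes this to $\chi(Y)=-12\bigl(5K^{2}+6KS+2S^{2}\bigr)$. Substituting the value of $h^{1,1}(Y)$ found above,
\[
h^{2,1}(Y)=h^{1,1}(Y)-\tfrac12\chi(Y)=(15-K^{2})+6\bigl(5K^{2}+6KS+2S^{2}\bigr)=15+29K^{2}+36SK+12S^{2},
\]
as claimed.

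There is no serious obstacle internal to this argument: the substantive content has been front-loaded into the Euler characteristic formula, which is where the explicit resolution and the geometry of the F$_4$-model actually do their work. The subtlest step is getting the right Picard-rank contribution from $S$, namely $5$ (the number of fibral divisors over $\mathbb{C}$) rather than the geometric count $7$, since this is precisely what separates the F$_4$-model ($h^{1,1}=15-K^{2}$) from the E$_6$-model ($h^{1,1}=17-K^{2}$). One must also be careful that $\rho(Y)=h^{1,1}(Y)$ (which uses $h^{2,0}(Y)=0$) and that $B$ is rational (so that $\chi(\mathscr{O}_B)=1$ in Noether's formula). As a consistency check, $\chi(Y)=2(h^{1,1}-h^{2,1})=-12(5K^{2}+6KS+2S^{2})$ reproduces the specialized Euler characteristic, and is compatible with Witten's count $n_{\mathbf{26}}=5(1-g)+S^{2}$ of fundamental hypermultiplets via the adjunction relation $2g-2=K_BS+S^{2}$.
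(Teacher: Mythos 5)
Your proof is correct and follows essentially the same route as the cited reference \cite{MMP1}: $h^{1,1}$ via Lefschetz $(1,1)$ and the Shioda--Tate--Wazir formula (with the crucial count of \emph{five} irreducible fibral divisors, since $D_3$ and $D_4$ do not split over $\mathbb{C}$) together with Noether's formula on the rational base, and $h^{2,1}$ by combining this with $\chi(Y)=2(h^{1,1}-h^{2,1})$ and the pushforward computation of the Euler characteristic. One small remark: you correctly specialized the general formula $\chi(Y)=12(c_1L-6L^2+6LS-2S^2)$ at $L=-K$ to get $\chi(Y)=-12(5K^2+6KS+2S^2)$; note that the paper's tabulated entry for the case $c_1(TY)=0$, namely $12(5c_1^2-6c_1S+2S^2)$, has an overall sign error, and only the sign you obtained is consistent with the stated Hodge numbers.
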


With an explicit resolution of singularities, it is straightforward to compute intersection numbers of divisors. 
In particular, we evaluate the triple intersection numbers of the fibral divisors $D_a$, where $a=0,1,2,3,4$. 
The result is
\begin{equation}\varphi_* \left(\left(\sum D_a \phi_a \right)^3\cdot \varphi^* M\right)=6\mathscr{F}(L,S,\phi) M,\nonumber
\end{equation}
where $M$ is an arbitrary element of $A_{d-2}(B)$. In particular, if the base $B$ is a surface, $M$ is just a point. 
\begin{thm}
Let $D_a$ ($a=0,1,2,3,4$) be the fibral divisor of an F$_4$-model obtained by a crepant resolution of singularities. 
 The triple intersection numbers $\varphi_* \Big((\sum D_a \phi_a )^3\cdot \varphi^* M\Big)=6\mathscr{F}(L,S,\phi) M$, where $M$ is an element of $A_{d-2}(B)$ ($d=\dim\ B$), are given by
\begin{align}
6 \mathscr{F}(L,S,\phi)=& 4 (L-S) S\, \phi _0^3 +3 (S-2 L) S\, \phi _0^2 \phi _1  +3 L S \phi _0 \phi _1^2 \nonumber  \\
&+ 4  (L-S) S\,  \phi _1^3+4 (L-S) S\,  \phi _2^3 +8 (S-2 L)S\,   \phi _3^3+8  (S-2 L) S\, \phi _4^3\nonumber  \\
& +3 (2 S-3 L)S\,  \phi _1^2 \phi _2   +3  (2 L-S) S\,  \phi _1 \phi _2^2\nonumber  \\
&+6 (3 S-4 L)S\,  \phi _2^2 \phi _3 +12 (3 L-2 S)S\,  \phi _2 \phi _3^2+12 (S-L)S\,  \phi _3^2 \phi _4
+6 (4 L-3 S)S\,  \phi _3 \phi _4^2.\nonumber
\end{align}
\end{thm}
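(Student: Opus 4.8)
The plan is to compute the pushforward $\varphi_*\big((\sum_a D_a\phi_a)^3\cdot\varphi^*M\big)$ directly, using the explicit blowup description of Theorem~\ref{Thm:blowups} together with the pushforward formula of \cite{MMP1}. First I would express each fibral divisor class in terms of $H$, $L$, $S$, and the exceptional classes $E_1,\dots,E_4$ using \eqref{eqn:divisorClass}, namely $D_0=S-E_1$, $D_1=E_1-E_2$, $D_2=2E_2-E_1-E_3-E_4$, $D_3=2E_4-2E_2+E_1+E_3$, $D_4=E_3-E_4$. Substituting these into $\sum_a D_a\phi_a$ gives a linear combination of $S$ and the $E_i$'s whose coefficients are linear in the $\phi_a$'s; cubing this and multiplying by $\varphi^*M$ reduces the problem to evaluating $\varphi_*(S^iE_1^{j_1}E_2^{j_2}E_3^{j_3}E_4^{j_4}\cdot\varphi^*M)$ for all monomials of total degree $3$.

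The second step is to evaluate these exceptional-class monomials. Since each blowup center is a complete intersection of the divisor classes $Z^{(n)}_i$ listed just before the Euler characteristic theorem, the pushforward of any polynomial in the $E_i$ is governed by the standard rule that $E_i$ restricted to its own exceptional divisor is (minus) the relative hyperplane class of a projective bundle over the center, so $\varphi_*$ of a monomial in the $E_i$ is determined by the Segre classes of the normal bundles of the successive centers. Concretely I would apply the pushforward theorem of \cite{MMP1} iteratively, blowup by blowup, each application trading a monomial in $E_n$ for a class supported on the center $Z^{(n)}$ pushed forward to the previous stage; the centers' classes are all expressible in $H,L,S$ and lower $E_j$, so after four applications everything is a polynomial in $H,L,S$ on $X_0$, and the final pushforward $\pi_{X_0*}$ to $B$ just extracts the coefficient of $H$ (using $\pi_{X_0*}H^2=-c_1(\mathscr{O}_B\oplus\mathscr{L}^{\otimes2}\oplus\mathscr{L}^{\otimes3})$ relations and $\pi_{X_0*}H^0=0$, $\pi_{X_0*}H=1$ for the rank-three bundle). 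Because $M\in A_{d-2}(B)$, only the part of the resulting class of codimension $2$ in $B$ survives, which kills all terms with too many powers of $L$; in particular every surviving monomial carries exactly one factor of $S$, which explains the overall factor of $S$ in every term of $\mathscr{F}$.

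The third step is bookkeeping: collect the coefficient of each monomial $\phi_a\phi_b\phi_c$ and check it against the claimed formula. A useful sanity check I would run in parallel is that the restriction of the triple intersection form to the fiber must reproduce the intersection matrix $\deg(D_a\cdot C_b)$ of Theorem~\ref{Thm:degeneration}: setting $M$ to a fiber class and differentiating $\mathscr{F}$ appropriately should recover the twisted affine Cartan-type matrix with its kernel $(1,2,3,2,1)$, and the purely cubic self-intersection coefficients should be consistent with the Hirzebruch-surface and double-cover structure identified in Theorem~\ref{Thm:fibralGeom} (e.g. $D_0=\mathbb{P}_S[\mathscr{L}\oplus\mathscr{O}_S]$ forces $D_0^3=$ an expression in $L\cdot S$ and $S^2$ matching $4(L-S)S$). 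The main obstacle is the combinatorial explosion in the second step: cubing a five-term expression and then pushing forward through four successive blowups produces a large number of mixed monomials $E_i E_j E_k$ whose pushforwards must be computed carefully and consistently, and the normalizations (the twists by previous exceptional divisors, visible in the scaling relations $\ell_i$ at the end of Section~\ref{Sec:Resolution}) are exactly where sign and coefficient errors creep in; organizing this as a clean iterated application of the \cite{MMP1} pushforward theorem, rather than ad hoc patch computations, is what makes it tractable.
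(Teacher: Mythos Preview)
Your approach is correct and is precisely the one the paper uses: its entire proof reads ``Use equation \eqref{eqn:divisorClass} and successively apply the pushforward formula of \cite{MMP1},'' which is exactly your step of substituting the $D_a$'s in terms of $S,E_1,\dots,E_4$ and then iterating the blowup pushforward through $X_4\to\cdots\to X_0\to B$. Your added sanity checks (recovering the $\widetilde{\text{F}}_4^t$ pairing and matching the cubic self-intersections against the Hirzebruch/double-cover descriptions) are not in the paper's proof but are reasonable ways to catch coefficient errors in the bookkeeping.
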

\begin{proof}
 Use equation \eqref{eqn:divisorClass} and  successively apply the  pushforward formula of \cite{MMP1}.
\end{proof}
In the case of a Calabi-Yau threefold, we have $L=-K_B$. It follows that we can express the coefficient in terms of the genus of $S$ and its self-intersection using the relation $2-2g=-K_B\cdot S -S^2$:
\begin{align}\nonumber
6\mathscr{F}=& - 8 \left(g-1\right) \phi _0^3-3   \left(-4 g+4+S^2\right)\phi _0^2\phi _1+3  \left(-2 g+2+S^2\right)\phi _1^2 \phi _0
\\
&-8 (g-1) \phi _1^3 -8 (g-1) \phi _2^3+8  \left(4 g-4-S^2\right)  \phi _3^3+
8\left(4 g-4-S^2\right) \phi _4^3 \nonumber\\
&3  \left(6 g-6-S^2\right) \phi _1^2 \phi _2 +3 \left(-4 g+4+S^2\right) \phi _1  \phi _2^2\nonumber\\
&+6 \left(8 g-8-S^2\right) \phi _2^2 \phi _3
+12 \left(-6 g+6+S^2\right)  \phi _2\phi _3^2 +24 (g-1) \phi _3^2 \phi _4 +6\left(-8 g+8+S^2\right)  \phi _3 \phi _4^2.\nonumber
\end{align}

In the case of a threefold, the second Chern class defines a linear form on $H^2(Y, \mathbb{Z})$. In particular, for the fibral divisors we have
\begin{align}\nonumber
& \int_Y c_2(TY)\cdot (\sum_a D_a \phi_a)=
2 S (S-L) (\phi _0+\phi _1+\phi_2) +4S(2L-S)( \phi _3+ \phi _4).
\end{align}
Imposing the Calabi-Yau condition, we can rewrite this as
\begin{align}\nonumber
& \int_Y c_2(TY)\cdot (\sum_a D_a \phi_a)=
4(g-1) (\phi _0+\phi _1+\phi_2) +4(4-4g+S^2)( \phi _3+ \phi _4).
\end{align}

\subsection{Stein factorization and the geometry of non-simply laced G-models}

To understand the geometry of a fibral divisor $D$, it is important to see the divisor $D$ as  a relative scheme with respect to the appropriate base. 
The choice of the base is crucial to having the correct physical interpretation.  
In particular, to discuss the matter content of the theory, the base has to be a component of the discriminant locus. 
However, to study the possible contractions of $D$, the base can be an arbitrary subvariety of the elliptic fibration. 

Let $S$ be the irreducible  component of the discriminant locus supporting the gauge group. 
In the case of $G$-models with $G$ a  non-simply laced groups, Stein factorization illuminates the discussion of the geometry of the fibral divisors $D$, whose generic fibers over $S$ are not geometrically irreducible. 
The elliptic fibration $\varphi:Y\to B$ pulls back to a fibration $D\to S$. If the generic fiber of this fibration is not geometrically irreducible, the generic fiber is not geometrically connected.

We recall the following two classical theorems on morphisms that are consequences of the theorem of formal functions. 
\begin{thm}[Zariski]
Let $f: X\to Y$ be a proper morphism of Noetherian schemes such that $f_* \mathscr{O}_X\cong \mathscr{O}_Y$. Then all fibers are geometrically connected and non-empty.  
\end{thm}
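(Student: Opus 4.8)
The plan is to deduce the statement from the theorem on formal functions, which is the standard route. First I would reduce to the case where $Y = \mathrm{Spec}\, A$ is affine and local, with $A$ Noetherian and complete local ring with maximal ideal $\mathfrak{m}$; this is legitimate because geometric connectedness of a fiber $X_y$ can be checked after passing to the completion $\widehat{\mathscr{O}}_{Y,y}$ and then to the residue field, and properness plus $f_*\mathscr{O}_X \cong \mathscr{O}_Y$ are both preserved by flat base change along $\mathrm{Spec}\,\widehat{\mathscr{O}}_{Y,y} \to Y$. So it suffices to show: if $A$ is a complete Noetherian local ring and $f : X \to \mathrm{Spec}\, A$ is proper with $f_*\mathscr{O}_X = A$, then the closed fiber $X_0 = X \times_A (A/\mathfrak{m})$ is connected and nonempty.

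Nonemptiness is immediate: if $X_0 = \varnothing$ then $f(X)$ is a closed subset of $\mathrm{Spec}\, A$ (properness) not containing the unique closed point, hence empty, hence $X = \varnothing$ and $f_*\mathscr{O}_X = 0 \neq A$, a contradiction. For connectedness, the key step is the theorem on formal functions: it gives an isomorphism of the $\mathfrak{m}$-adic completion
$$
\widehat{(f_*\mathscr{O}_X)} \;\xrightarrow{\ \sim\ }\; \varprojlim_n H^0(X_n, \mathscr{O}_{X_n}),
$$
where $X_n = X \times_A (A/\mathfrak{m}^{n+1})$. Since $f_*\mathscr{O}_X = A$ and $A$ is already $\mathfrak{m}$-adically complete, the left side is just $A$, so $\varprojlim_n H^0(X_n,\mathscr{O}_{X_n}) = A$, which is a local ring and in particular has no nontrivial idempotents.

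Now suppose for contradiction that $X_0$ is disconnected, say $X_0 = Z_1 \sqcup Z_2$ with both pieces open and closed and nonempty. Since $X_0$ and $X_n$ have the same underlying topological space for all $n$, this decomposition lifts to a decomposition $X_n = Z_1^{(n)} \sqcup Z_2^{(n)}$, producing a nontrivial idempotent $\varepsilon_n \in H^0(X_n, \mathscr{O}_{X_n})$ (the function equal to $1$ on $Z_1^{(n)}$ and $0$ on $Z_2^{(n)}$). These idempotents are compatible under the transition maps (both restrict to the characteristic function of the component containing $Z_1$), so they assemble to an element $\varepsilon = (\varepsilon_n) \in \varprojlim_n H^0(X_n,\mathscr{O}_{X_n}) = A$ which is idempotent and, reducing mod $\mathfrak m$, equals the nontrivial idempotent $\varepsilon_0$ on $X_0$ — in particular $\varepsilon \notin \{0,1\}$. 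This contradicts the fact that $A$, being local, has only the trivial idempotents. Hence $X_0$ is connected, and by the reduction above every geometric fiber of the original morphism $f$ is connected and nonempty. \hfill$\square$

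The main obstacle is packaging the theorem on formal functions correctly and checking the compatibility of the fiberwise idempotents under the inverse-limit transition maps; the topological input (that $X_n$ and $X_0$ share a space, so that a connected-component decomposition propagates up the thickenings) is what makes the lift of idempotents work, and it is worth stating that explicitly rather than treating it as automatic. Everything else — the reduction to the complete local case and the "local ring has no nontrivial idempotents" endgame — is routine.
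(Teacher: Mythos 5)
The paper does not actually prove this statement; it records it as a classical consequence of the theorem on formal functions, which is precisely the route you take. Most of your argument is correct and standard: the reduction to a complete Noetherian local base via flat base change, the nonemptiness argument, the identification $A\cong\varprojlim_n H^0(X_n,\mathscr{O}_{X_n})$ from formal functions, and the lifting of a nontrivial idempotent up the infinitesimal thickenings (including your correct observation that the topological input --- $X_n$ and $X_0$ sharing an underlying space --- is what makes the lift work).

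The one genuine gap is that this argument proves the fibers are \emph{connected}, whereas the statement asserts they are \emph{geometrically} connected. Your reduction sentence, that geometric connectedness ``can be checked after passing to the completion $\widehat{\mathscr{O}}_{Y,y}$ and then to the residue field,'' conflates the two notions: after completing you are looking at the closed fiber $X_0$ over $k=k(y)$, and ruling out a nontrivial clopen decomposition of $X_0$ says nothing yet about $X_0\times_k\bar{k}$ (compare $V(x^2+1)\subset\mathbb{A}^1_{\mathbb{R}}$: one point, hence connected, but two geometric points). The standard repair is as follows. Geometric connectedness of $X_0$ over $k$ is equivalent to $X_0\otimes_k k'$ being connected for every finite \emph{separable} extension $k'/k$. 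Given such a $k'$, write $k'=k[T]/(p)$ with $p$ separable irreducible, lift $p$ to a monic $\tilde p\in A[T]$, and set $A'=A[T]/(\tilde p)$: this is a finite \'etale local $A$-algebra, complete with residue field $k'$, and the base change $X\times_A A'\to\operatorname{Spec}A'$ is proper with $\mathscr{O}$-pushforward $A'$ (flat base change) and closed fiber $X_0\otimes_k k'$. Running your idempotent argument over $A'$ shows $X_0\otimes_k k'$ is connected, and letting $k'$ vary over all finite separable extensions gives geometric connectedness. With this extra step your proof is complete.
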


\begin{thm}[Stein factorization{ \cite[Chap III.11.3]{Hartshorne}}] 
Let $f: X\to S$ be a proper morphism with $S$ a Noetherian scheme. Then there exists a factorization 
 $$\begin{tikzpicture}
	\node(X) at (-2,0){$X$};
	\node(Y1) at (2,0){$S'$};
	\node(Y0) at (0,-2){$S$};
		\draw[big arrow] (X) -- node[below=.1cm,left]{$f$} (Y0);	
		\draw[big arrow] (X) -- node[above=.1]{$f'$} (Y1);	
				\draw[big arrow] (Y1) -- node[below=.2cm,right]{$\pi$} (Y0);
	\end{tikzpicture}
	$$
such that 
\begin{enumerate}
\item $\pi:S'\to S$ is a finite morphism and $f':X\to S'$ is a proper morphism with geometrically connected fibers.
\item  $f'_* \mathscr{O}_X\cong \mathscr{O}_{S'}$. 
\item $S'$ is the normalization of $S$ in $X$.  
\item $S'=\underline{\mathrm{Spec}}_S (f_* \mathscr{O}_X)$.
 \end{enumerate}
\end{thm}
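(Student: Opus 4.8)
\medskip
\noindent\textbf{Proof proposal.} The plan is to build $S'$ by hand as a relative spectrum and then check the four properties, three of which become formal once the construction is in place. First I would set $\mathscr{A}:=f_{*}\mathscr{O}_{X}$, a quasi-coherent sheaf of commutative $\mathscr{O}_{S}$-algebras. The one substantive input is the coherence theorem for proper pushforwards: since $f$ is proper and $S$ is Noetherian, $\mathscr{A}$ is a coherent $\mathscr{O}_{S}$-module, hence a module-finite $\mathscr{O}_{S}$-algebra. I would then define $S':=\underline{\mathrm{Spec}}_S(\mathscr{A})$ with its canonical affine structure morphism $\pi:S'\to S$; module-finiteness of $\mathscr{A}$ makes $\pi$ a finite morphism. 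This already gives item (4) by definition of $S'$, and the finiteness half of item (1).

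Next I would produce the morphism $f'$ and verify item (2). By the universal property of the relative spectrum, $S$-morphisms $X\to\underline{\mathrm{Spec}}_S(\mathscr{A})$ correspond to $\mathscr{O}_{S}$-algebra homomorphisms $\mathscr{A}\to f_{*}\mathscr{O}_{X}$; applying this to the identity of $\mathscr{A}=f_{*}\mathscr{O}_{X}$ yields a canonical $S$-morphism $f':X\to S'$ with $\pi\circ f'=f$. For item (2), I would use that $\pi$ is affine, so $\pi_{*}$ induces an equivalence between quasi-coherent $\mathscr{O}_{S'}$-modules and quasi-coherent $\mathscr{A}$-modules on $S$; under this equivalence $f'_{*}\mathscr{O}_{X}$ corresponds to $\pi_{*}f'_{*}\mathscr{O}_{X}=f_{*}\mathscr{O}_{X}=\mathscr{A}$ with its tautological $\mathscr{A}$-module structure, which is exactly the object corresponding to $\mathscr{O}_{S'}$. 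Hence $f'_{*}\mathscr{O}_{X}\cong\mathscr{O}_{S'}$.

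Finally I would deduce the remaining claims. Properness of $f'$ follows from the cancellation property of proper morphisms: $f=\pi\circ f'$ is proper and $\pi$, being finite, is separated, so $f'$ is proper; together with the finiteness of $\pi$ this completes item (1). Geometric connectedness (and non-emptiness) of the fibers of $f'$ is then immediate from item (2) by the Zariski connectedness theorem quoted just above, applied to $f':X\to S'$ (note that $X$ is Noetherian, being of finite type over $S$, and $S'$ is Noetherian, being finite over $S$). For item (3) I would match the construction with the definition of the normalization of $S$ in $X$: since $\mathscr{A}$ is module-finite over $\mathscr{O}_{S}$, every local section of $\mathscr{A}$ is integral over $\mathscr{O}_{S}$ by the Cayley--Hamilton trick, so the integral closure of $\mathscr{O}_{S}$ inside $f_{*}\mathscr{O}_{X}$ is all of $\mathscr{A}$, whence $S'=\underline{\mathrm{Spec}}_S(\mathscr{A})$ is precisely that normalization.

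I expect the only genuine obstacle to be the coherence of $f_{*}\mathscr{O}_{X}$ for a proper morphism over a Noetherian base, i.e.\ Grothendieck's finiteness theorem; for projective $f$, which is the relevant setting for the elliptic fibrations of this paper, it is already contained in Hartshorne. Everything downstream of that is bookkeeping with relative Spec, affine morphisms, and the already-established Zariski theorem, so no essentially new idea should be needed.
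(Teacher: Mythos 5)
The paper offers no proof of this statement---it simply cites Hartshorne---and your argument is the standard construction behind that citation (take $S'=\underline{\mathrm{Spec}}_S(f_*\mathscr{O}_X)$, use coherence of proper pushforwards to get finiteness of $\pi$, the universal property of relative Spec to get $f'$ and item (2), cancellation for properness, and the Zariski connectedness theorem stated just above), so it is correct and consistent with the cited source. The only points worth flagging are that coherence of $f_*\mathscr{O}_X$ for an arbitrary proper (rather than projective) morphism requires Grothendieck's finiteness theorem from EGA III, which you already acknowledge, and that the \emph{geometric} connectedness of the fibers of $f'$ is precisely the strengthened form of Zariski's theorem the paper has quoted, so your appeal to it is legitimate within the paper's framework.
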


The Stein factorization on $f: D\to S$ is the decomposition $f=\pi'\circ f'$, where 
$\pi':\overline{S}\to S$ is a finite map of degree $d$ and $f':D\to \overline{S}$ is a morphism with connected fibers. 
We expect  $f':D\to \overline{S}$ to be a $\mathbb{P}^1$-bundle and $\pi': \overline{S}\to S$  a smooth $d$-cover of $S$.

In the case of F$_4$-models, $D$ is a $\mathbb{P}^1$-bundle for the fibral divisor $D_0$, $D_1$, and $D_2$  (corresponding to the affine root, and the small roots of $\widetilde{F}_4^t$. 
The remaining two fibral divisors (namely $D_3$ and $D_4$) are not $\mathbb{P}^1$-bundles over $S$ but rather double covers of $\mathbb{P}^1$ bundles over $S$ with a ramification locus $a_{6,4}=0$. The crepant resolution naturally defines a double cover $\pi: D\to \underline{D}$, where  $\underline{D}$ is a $\mathbb{P}^1$-bundle $p:\underline{D}\to S$.  
Let $f=\pi\circ p: D\to S$ be the composition. 
The key to understanding the different perspective on the geometry of $D$ is to consider the Stein factorization of $f$.

Let $D$ be the reduced fibral divisor $D_3$ or $D_4$ of an F$_4$-model. 
By definition, $f:D\to S$ has a generic fiber that is not geometrically connected.    
Consider the Stein factorization of the morphism $f: D\to S$. It gives a factorization  $f=\pi'\circ f'$ with $\pi'$ a finite map and $f'$ a proper morphism with geometrically connected fibers.

 $$\begin{tikzpicture}
	\node(X) at (-2,0){$D$};
	\node(Y) at (0,0){$\underline{D}$};
	\node(S0) at (0,-2){$S$};
	\node(S1) at (-2,-2){$\overline{S}$};
	\draw[big arrow] (X) -- node[above,midway]{$\pi$} (Y);	
		\draw[big arrow] (Y) -- node[above,right]{$p$} (S0);	
				\draw[big arrow] (X) -- node[below=.2cm,left]{$f$} (S0);
				\draw[big arrow] (X) -- node[above,left]{$f'$} (S1);	
				\draw[big arrow] (S1) -- node[below=.1cm]{$\pi'$} (S0);	
	\end{tikzpicture}
	$$
	In particular, the morphism  $f':D\to \overline{S}$ endows $D$ the structure of a $\mathbb{P}^1$-bundle over the double cover  $\overline{S}$ of $S$. 
This structure illustrates that $D$ can contract to $\overline{S}$. 

It is important to not confuse the role of the morphisms $f':D\to \overline{S}$  and $f:D\to S$ in F-theory. One might naively assume that the existence of a $\mathbb{P}^1$-bundle $f':D\to \overline{S}$ means that the divisor $D$ does not produce new curves leading to localized matter representations. 
However, it is important to keep in mind that   it is  the morphism $f:D\to S$ over the curve $S$ that is relevant for studying  the singular fibers of the elliptic fibration as $S$. 

The morphism $f:D\to S$ contains singular fibers that are double lines. The intersection numbers of these lines with the fibral divisors give two weights 
of the representation $\mathbf{26}$,  namely $\boxed{0\  1 \ -2 \ 1}$ for $D_3$ and $\boxed{0\ 0\  1 \  -2 }$ for $D_4$.  
The same weights are obtained over any closed points away from $a_{6,4}$ and are attributed to non-localized matter.

\section{Application to M-theory and F-theory in 5 and 6 dimensions}\label{Sec:Mtheory}

In this section, we study the aspects of five-dimensional gauge theories with gauge group F$_4$ using the geometry of the F$_4$-model.
 We consider an M-theory compactified on an F$_4$-model $\varphi: Y\to B$.
  We assume then that the variety $Y$ is a Calabi-Yau threefold and the base $B$ is a rational surface. Then, the resulting theory is a five-dimensional $\mathscr{N}=1$ supersymmetric theory with eight supersymmetric generators, whose matter content contains $n_H$ hypermultiplets and $n_V$ vector multiplets. We have  $n_H^0$ neutral hypermultiplets, $n_{\mathbf{52}}$ hypermultiplets transforming in the adjoint representation, and $n_{\mathbf{26}}$ hypermultiplets transforming in the fundamental representation $\mathbf{26}$. 
  We have $n_V$ vector multiplets whose kinetic terms and Chern-Simons terms are controlled by a cubic prepotential. For the F$_4$ gauge theory with  both adjoint and fundamental matters, there is a unique Coulomb phase. 

Since F$_4$  does not have a non-trivial third order Casimir, the classical part of the prepotential vanishes, and the quantum corrections fully determine the prepotential. The number of vector multiplets is the dimension of F$_4$. Then, we can determine the quantum contribution to the prepotential and hence determine $n_{\mathbf{52}}$ and  $n_{\mathbf{26}}$.  
Since we know the Hodge numbers of F$_4$-models on a Calabi-Yau threefold, we can compute $n_H^0=h^{2,1}(Y)-1$ as well.

We also check that the data we collected geometrically for the five-dimensional gauge theory will satisfy the anomaly cancellation conditions in the uplifted six-dimensional theory with the same gauge group F$_4$, the same matter contents, and an addition of  $n_T=h^{1,1}(B)-1$   tensor multiplets. 
\subsection{Intriligator-Morrison-Seiberg potential}
In this paper, the Intrilligator-Morrison-Seiberg (IMS) prepotential is the quantum contribution to the prepotential of a five-dimensional gauge theory with the matter fields in the representations $\mathbf{R}_i$ of the gauge group. Let $\phi$ be in the Cartan subalgebra of a Lie algebra $\mathfrak{g}$. 
The  weights are in the dual space of the Cartan subalgebra.  We denote the evaluation of a  weight on $\phi$ as a scalar product $\langle \mu,\phi \rangle$.  We recall that the roots are the weights of the adjoint representation of  $\mathfrak{g}$.
Denoting the fundamental roots by $\alpha$ and the weights of $\mathbf{R}_i$ by $\varpi$ we have 
\begin{align}
6\mathscr{F}_{IMS} =&\frac{1}{2} \left(
{
\sum_{\alpha} |\langle \alpha, \phi \rangle|^3-\sum_{\mathbf{R}_i} \sum_{\varpi\in W_i} n_{\mathbf{R}_i} |\langle \varpi, \phi\rangle|^3 
}
\right).
\end{align}
For all simple groups with the exception of SU$(N)$ with $N\geq 3$, this is the full cubic prepotential as there are non-trivial third Casimir invariants. 

One complication to the formula is dealing with the absolute values. For a given choice of a group $G$ and representations $\mathbf{R}_i$, we have to determine a Weyl chamber to remove the absolute values in the sum over the roots. 
 We then consider the arrangement of hyperplanes $\langle \varpi, \phi\rangle=0$, where $\varphi$ runs through all the weights of all the representations $\mathbf{R}_i$. 
 If none of these hyperplanes intersect the interior of the Weyl chamber, we can safely remove the absolute values in the sum over the weights. 
 Otherwise, we have hyperplanes partitioning the fundamental Weyl chamber into subchambers. Each of these subchambers is defined by the signs of the linear forms $\langle \varpi, \phi\rangle$. 
 Two such subchambers are adjacent when they differ by the sign of a unique linear form. 
 
 Within each of these subchambers, the prepotential is a cubic polynomial; in particular, it has smooth second derivatives. 
 But as we go from one subchamber to an adjacent one, we have to go through one of the walls defined by the weights and the second derivative will not be well-defined. 
 Physically, we think of the Weyl chamber as the ambient space and each of the subchambers is called a Coulomb phase of the gauge theory. 
 The transition from one chamber to an adjacent chamber is a phase transition that geometrically corresponds to a flop between different crepant resolutions of the same singular Weierstrass model.

The number of chambers of such a hyperplane arrangement is physically the number of phases of the Coulomb branch of the gauge theory. 
In the case of F$_4$ with the matter fields in the representation $\mathbf{52}\oplus\mathbf{26}$, there is a unique chamber as the hyperplanes $\langle \varpi, \phi\rangle=0$  have no intersections with the interior of the fundamental Weyl chamber. 
The explicit computation of $6\mathscr{F}_{IMS}$ is presented in the theorem below.

\begin{thm} The  prepotential for a gauge theory  with the gauge group F$_4$ coupled to $n_{\mathbf{52}}$  hypermultiplets  in the adjoint representation and  $n_{\mathbf{26}}$  hypermultiplets  in the fundamental representation is 
\begin{align}
6\mathscr{F}_{IMS} =&
-8 \left(n_{\mathbf{52}}-1\right)\phi _1^3 
-8 \left(n_{\mathbf{52}}-1\right)  \phi _2^3-8  \left(n_{\mathbf{52}}+n_{{\mathbf{26}}}-1\right) \phi _3^3-8  \left(n_{\mathbf{52}}+n_{{\mathbf{26}}}-1\right)\phi _4^3 \nonumber  \\
& -3  \left(-n_{\mathbf{52}}+n_{{\mathbf{26}}}+1\right)\phi _1^2\phi _2 +3 \left(n_{\mathbf{52}}+n_{{\mathbf{26}}}-1\right)  \phi _1\phi _2^2\\
&+12 \left(-n_{\mathbf{52}}+n_{{\mathbf{26}}}+1\right) \phi _2 \phi _3^2-6  \left(-3 n_{\mathbf{52}}+n_{{\mathbf{26}}}+3\right)\phi _2^2 \phi _3 \nonumber  \\
&
+6  \left(-3 n_{\mathbf{52}}+n_{{\mathbf{26}}}+3\right)\phi _3 \phi _4^2 +24\left(n_{\mathbf{52}}-1\right) \phi _3^2 \phi _4. \nonumber
\end{align}
\end{thm}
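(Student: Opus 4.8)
The plan is to evaluate the Intriligator-Morrison-Seiberg master formula for $6\mathscr{F}_{IMS}$ recalled above directly; the proof is a finite computation inside the root system of $\mathrm{F}_4$, and the work is mostly organizational. First I would fix coordinates on the Cartan subalgebra by writing $\phi=\sum_{a=1}^{4}\phi_a\,\alpha_a^\vee$ in the basis of simple coroots, so that any weight $\mu$ with Dynkin labels $(\mu_1,\dots,\mu_4)$ satisfies $\langle\mu,\phi\rangle=\sum_a\mu_a\phi_a$; this is precisely the normalization in which the geometric weights $\varpi_a(C)$ and the triple intersection numbers of the triple-intersection theorem above are expressed, which is what makes the eventual comparison meaningful. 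Next I would tabulate the weight data read off from the Cartan matrix of $\mathrm{F}_4$ (page~\pageref{figure.F4}): the $48$ roots, split into $24$ long and $24$ short roots, together with the weights of $\mathbf{26}$, whose $24$ nonzero weights are exactly the short roots of $\mathrm{F}_4$ (the remaining two weights are zero and drop out of any sum of cubes). The simple-root expansions make this a purely finite table.

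The only conceptual point is the removal of the absolute values in the master formula. Since $\sum_{\alpha}|\langle\alpha,\phi\rangle|^3=2\sum_{\alpha>0}\langle\alpha,\phi\rangle^3$ is Weyl invariant, I may restrict $\phi$ to the closed fundamental Weyl chamber, on which $\langle\alpha,\phi\rangle\ge 0$ for every positive root. Because every nonzero weight of both $\mathbf{26}$ and $\mathbf{52}$ is a root, the hyperplane arrangement $\{\langle\varpi,\phi\rangle=0\}$ entering $\mathscr{F}_{IMS}$ consists entirely of root hyperplanes, and a root hyperplane never meets the interior of a Weyl chamber; hence there is a single Coulomb phase and, on it, $\sum_{\varpi\in W(\mathbf{26})}|\langle\varpi,\phi\rangle|^3=2\sum_{\alpha>0,\ \alpha\ \text{short}}\langle\alpha,\phi\rangle^3$. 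I expect the only place where care is genuinely needed to be this bookkeeping — keeping track of which roots are long versus short and of their Dynkin labels; there is no delicate chamber-by-chamber analysis precisely because $\mathbf{26}$ is quasi-minuscule.

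With the absolute values gone, the master formula collapses to $$6\mathscr{F}_{IMS}=(1-n_{\mathbf{52}})\sum_{\alpha>0}\langle\alpha,\phi\rangle^3-n_{\mathbf{26}}\sum_{\alpha>0,\ \alpha\ \text{short}}\langle\alpha,\phi\rangle^3,$$ and it remains to expand each $\langle\alpha,\phi\rangle^3$ as a cubic in $\phi_1,\dots,\phi_4$, sum over the two root lists, and collect monomials; this yields the stated polynomial. The structural feature that the coefficients of $\phi_1^3$, $\phi_2^3$ and $\phi_3^2\phi_4$ involve only $n_{\mathbf{52}}$ while those of $\phi_3^3$, $\phi_4^3$ and the mixed monomials also involve $n_{\mathbf{26}}$ is nothing more than the statement that the corresponding sum over the short positive roots of the associated monomial in the Dynkin labels does or does not vanish. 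As an internal check I would finally verify that substituting $n_{\mathbf{52}}=g$ and $n_{\mathbf{26}}=5(1-g)+S^2$ and using $2-2g=-K_B\cdot S-S^2$ reproduces the Calabi-Yau form $6\mathscr{F}(L,S,\phi)$ of the triple-intersection theorem above once $\phi_0$ is set to zero. This matching, which identifies the geometric Coulomb parameters $(\phi_1,\dots,\phi_4)$ with those of the five-dimensional gauge theory, is exactly what Theorem~\ref{Thm:n52andn26} exploits, and it is consistent with the Witten-genus count of \S\ref{Sec:counting}.
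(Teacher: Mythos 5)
Your proposal is correct and follows the same route the paper intends: the key observation that the nonzero weights of $\mathbf{26}$ are exactly the short roots of F$_4$, so every hyperplane $\langle\varpi,\phi\rangle=0$ is a root hyperplane and there is a single Coulomb chamber, after which the absolute values drop and $6\mathscr{F}_{IMS}=(1-n_{\mathbf{52}})\sum_{\alpha>0}\langle\alpha,\phi\rangle^3-n_{\mathbf{26}}\sum_{\alpha>0,\ \text{short}}\langle\alpha,\phi\rangle^3$ is a finite expansion over the $48$ roots in Dynkin labels. This is precisely the computation the paper carries out (it states the uniqueness of the chamber in the preceding discussion and then records the result of the expansion), so there is nothing to add beyond performing the tabulation you describe.
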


\begin{thm}\label{Thm:n52andn26}
The triple intersection polynomial  of the elliptic fibration defined by the  crepant resolution of the F$_4$-model Weierstrass  matches the IMS potential if and only if 
\begin{equation}
n_{\mathbf{52}}=g, \quad n_{\mathbf{26}}=5 (1-g) + S^2.
\end{equation}
\end{thm}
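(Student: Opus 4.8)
The plan is to compare the geometric triple-intersection cubic $6\mathscr{F}(L,S,\phi)$ of the crepant resolution (computed in Section~\ref{Sec:Topology}) with the one-loop IMS prepotential $6\mathscr{F}_{IMS}$ of the $\mathrm{F}_4$ gauge theory, and to read off $n_{\mathbf{52}}$ and $n_{\mathbf{26}}$ by matching the coefficients of the cubic monomials. Both sides are homogeneous cubics in the Coulomb-branch variables, so once they are written in the same variables the identification is pure linear algebra. For a Calabi-Yau threefold $Y$ the base $B$ is a surface, so in $\varphi_*\big((\sum_a D_a\phi_a)^3\cdot\varphi^*M\big)=6\mathscr{F}(L,S,\phi)M$ the class $M$ is a point and $6\mathscr{F}(L,S,\phi)$ is an honest number built from the intersection numbers $L^2,LS,S^2$ on $B$.

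First I would put the two cubics in the same variables. The geometric cubic carries all five coefficients $\phi_0,\dots,\phi_4$ associated with the fibral divisors $D_0,\dots,D_4$, whereas $6\mathscr{F}_{IMS}$ involves only $\phi_1,\dots,\phi_4$, dual to the simple roots of $\mathrm{F}_4$. Since $D_0$ is the affine node --- the unique fibral component meeting the zero section --- the physical Coulomb branch is parametrized by $\phi_1,\dots,\phi_4$, which amounts to setting $\phi_0=0$. I would also use the Calabi-Yau relations $L=-K_B$ and $2-2g=-K_B\cdot S-S^2$, exactly as in the $g$-and-$S^2$ form of $6\mathscr{F}$ displayed after the triple-intersection theorem, so the geometric side becomes
\begin{align*}
6\mathscr{F}\big|_{\phi_0=0}=&-8(g-1)\phi_1^3-8(g-1)\phi_2^3+8(4g-4-S^2)(\phi_3^3+\phi_4^3)\\
&+3(6g-6-S^2)\phi_1^2\phi_2+3(-4g+4+S^2)\phi_1\phi_2^2+6(8g-8-S^2)\phi_2^2\phi_3\\
&+12(-6g+6+S^2)\phi_2\phi_3^2+24(g-1)\phi_3^2\phi_4+6(-8g+8+S^2)\phi_3\phi_4^2.
\end{align*}

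Next I would equate each of the ten coefficients of this cubic with the corresponding coefficient of $6\mathscr{F}_{IMS}$. The $\phi_1^3$ coefficient gives $n_{\mathbf{52}}-1=g-1$, i.e. $n_{\mathbf{52}}=g$; the $\phi_3^3$ coefficient gives $n_{\mathbf{52}}+n_{\mathbf{26}}-1=5-4g+S^2$, i.e. $n_{\mathbf{26}}=5(1-g)+S^2$. Substituting this pair back, the remaining eight equations (from $\phi_2^3$, $\phi_4^3$ and the six mixed monomials $\phi_1^2\phi_2,\phi_1\phi_2^2,\phi_2^2\phi_3,\phi_2\phi_3^2,\phi_3^2\phi_4,\phi_3\phi_4^2$) are all automatically satisfied, which gives the ``if'' direction; conversely, any $(n_{\mathbf{52}},n_{\mathbf{26}})$ making the two cubics agree must satisfy in particular the $\phi_1^3$ and $\phi_3^3$ equations, hence equals this pair, which gives the ``only if'' direction.

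The argument has essentially no hard step: it reduces to a bounded coefficient comparison, overdetermined but consistent. The one point deserving to be spelled out carefully --- and the only genuine content --- is the reduction $\phi_0=0$, namely that the affine divisor $D_0$ meeting the zero section contributes no independent modulus to the five-dimensional Coulomb branch, so that $6\mathscr{F}|_{\phi_0=0}$ is exactly the object to be compared with the gauge-theory prepotential. The consistency of the resulting linear system is precisely the nontrivial check that the geometry reproduces the expected $\mathrm{F}_4$ spectrum, matching in particular the count $n_{\mathbf{26}}=5(1-g)+S^2$ obtained independently from Witten's genus formula in \S\ref{Sec:counting}.
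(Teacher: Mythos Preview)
Your approach is essentially identical to the paper's: both extract $n_{\mathbf{52}}$ and $n_{\mathbf{26}}$ by comparing two pure cubic coefficients (the paper uses $\phi_1^3$ and $\phi_4^3$, you use $\phi_1^3$ and $\phi_3^3$, which is equivalent since those two coefficients coincide on both sides) and then verify that the remaining monomials match. Note a small arithmetic slip in your intermediate step: the $\phi_3^3$ comparison actually gives $n_{\mathbf{52}}+n_{\mathbf{26}}-1=4-4g+S^2$, not $5-4g+S^2$, though your final conclusion $n_{\mathbf{26}}=5(1-g)+S^2$ is correct.
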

\begin{proof} 
These numbers are obtained by comparing the coefficients of $\phi_1^3$ and $\phi_4^3$ in $\mathscr{F}_{IMS}$ and $\mathscr{F}$. A direct check shows that all the other coefficients match. 
\end{proof}

\subsection{Six-dimensional uplift and anomaly cancellation conditions}
In this section, we review the basics of anomaly cancellation in six-dimensional supergravity with the idea of applying it to the matter content we have identified in the previous section. 

Consider an ${\cal N}=1$  six-dimensional theory coupled to $n_T$ tensor multiplets, $n_V$ vectors, and $n_H$ hypermultiplets. 
The pure gravitational anomaly (proportional to $\mathrm{tr}\  R^4$) is canceled by the vanishing of its coefficient: 
\begin{subequations}
\begin{equation}\label{Eq.grav1}
n_H-n_V+29n_T-273=0.
\end{equation}
If the gauge group is a simple group $G$, 
\begin{align}\label{Eq.grav2}
n_H=n_H^{ch}+n_H^0, \quad n^{ch}_H =\sum_{\mathbf R} n_{\bf R} \left( \rm{dim}{\mathbf R} -\rm{dim} {\mathbf R}_0 \right), \quad n_H^0=h^{2,1}(Y)+1, \quad n_V=\dim  G,
\end{align}
\end{subequations}
where $\rm{dim} \mathbf{R}_0$ is the number of zero weights in $\mathbf{R}$ and $ \Big(\rm{dim}{\mathbf R} -\rm{dim} {\mathbf R}_0\Big)$ is the {\em charged dimension} of $\mathbf{R}$.
  Assuming the gravitational anomaly is canceled, the  anomaly polynomial is a function of the matter content of the theory. It depends on the representations and their multiplicities
 \begin{equation}
 \text{I}_8= \frac{9-n_T}{8} (\mathrm{tr} \   R^2)^2+\frac{1}{6}\  X^{(2)} \mathrm{tr}\ R^2-\frac{2}{3} X^{(4)},
 \end{equation}
where $X^{(n)}$ contains the information on the representations and their multiplicities and is given by
\begin{align}
X^{(n)}=\mathrm{tr}_{\mathrm{adj}}\  F^n -\sum_{\mathbf R}n_{\mathbf R} \mathrm{tr}_{\mathbf R}\  F^n.
\end{align}
 To simplify the expression of I$_8$, we choose a reference representation $\mathbf F$,  and introduce the  coefficients $A_{\bf R}$, $B_{\bf R}$, and $C_{\bf R}$ defined by the  trace identities
\begin{equation}
\mathrm{tr}_{\bf R}\  F^2 = A_{\bf R} \mathrm{tr}\  F^2 , \quad 
\mathrm{tr}_{\bf R}\  F^4 =  B_{\bf R} \mathrm{tr}\   F^4  +  C_{\bf R} (\mathrm{tr}\  F^2)^2.
\end{equation}
In a theory with only one quartic Casimir, we simply have $B_R=0$. 
It is very useful to  reformulate $X^{(n)}$ in terms of a unique representation $F$:
\begin{align}
X^{(2)}&=\Big(A_{\text{adj}}   -\sum_{\mathbf R}n_{\mathbf R} A_{\mathbf R}\Big) \mathrm{tr}_{\mathbf F}  F^2,\quad 
X^{(4)}=\Big(B_{\text{adj}}   -\sum_{\mathbf R}n_{\mathbf R} B_{\mathbf R}\Big) \mathrm{tr}_{\mathbf F}  F^4 +
\Big(C_{\mathrm{adj}}   -\sum_{\mathbf R}n_{\mathbf R} C_{\mathbf R}\Big)( \mathrm{tr}_{\mathbf F} F^2)^2
.
\end{align}
 To have a chance to cancel the anomaly in a theory with at least two quartic Casimirs, the coefficient of $\mathrm{tr}_{\mathbf F}  F^4$ must vanish.
This condition is irrelevant in a theory with only one quartic Casimir since $\mathrm{tr}_{\mathbf F}  F^4$ is proportional to $(\mathrm{tr}_{\mathbf F}  F^2)^2$.
If the following three conditions are satisfied, we can deduce that the anomalies are canceled by the Green-Schwartz mechanism:
\begin{itemize}
\item $n_H-n_V+29n_T-273=0$.
\item $ B_{\text{adj}}   -\sum_{\mathbf R}n_{\mathbf R} B_{\mathbf R}=0$ (in a theory with at least two quartic Casimirs).
\item I$_8$ factorizes.
\end{itemize}

\subsection{Cancellations of six-dimensional anomalies for an F$_4$-model}
 In this section, we prove that the data we computed on the F$_4$-model as seen in a M-theory compactification on a Calabi-Yau threefold $Y$ will satisfy the anomaly cancellation conditions of a six-dimensional theory with the same gauge group and same number of vector and hypermultiplets. 
Moreover, in a $(1,0)$ six-dimensional gauge theory, we can also have tensor multiplets. 
We assume here that the tensor multiplets are massless and have numbers $n_T=h^{1,1}(B)$, which is $n_T=9-K^2$ since we assume that the base is a rational surface. 

We recall the data we will need, which were computed in previous sections: 
\begin{align}\nonumber
		n_T=9-K^2, \quad
		h^{2,1}(Y)= 15  + 29 K^2 + 72(g-1) -24 S^2, \quad n_V=52, \quad 
n_{\mathbf{52}}=g, \quad n_{\mathbf{26}}=5 - 5 g + S^2.
	\end{align}
We recall that $ n_V$ is  given by the dimension of the Lie algebra of F$_4$. 
The numbers $n_{\mathbf{52}}$ and $n_{\mathbf{26}}$ were computed using the Intrilligator-Morrison-Seiberg prepotential.   The Hodge numbers of $Y$ were computed in \cite{MMP1}. 

We compute first the  pure gravitational anomaly. 
We need to satify equation \eqref{Eq.grav1}. Using the data of \eqref{Eq.grav2}, we  compute 
$n_H=29 K^2+64$ and check that the  gravitational anomaly cancels: 
\begin{equation}\nonumber
n_H -n_V +29 n_T -273=(29 K^2+64)-52+29(9-K^2)-273=0.
\end{equation}
We will now show that the anomaly polynomial  I$_8$ is a perfect square. 
Since F$_4$ does not have a fourth Casimir, $B_{\mathbf{R}}=0$. Taking $\mathbf{26}$ as our reference representation \cite{Avramis:2005hc} ,
$$
\mathrm{tr}_{\mathbf{52}}\  F^2=3 \mathrm{tr}_{\mathbf{26}}\  F^2, \quad 
\mathrm{tr}_{\mathbf{52}}\  F^4=\frac{5}{12} ( \mathrm{tr}_{\mathbf{26}}   F^2)^2\quad 
\mathrm{tr}_{\mathbf{26}}\  F^4=\frac{1}{12} ( \mathrm{tr}_{\mathbf{26}}   F^2)^2.
$$
From these trace identities, we can immediately read off the coefficients $A_{\mathbf{R}}$, $B_{\mathbf{R}}$, and $C_{\mathbf{R}}$:
\begin{align}\nonumber
A_{\mathbf{52}}=3, \quad 
B_{\mathbf{52}}=0, \quad 
 C_{\mathbf{52}}=\frac{5}{12}, \quad 
A_{\mathbf{26}}=1, \quad  
B_{\mathbf{26}}=0, \quad 
  C_{\mathbf{26}}=\frac{1}{12}.
\end{align}
Hence, the forms $X^{(2)}$ and $X^{(4)}$ are 
\begin{align}\nonumber
X^{(2)}=(3-3 n_{\mathbf{52}}-n_{\mathbf{26}}) \mathrm{tr}\  F^2, \quad X^{(4)}=  \frac{1}{12}\Big( 5-5 n_{\mathbf{52}}  - n_{\mathbf{26}}     \Big)( \mathrm{tr}   F^2)^2.
\end{align}
After plugging in the values of $n_{\mathbf{52}}$ and $n_{\mathbf{26}}$, we have 
\begin{align}\nonumber
 X^{(2)}=K\cdot S \mathrm{tr}\  F^2, \quad X^{(4)}=  -\frac{1}{12} S^2( \mathrm{tr}   F^2)^2.
\end{align}
 We can now prove that the anomaly polynomial is a perfect square:
\begin{align}\nonumber
\text{I}_8= \frac{K^2}{8} (\mathrm{tr} \   R^2)^2+\frac{1}{6} K S (\mathrm{tr}\  F^2)(\mathrm{tr} \   R^2)+\frac{1}{18}S^2( \mathrm{tr}   F^2)^2=\frac{1}{72}\Big({3  K  \mathrm{tr} \   R^2+2 S \    \mathrm{tr}   F^2  }\Big)^2.
\end{align}
This shows that the anomalies can be canceled by the traditional Green-Schwarz mechanism.

\subsection{Frozen representations}\label{sec:frozen}

Motivated by the counting of charged hypermultiplets in M-theory compactifications, we introduce the notion of {\em frozen representations}. 
When a vertical curve of an elliptic fibration carries the weight of a representation $\mathbf{R}$, the representation $\mathbf{R}$ is said to be the {\em geometric representation} induced by the vertical curve. 
The existence of a vertical curve carrying a weight of the representation $\mathbf{R}$  is a necessary but not sufficient condition for hypermultiplets to be charged under the representation $\mathbf{R}$. 
It is possible that a geometric representation is not physical in the sense that no hypermultiplet is charged under  $\mathbf{R}$, so that   $n_{\mathbf{R}}=0$. 

\begin{defn}[Frozen representation]\label{defn:frozen}
A geometric representation $\mathbf{R}$ is said to be {\em frozen} if it is induced by the weights of vertical curves of the elliptic fibration but no hypermultiplet is charged under $\mathbf{R}$. 
\end{defn}
Witten has proven that a curve of genus $g$ supporting a Lie group  produces $g$  hypermultiplets in the adjoint representation. 
It follows that the adjoint representation is frozen when the genus is zero. 
This is because the adjoint hypermultiplets  are counted by holomorphic forms on the curve $S$. 
When the genus is zero, there are no such forms. Hence, even though we clearly witness vertical rational curves carrying the weights of the adjoint representation, no adjoint hypermultiplet are to be seen.

For non-simply laced groups, frozen representations can occur when the curve defined by the Stein factorization has the same genus as  the base curve $S$.  The representation is frozen if and only if 
$g=0 \quad \text{and} \quad \textnormal{deg}\ R =  2 (d-1)$. 

The number of points over which the  fiber IV$^{*\text{ns}}$ degenerates to the non-Kodaira fiber IV$^*_2$ is the degree of the ramification divisor $R$. 
 This  is  an interesting geometric invariant to keep in mind.
\begin{lem}\label{Lem:degR}
  If $B$ is a surface and $\mathscr{L}=\mathscr{O}_B(-K_B)$ so that the F$_4$-model is a Calabi-Yau threefold, then the number of points over which the generic fiber IV$^{*\text{ns}}$ over $S$ degenerates  further to the non-Kodaira fiber $1-2-3-4-2$ is 
$$ \textnormal{deg}\  R=12(1-g)+2 S^2.$$
\end{lem}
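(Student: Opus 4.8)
The plan is to identify the ramification divisor $R$ of the double cover appearing in the Stein factorization of $D_3$ (equivalently $D_4$) explicitly as a divisor on $S$, and then compute its degree using the Calabi-Yau hypothesis. From Theorem \ref{Thm:fibralGeom} and Theorem \ref{Thm:degeneration}, the fibral divisor $D_3$ is a double cover of a $\mathbb{P}^1$-bundle over $S$ branched along $V(a_{6,4})\cap S$, so the Stein factorization produces a finite degree-two map $\pi': \overline{S}\to S$ ramified precisely along the locus $V(s,a_{6,4})$ inside $S$. Hence $R$ is the zero scheme on $S$ of the section $a_{6,4}$ of the line bundle $\mathscr{L}^{\otimes 6}\otimes\mathscr{S}^{\otimes -4}$ restricted to $S$; equivalently, as a divisor class on $S$,
$$
R = \big(6L - 4S\big)\big|_S .
$$
The number of points over which $\mathrm{IV}^{*\text{ns}}$ degenerates to $1-2-3-4-2$ is then $\deg R = \int_S \big(6L-4S\big)$.

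First I would record the divisor-class computation: $\deg R = 6\,(L\cdot S) - 4\,(S\cdot S) = 6\,L\cdot S - 4 S^2$, where all intersection numbers are taken on $B$. Next I would impose the Calabi-Yau condition $\mathscr{L}=\mathscr{O}_B(-K_B)$, i.e. $L=-K_B$, so that $L\cdot S = -K_B\cdot S$. Then I would invoke the adjunction formula for the smooth curve $S$ inside the surface $B$, namely $2g-2 = (K_B + S)\cdot S = K_B\cdot S + S^2$, which the paper has already been using in Section \ref{Sec:Topology}; this gives $-K_B\cdot S = S^2 - (2g-2) = 2 - 2g + S^2$. Substituting,
$$
\deg R = 6\,(2 - 2g + S^2) - 4 S^2 = 12 - 12 g + 2 S^2 = 12(1-g) + 2S^2,
$$
which is the claimed formula.

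The only genuinely substantive point — and the one I would be most careful about — is justifying that the ramification divisor is exactly $V(s,a_{6,4})$ with the stated line bundle, rather than something with an extra twist coming from the blowup sequence. This is why I would lean on the explicit parametrizations of $C_3$ and $C_4$ in the proof of Theorem \ref{Thm:fibralGeom}: there the double cover is displayed concretely as $(y/s^2)^2 = a_{6,4}$, and $y/s^2$ is a section of $\mathscr{L}^{\otimes 3}\otimes\mathscr{S}^{\otimes -2}$ (since $y$ has class $3L+H-E_1-E_2$ and the relevant normalizations were fixed in the theorem), so that $a_{6,4}$ is a section of the square, $\mathscr{L}^{\otimes 6}\otimes\mathscr{S}^{\otimes -4}$, confirming $R \in |(6L-4S)|_S|$. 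Everything after that is the routine intersection-number manipulation above; no further geometric input is needed beyond adjunction on $S\subset B$ and the Calabi-Yau identification of $\mathscr{L}$.
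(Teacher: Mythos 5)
Your proposal is correct and follows essentially the same route as the paper: identify $R$ with $V(s,a_{6,4})$, note that the class of $V(a_{6,4})$ is $6L-4S$, and then combine $L=-K_B$ with adjunction $2g-2=(K_B+S)\cdot S$ to obtain $\deg R = (6L-4S)\cdot S = 12(1-g)+2S^2$. The extra care you take in pinning down the line bundle of $a_{6,4}$ via the explicit double-cover equation is consistent with, and slightly more detailed than, the paper's one-line justification.
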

\begin{proof}
The number of intersection points is the intersection product of $V(a_{6,4})$ and the curve $S$. Note that  this is a transverse intersection and the class of $V(a_{6,4})$ is $6L-4S$. Then using 
$L=-K_B$ and $2g-2=(K+S)\cdot S$, we have 
$
\textnormal{deg}\ R=(6L-4S)\cdot S
=12 (1-g)+2S^2.
$
 \end{proof}

The degree of the ramification locus $R$  is the number of points in the reduced intersection of the two components of the discriminant locus,  namely $S$ and $\Delta'$. 
The degree of $R$ has to be positive as otherwise $D_3$ and $D_4$ are not irreducible and we have  an E$_6$-model rather than an F$_4$-model.
This gives a constraint on the self-intersection of $S$: 
$S^2>6(g-1)$. 
For example, if $B=\mathbb{P}^2$,  the bound is respected when $S$ is a smooth curve of degree $1$, $2$, $3$, or $4$.

  \begin{prop}\label{Prop:frozen}
  for an F$_4$-model, the number of charged hypermultiplets are 
  $$n_{\mathbf{26}}=\frac{1}{2} \textnormal{deg}\ R+g-1\quad   n_{\mathbf{52}}=g.$$
 In particular, the adjoint representation is frozen if and only if $g=0$. The 
   representation $\mathbf{26}$ is frozen if and only if  $S^2=-5$ and $g=0$, which also forces the adjoint representation to be frozen. 
        \end{prop}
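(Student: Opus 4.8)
The plan is to extract the two multiplicities $n_{\mathbf{52}}$ and $n_{\mathbf{26}}$ from results already in hand and then to analyse when each of them vanishes, the only subtlety being the positivity of $\deg R$ that distinguishes an F$_4$-model from an E$_6$-model.

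First I would record the counting formulas. The equality $n_{\mathbf{52}}=g$ is Witten's count of adjoint hypermultiplets \cite{Witten:1996qb}, re-derived purely five-dimensionally in Theorem~\ref{Thm:n52andn26}. For $n_{\mathbf{26}}$ I would apply the Riemann--Hurwitz formula (Theorem~\ref{prop:genusCover}) to the degree-two cover $\pi':S'\to S$ produced by the Stein factorization of $f:D_3\to S$ (equivalently $D_4\to S$), whose branch divisor is $R=V(s,a_{6,4})\cap S$; combined with the Aspinwall--Katz--Morrison identity $n_{\mathbf{R}_0}=g'-g$ of \S\ref{Sec:counting} \cite{Aspinwall:2000kf} with $d=2$, this gives $n_{\mathbf{26}}=(g-1)+\tfrac12\deg R$, which is the first displayed assertion. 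As a cross-check, substituting $\deg R=12(1-g)+2S^2$ from Lemma~\ref{Lem:degR} recovers $n_{\mathbf{26}}=5(1-g)+S^2$, in agreement with Theorem~\ref{Thm:n52andn26}.

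The freezing statements then follow from Definition~\ref{defn:frozen}. The adjoint $\mathbf{52}$ is always a geometric representation (the fibral curves $C_a$ carry its weights), so it is frozen precisely when $n_{\mathbf{52}}=0$; since $n_{\mathbf{52}}=g\ge 0$, this happens iff $g=0$. Likewise $\mathbf{26}$ is a geometric representation by Theorem~\ref{Thm:weights} (the curves $C_3',C_4'$ over $V(s,a_{6,4})$ carry its weights), so it is frozen iff $n_{\mathbf{26}}=0$, i.e. $\tfrac12\deg R=1-g$, equivalently $S^2=5(g-1)$ by Lemma~\ref{Lem:degR}. Here I would bring in the defining constraint of an F$_4$-model, namely $\deg R>0$: were $R$ empty, the cover $\pi':S'\to S$ would be étale, the fibral divisors $D_3,D_4$ reducible, and the fibration an E$_6$-model rather than an F$_4$-model (this is the content of the discriminant analysis above and of \cite[Theorem 1.27]{EJJN1}). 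Thus $12(1-g)+2S^2>0$, i.e. $S^2>6(g-1)$; together with $S^2=5(g-1)$ this forces $5(g-1)>6(g-1)$, hence $g<1$, hence $g=0$ and then $S^2=-5$. Conversely, $g=0$ and $S^2=-5$ give $n_{\mathbf{26}}=5+S^2=0$ and $\deg R=12-10=2>0$, a genuine F$_4$-model in which $\mathbf{26}$ is frozen; and $g=0$ makes $n_{\mathbf{52}}=0$, so the adjoint is frozen as well.

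The argument is essentially bookkeeping resting on Theorems~\ref{prop:genusCover} and~\ref{Thm:n52andn26} and Lemma~\ref{Lem:degR}, so I do not expect a serious obstacle. The one point that demands care is the strict inequality $\deg R>0$ for an F$_4$-model: it is precisely what rules out the spurious solution $g=1$, $S^2=0$ of $n_{\mathbf{26}}=0$, and I would state explicitly why a vanishing branch locus collapses the F$_4$ geometry to the E$_6$-model (a non-square $a_{6,4}$ together with irreducible $D_3$, $D_4$ requires a genuine branch divisor), referring back to the discriminant computation and to the classification of crepant resolutions.
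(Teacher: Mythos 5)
Your proposal is correct and follows essentially the same route as the paper: the multiplicities come from Theorem \ref{Thm:n52andn26} (equivalently the Witten/Aspinwall--Katz--Morrison genus count of \S\ref{Sec:counting}, which the paper also records and which Lemma \ref{Lem:degR} shows agrees with the intersection-theoretic answer), and the freezing criteria follow from the positivity of $\deg R$ exactly as in the paper's proof, which likewise uses $\deg R>0$ to force $g=0$ and then $S^2=-5$. Your explicit remark that positivity of $\deg R$ is what excludes the spurious solution $g=1$, $S^2=0$ is a worthwhile clarification, but it is the same mechanism the paper invokes.
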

      \begin{proof}
       The number of representation $n_{\mathbf{26}}$  is computed in Theorem \ref{Thm:n52andn26}.  The representation $\mathbf{26}$ is frozen if and only if $\textnormal{deg}\ R=2(1-g)$. Since the degree of $R$ has to be positive, we also see that $g=0$, hence by using 
       Lemma \ref{Lem:degR}, we conclude that $S^2=-5$ and $\textnormal{deg}\ R=2$.
      \end{proof}

We consider two  important examples. 

\begin{exmp}[Frozen adjoint representation]
$n_{\mathbf{52}}=0$ if and only if  $g=0$.   In this case, $\textnormal{deg}\ R=12+2S^2$ and $n_{\mathbf{26}}=5+S^2$. 
This matches what is found in Table 3 of \cite{Bershadsky:1996nh}  using $n=S^2$ as the instanton number. 
\end{exmp}

 \begin{exmp}[Frozen representation  $\mathbf{26}$]
$n_{\mathbf{26}}=0$ if and only if  $S^2=-5$ and $g=0$. 
For example, take $B$  to be the quasi-projective surface given by  the total space of the line bundle $\mathscr{O}_{\mathbb{P}^1}(-5)$. 
To construct a local Calabi-Yau threefold, use $\mathscr{L}=\mathscr{O}_B(-3)$. This is the non-Higgsable model of \cite{Morrison:2012np}. The defining equation of such a Weierstrass model is 
$$y^2 z= x^3 + f_3 s^3 x z^2 + g_2 s^4 z^3,$$ where $g_2$ and $f_3$ are respectively sections of $\mathscr{O}_{\mathbb{P}^1}(2)$ and $\mathscr{O}_{\mathbb{P}^1}(3)$.
The IV$^{*\text{ns}}$ fiber degenerates further at the two points  $g_2=0$. Over these points we have the non-Kodaira fiber of type $1-2-3-4-2$, each carrying  the weights of  the representation $\mathbf{26}$. 
\end{exmp}

\subsection{Geometry of fibral divisors in the case of frozen representations}

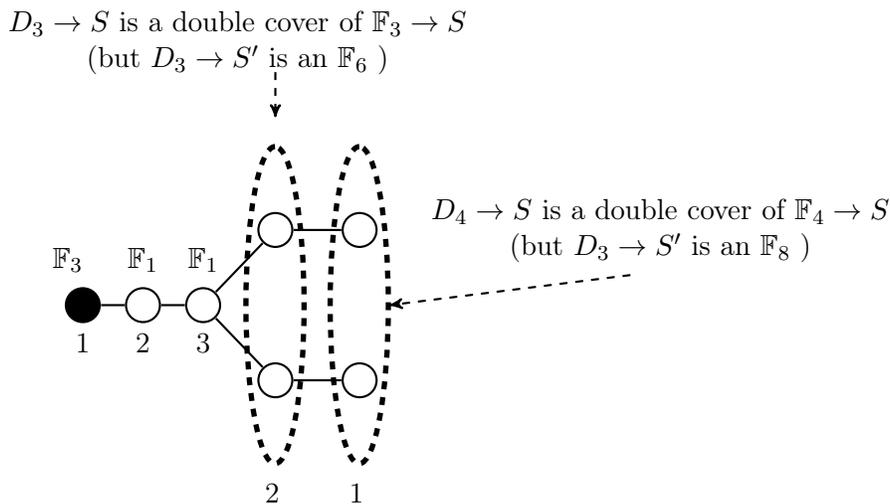
\begin{figure}[htb]
\begin{center}
\begin{tikzpicture}
				\node[draw,circle,thick,scale=1.25,fill=black,label=below:{1}] (0) at (0,0){};
				\node[draw,circle,thick,scale=1.25,label=below:{2}] (1) at (.8,0){};
				\node[draw,circle,thick,scale=1.25,label=below:{3}] (2) at (.8*2,0){};
				\node[draw,circle,thick,scale=1.25] (3) at (.8*3.2,-1){};
				\node[draw,circle,thick,scale=1.25] (4) at (.8*4.6,-1){};
								\node[draw,circle,thick,scale=1.25] (5) at (.8*3.2,1){};
																\node[draw,circle,thick,scale=1.25] (6) at (.8*4.6,1){};
			        \node at (.8*3.15,0)[draw,dashed, line width=2pt, ellipse, minimum width=120pt, minimum height=22pt,rotate=90,yshift=-1pt]{};
				\node at (.8*4.55,0)[draw,dashed, line width=2pt, ellipse, minimum width=120pt, minimum height=22pt,rotate=90,yshift=-1pt]{};
									\node   at (.8*3.15,-2.5) { 2};\node   at (.8*4.55,-2.5) { 1};

				\draw[thick] (0)--(1)--(2)--(3)--(4);
				\draw[thick] (2)--(5)--(6);
								\node (D0) at ($(0)+(-.2,.6)$) {$\mathbb{F}_3$};
				\node (D1) at ($(1)+(0,.6)$) {$\mathbb{F}_1$};
					\node (D2) at ($(2)+(0,.6)$) {$\mathbb{F}_1$};
						\node (D3) at ($(3)+(-.5,4.5)$) {\begin{tabular}{c} $D_3\to S$ is a  double cover of $ \mathbb{F}_3\to S$\\
						(but $D_3\to S'$ is an $\mathbb{F}_6$ )
						\end{tabular}
						};
							\node (D4) at ($(4)+(4,2)$)
							{\begin{tabular}{c} $D_4\to S$ is  a double cover of $ \mathbb{F}_4\to S$\\
						(but $D_3\to S'$ is an $\mathbb{F}_8$ )
						\end{tabular}
						};
				\draw[->,>=stealth',thick=1mm,dashed]  ($(D3)+(.5,-.4)$)--($(3)+(0,3.5)$);
				\draw[->,>=stealth',thick=1mm,dashed]  ($(D4)+(-.4,-.6)$)--($(4)+(.4,1)$);
								
	\end{tikzpicture}
\end{center}
		\caption{  
		Fibral divisors of a F$_4$-model as schemes over $S$ a curve of self-intersection $-5$. See Lemma \ref{Lem:frozenGeom}.
	The Stein factorization gives a morphism $f'_i:D_i\to S'$ with connected fibers and a finite morphism  $\pi: S'\to S$ that is a double cover branched at the two points $g_2=0$ of $S$. Hence $S'$ is also a rational curve.  
		The morphisms $f'_3:D_3\to S'$ and  $f'_4:D_4\to S'$ define, respectively, an  $\mathbb{F}_6$ and an $\mathbb{F}_8$ with base curve $S'$. 
\label{Fig:FibralDiv5}				}
		\end{figure}	

The case of an F$_4$-model with both representations  frozen has been recently studied in  \cite{DelZotto:2017pti}. 
 Such a model does not have any charged hypermultiplets as explained first  in \cite{Morrison:2012np}. 
 It follows from  the geometry of the crepant resolution, the generic fiber over the base curve $S$ does degenerate at two points $V(g_2)$. 
  Our description is consistent with the analyses of  \cite{GM1, GM2, Morrison:2012np} and the Hirzebruch surfaces identified in \cite{DelZotto:2017pti}.

We discuss  in detail the geometry of the fibral divisors of an F$_4$-model in the case where all representations are frozen. 
This means that $S$ is a curve of genus zero and self-intersection $-5$. For example, the base could be $\mathbb{F}_5$ or the total space of $\mathscr{O}_{\mathbb{P}^1}(-5)$. 
The key is Theorem  \ref{Thm:fibralGeom},which we specialize to this situation in the following lemma.

\begin{lem}\label{Lem:frozenGeom}
Let $S\subset B$ be a smooth rational curve of self-intersection $-5$. 
An F$_4$-model over $B$ with gauge group supported on $S$ has fibral divisors $D_0$, $D_1$, $D_2$, $D_3$ and $D_4$ such that 
$D_0\to S$, $D_1\to S$, $D_2\to S$ are respectively, Hirzebruch surfaces $\mathbb{F}_3$,  $\mathbb{F}_1$, and  $\mathbb{F}_1$. 
$D_3\to S$ and $D_4\to S$ are not Hirzebruch surfaces over $S$ but double covers of $\mathbb{F}_3$ and $\mathbb{F}_4$ Hirzebruch surfaces. 
Considering the  Stein factorization $D_3      \xrightarrow{f} S'\xrightarrow{\pi} S$ and $D_4      \xrightarrow{f} S'\xrightarrow{\pi} S$ where 
$S'\xrightarrow{\pi} S$ is a double cover of $S$ branched at two points. 
 The morphism 
$D_3      \xrightarrow{f} S'$ presents $D_3$ as an Hirzebruch surface $\mathbb{F}_6$ over $S'$ and $D_4      \xrightarrow{f} S'$ presents $D_4$ as a Hirzebruch surface $\mathbb{F}_8$ over $S'$.
\end{lem}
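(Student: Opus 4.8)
The plan is to deduce everything directly from Theorem~\ref{Thm:fibralGeom} by specializing to $g=0$ and $S^2=-5$; the only external input is the elementary fact that a $\mathbb{P}^1$-bundle $\mathbb{P}_{\mathbb{P}^1}[\mathscr{A}\oplus\mathscr{B}]$ is the Hirzebruch surface $\mathbb{F}_m$ with $m=|\deg\mathscr{A}-\deg\mathscr{B}|$, the bundle being normalized up to a twist via $\mathbb{P}[\mathscr{E}]=\mathbb{P}[\mathscr{E}\otimes\mathscr{N}]$. First I would record the two relevant degrees along $S$: one has $\deg(\mathscr{S}|_S)=S\cdot S=-5$, and, in the Calabi--Yau normalization $\mathscr{L}=\mathscr{O}_B(-K_B)$ in force throughout this section, adjunction for $S\subset B$ gives $2-2g=L\cdot S-S^2$, hence $\deg(\mathscr{L}|_S)=L\cdot S=-3$. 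Substituting these degrees into the five bundles of Theorem~\ref{Thm:fibralGeom} immediately yields $D_0=\mathbb{P}_S[\mathscr{L}\oplus\mathscr{O}_S]=\mathbb{F}_3$, $D_1=\mathbb{P}_S[\mathscr{L}^{\otimes 2}\oplus\mathscr{S}]=\mathbb{F}_1$, $D_2=\mathbb{P}_S[\mathscr{L}^{\otimes 3}\oplus\mathscr{S}^{\otimes 2}]=\mathbb{F}_1$, and identifies the base $\mathbb{P}^1$-bundles of the double covers $D_3,D_4$ as $\mathbb{P}_S[\mathscr{L}^{\otimes 4}\oplus\mathscr{S}^{\otimes 3}]=\mathbb{F}_3$ and $\mathbb{P}_S[\mathscr{L}^{\otimes 2}\oplus\mathscr{S}^{\otimes 2}]=\mathbb{F}_4$ respectively.

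Next I would handle the Stein factorization of $f\colon D_i\to S$ for $i=3,4$. From the explicit resolution of Theorem~\ref{Thm:blowups}, $D_i$ sits over its base Hirzebruch surface $\underline D_i$ as the double cover cut out by an equation of the form $w^2=a_{6,4}$ in which $a_{6,4}$ is pulled back from $S$; hence $D_i\cong\underline D_i\times_S S'$, where $\pi\colon S'\to S$ is the double cover of $S\cong\mathbb{P}^1$ branched over $V(a_{6,4})\cap S$. By Lemma~\ref{Lem:degR} this branch locus has degree $\deg R=12(1-g)+2S^2=2$, so Riemann--Hurwitz (Theorem~\ref{prop:genusCover}) gives $g(S')=(2-1)(0-1)+\tfrac12\cdot 2=0$, i.e.\ $S'\cong\mathbb{P}^1$. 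Since $f'\colon D_i\to S'$ is the base change along $\pi$ of the $\mathbb{P}^1$-bundle $\underline D_i\to S$, it is again a $\mathbb{P}^1$-bundle, and $D_i\xrightarrow{f'}S'\xrightarrow{\pi}S$ is precisely the Stein factorization ($\pi$ finite of degree $2$, $f'$ with connected fibres). Because $\deg(\pi^*\mathscr{M}|_{S'})=2\deg(\mathscr{M}|_S)$ for any line bundle $\mathscr{M}$ on $B$, the Hirzebruch recipe applied over $S'$ doubles the twisting invariant of $\underline D_i$, giving $D_3\to S'$ an $\mathbb{F}_6$ and $D_4\to S'$ an $\mathbb{F}_8$. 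Equivalently, the negative section $\sigma\subset\underline D_i$ pulls back under the degree-$2$ map $D_i\to\underline D_i$ to an irreducible section of $f'$ of self-intersection $2\sigma^2$, namely $-6$ for $D_3$ and $-8$ for $D_4$, by the projection formula.

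The only step requiring genuine care rather than bookkeeping is the assertion that the branch divisor of $D_i\to\underline D_i$ is pulled back from $S$ --- so that $D_i$ really is the fibre product $\underline D_i\times_S S'$ and $f'$ a genuine $\mathbb{P}^1$-bundle rather than merely a conic bundle over $S'$ --- together with the companion fact that $\pi^*\sigma$ of the negative section is reduced and irreducible. Both follow from reading the defining equation $w^2=a_{6,4}$ of $D_i$ off the blowup sequence: $w$ is a section of $\mathscr{L}^{\otimes 3}\otimes\mathscr{S}^{\otimes -2}$ and $a_{6,4}$ is independent of the fibre coordinate of $\underline D_i\to S$, which is exactly the computation already performed in the proof of Theorem~\ref{Thm:fibralGeom}; one merely keeps track of degrees along $S$ and $S'$. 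Everything else is the elementary Hirzebruch arithmetic indicated above, and it reproduces the Hirzebruch surfaces listed in \cite{DelZotto:2017pti} as a consistency check.
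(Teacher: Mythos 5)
Your proposal is correct and follows essentially the same route as the paper: compute $\deg(\mathscr{L}|_S)=-3$ and $\deg(\mathscr{S}|_S)=-5$, read off the Hirzebruch degrees from the bundles in Theorem~\ref{Thm:fibralGeom}, identify $S'$ as a rational curve via $\deg R=2$, and double the section's self-intersection under the degree-two cover to get $\mathbb{F}_6$ and $\mathbb{F}_8$. Your added care in justifying that $D_i\cong\underline{D}_i\times_S S'$ (the branch divisor being pulled back from $S$) is a worthwhile explicit check of a point the paper leaves implicit, but it does not change the argument.
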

\begin{proof}
We use  Theorem \ref{Thm:fibralGeom}. 
The Calabi-Yau condition implies that $\mathscr{L}=\mathscr{O}(-K_B)$. 
D$_0$, D$_1$, and D$_2$ are Hirzebruch surfaces. We compute their degrees by intersection theory as follows. We recall that for a Hirzebruch surface $\mathbb{F}_{n}=\mathbb{F}_{-n}$. 
 We follow the following strategy. 
Given two line bundles $\mathscr{L}_1$ and $\mathscr{L}_2$ and a Hirzebruch surface $\mathbb{P}_S[\mathscr{L}_1\oplus\mathscr{L}_2]\to S$ over a smooth rational curve $S$, then the degree of the Hirzebruch surface is 
$\textnormal{deg} \mathscr{L}_1-\textnormal{deg} \mathscr{L}_2]=\int_S c_1(\mathscr{L}_1)-\int_S  c_1(\mathscr{L}_2)$.  
We know that D$_0$ is $\mathbb{P}_S[\mathscr{L}\oplus \mathscr{O}_S]$.
To compute the degree of this Hirzebruch surface, we compute $\int_S c_1(\mathscr{L})=-S\cdot K_B=-3$. that is, D$_0$ is an $\mathbb{F}_3$. 
We know that D$_1$ is $\mathbb{P}_S[\mathscr{L}^{\otimes 2} \oplus \mathscr{S}]$, which is 
isomorphic to $\mathbb{P}_S[\big(\mathscr{L}^{\otimes 2} \otimes \mathscr{S}^{-1}\big)\oplus \mathscr{O}_S]$
.

To compute the degree of this Hirzebruch surface, we compute $\int_S [2c_1(\mathscr{L})-S]=S\cdot (-2K_B-S)=-6+5=-1$. Hence $D_1\to S$ is an $\mathbb{F}_1$. 
D$_2$ is $\mathbb{P}_S[\mathscr{L}^{\otimes 3} \oplus \mathscr{S}^{\otimes 2}]$, which is a Hirzebruch surface of degree $S\cdot(-3K_B-2S)=-9+10=1$.  Hence, $D_2\to S$ is also an $\mathbb{F}_1$. 
As  is clear from the crepant resolution, the fibral divisors D$_3\to S$ and D$_4\to S$ are not Hirzebruch surfaces over $S$. 
Their fibers consist of two generically disconnected rational curves ramified over $R=V(s,g_2)$. 
They can be respectively described as double covers of  a  $\mathbb{P}_S[\mathscr{L}^{\otimes 4}\oplus \mathscr{S}^{\otimes 3}]$ and a  $\mathbb{P}_S[\mathscr{L}^{\otimes 2}\oplus \mathscr{S}^{\otimes 2}]$; These are isomorphic to, 
respectively,  $\mathbb{F}_3$ and $\mathbb{F}_4$. 
In each case, the branch locus of the double cover consists of two fibers of the Hirzebruch surface, and these are the fibers over $V(g_2)$. 

The absence of charged multiplets  is justified by the phenomena of frozen representations rather than the absence of degenerations supporting the weights of the representation $\mathbf{26}$. 
The Stein factorization of D$_3\to S$  is 
$$
D_3      \xrightarrow{f} S'\xrightarrow{\pi} S, 
$$ 
where $S'\xrightarrow{\pi} S$ is a double cover branched at two points and $f:D_3\to S'$ is a proper morphism with connected fibers. 
In particular, $f:D_3\to S'$ is a $\mathbb{P}^1$-bundle. Since $S'$ has genus zero, this is a Hirzebruch surface. 
The degree of this Hirzebruch surface is $\int_{D_3} S'^2$, which is $2\int_{\mathbb{F}_3} S^2=2\cdot (-3)=-6$. Hence $f:D_3\to S'$ is an $\mathbb{F}_6$-surface. 

We show in the same way that $f:D_3\to S'$ is an $\mathbb{F}_8$-surface.  
In other words, while  $D_3\to S$ and $D_4\to S$ are not $\mathbb{P}^1$-bundles over $S$, $D_3\to S'$ and $D_3\to S'$ are $\mathbb{P}^1$-bundles over $S'$. 
As discussed in the appendix, with respect to $S'$, $D_3$ and D$_4$ have the structure of an $\mathbb{F}_6$ and an $\mathbb{F}_8$ since the base curve $S'$ has self-intersection

\end{proof}

 \section{Conclusion} \label{sec:conclusion}

In this paper, we studied the geometry of F$_4$-models. Our starting point is a  singular Weierstrass model characterized by the valuations with respect to a smooth divisor $S=V(s)$ given by  Step 8 of Tate's algorithm:
$$
v_S(a_1)\geq 1, \quad v_S(a_2)\geq 2, \quad v_S(a_3)\geq 2, \quad v_S(a_4)\geq 3, \quad v_S(a_6)\geq 3,\quad \quad 
v_S(b_6)=4.$$ 
The last condition ensures that the   polynomial $Q(T)=T^2 + a_{3,2}  T-a_{6,4}$ has two distinct solutions modulo $s$. 
 We focus on the case where $Q(T)$ has no rational solutions modulo $s$. The generic fiber over $S$ is called a fiber of type IV$^{*\text{ns}}$. 
Without loss of generality, the Weierstrass model can be written in the following canonical form: 
$$
y^2z= x^3 + a_{4,3+\alpha}  s^{3+\alpha} x z^2+ a_{6,4} s^4 z^3,\quad \alpha\in\mathbb{Z}_{\geq 0}.
$$
A crepant resolution of this Weierstrass model is called an F$_4$-model. Such elliptic fibrations are used to engineer F$_4$ gauge theories in F-theory and M-theory. 
While it is common to take $\alpha=0$  in the F-theory literature, here,  we keep $\alpha$ unfixed to keep the geometry as general as possible.
This allows us to cover local enhancements of the type F$_4$ $\to$ E$_7$ and F$_4$ $\to$ E$_8$ over $s=a_{6,4}=0$, depending on the valuation of $a_4$. 
While the generic fiber is a twisted affine Dynkin diagram $\tilde{\text{F}}_4^t$, the geometric fiber is the affine $\tilde{E}_6$ diagram.
Thus we have the natural enhancement F$_4$ $\to$ E$_6$,  which is non-local and appears over any closed point of $S$ away from $a_{6,4}=0$.

The crepant resolution that we have considered consists of a sequence of  four blowups centered at regular monomial ideals. We answered several questions regarding the geometry and topology of the resulting smooth  elliptic fibration. 
 In particular, we identified the geometry of the fibral divisors of the F$_4$-model  as $\mathbb{P}^1$-bundles for the three divisors D$_0$, D$_1$, and D$_2$, while the remainders, namely D$_3$ and D$_4$, are  double-coverings of $\mathbb{P}^1$-bundles with discriminant locus $a_{6,4}=0$. This is illustrated in Figure \ref{Fig:FibralDiv}. The singular fibers of these conic bundles consist of double lines and play an important role in determining the geometry of the singular fiber over $s=a_{6,4}=0$, which is a non-Kodaira fiber of type $1-2-3-4-2$. 
This fiber can be  thought of as an incomplete E$_7$ or E$_8$ if  $v(a_4)=3$ or $v(a_4)\geq 4$, respectively. 
 
To identify the representation associated with this singular fiber over $s=a_{6,4}=0$, we computed the intersection numbers of the new rational curves with the fibral divisors. 
These intersection numbers are interpreted as weights of F$_4$. We identified the corresponding representation as  the $\mathbf{26}$ of F$_4$. 
 We also compute the triple intersection number of the fibral divisors. 
 We finally specialize to the case of  Calabi-Yau threefolds. The Euler characteristic of an F$_4$-model and the Hodge numbers in the Calabi-Yau threefold case have been presented  in  \cite{MMP1}.
We also computed  the linear form induced by the second Chern class. 

In the final section, we studied details of M-theory compactified on a Calabi-Yau threefold that is an F$_4$-model, for which 
the resulting theory is a five-dimensional gauge theory with eight supercharges. 
Such a theory has vector multiplets characterized by a cubic prepotential. The classical part of the cubic prepotential vanishes but there is a quantum correction coming from an exact one-loop contribution. 
This one-loop term depends explicitly on the number of charged hypermultiplets. 
 It is known that this correction term matches exactly with the triple intersection numbers of the fibral divisors of the elliptic Calabi-Yau threefold. 
 We computed the number of  hypermultiplets in the adjoint representation and in the fundamental representation via a direct comparison:
$$
n_{\mathbf{52}}=g, \quad n_{\mathbf{26}}=5 (1-g) + S^2.
$$
 We checked that they satisfy the genus formula of Aspinwall-Katz-Morrison-- here they are derived from the triple intersection numbers. The same number were computed by Grassi and Morrison using Witten's genus formula. 
  With the  knowledge of the Hodge numbers, matter representation, and their multiplicities, we checked explicitly that these data are compatible with a six-dimensional $(1,0)$  supergravity theory free of gravitational, gauge, and mixed anomalies.

We also computed the weights of vertical curves of an F$_4$ models and proved that over the locus $V(s,a_{6,4})$ the generic fiber over the divisor $S$ has weights of the quasi-minuscule representation $\mathbf{26}$. 
We introduced the notion of a {\em frozen representation}, which explains that the existence of geometric weights carried by vertical curves does not imply the existence of hypermultiplets charged under the corresponding representation. 
If the base is a surface, 
the divisor $S$ is a curve. The  representation $\mathbf{26}$ is frozen if and only if  $S$ has genus zero and self-intersection $-5$.

\section*{Acknowledgements}
The authors are grateful to  Paolo Aluffi, Michele del Zotto, Antonella Grassi, Jim Halverson, Jonathan Heckman, Ravi Jagadeesan, Sheldon Katz, Craig Lawrie, David Morrison,  Kenji Matsuki, Shu-Heng Shao, and Shing-Tung Yau  for helpful discussions. 
The authors are thankful to all the participants of Series of Lectures by Kenji Matsuki in March 2017 on   ``Beginners Introduction to the Minimal Model Program'' hold at Northeastern University supported by Northeastern University and  the National Science Foundation (NSF) grant DMS-1603247. 
M.E. is supported in part by the National Science Foundation (NSF) grant DMS-1701635  ``Elliptic Fibrations and String Theory''.
P.J.  is  supported by NSF grant PHY-1067976. 
 P.J.  is thankful to Cumrun Vafa  for his guidance and constant support. 
M.J.K. would like to acknowledge partial support from NSF grant PHY-1352084.
M.J.K. would like to extend her gratitude to  Daniel Jafferis for his tutelage and continued support.

\begin{table}[htb]
\begin{center}
$			
\begin{array}{|c|c| c  |} \hline
\vrule width 0pt height 3ex 
 \text{Fiber Type} & \text{ Dual graph  } & \text{Dual graph of Geometric fiber } \\\hline
 			  \begin{array}{c}
\text{I}^{\text{ns}}_{3}, \text{IV}^{\text{ns}}\\
						  \\
						 \widetilde{\text{A}}_{1}  \\
						 \\
						 \end{array}

 &			\scalebox{1}{$\begin{array}{c}\begin{tikzpicture}
				\node[draw,circle,thick,scale=1.25,label=above:{1}, fill=black] (1) at (0,0){};
				\node[draw,circle,thick,scale=1.25,label=above:{1}] (2) at (1.3,0){};
				\draw[thick] (0.15,0.1) --++ (.95,0);
				\draw[thick] (0.15,-0.09) --++ (.95,0);
				
			\end{tikzpicture}\end{array}
		$} &
		\scalebox{1}{$\begin{array}{c}\begin{tikzpicture}
				\node[draw,circle,thick,scale=1.25,label=above:{1}, fill=black] (1) at (-.6,0){};
				\node[draw,circle,thick,scale=1.25,label=above:{1}] (2) at (45:.6){};
				\node[draw,circle,thick,scale=1.25,label=below:{1}] (3) at (-45:.6){};
				\draw[thick] (0,0) --(1);
				\draw[thick] (0,0) --(2);
				\draw[thick] (0,0) --(3);
								\draw[<->,>=stealth',semithick,dashed]  (.8,-.5) arc (-30:30:1cm);
			\end{tikzpicture}\end{array}$}

				\\\hline

						 \begin{array}{c}

						  \text{I}^{*\text{ns}}_{\ell-3}\\
						  \\
						 \widetilde{\text{B}}_{\ell}^t  \\
						 \\
						 (\ell\geq 3)
						 \end{array}
						 &\scalebox{1}{$\begin{array}{c} \begin{tikzpicture}
				\node[draw,circle,thick,scale=1.25,fill=black,label=above:{1}] (1) at (-.1,.7){};
				\node[draw,circle,thick,scale=1.25,label=above:{1}] (2) at (-.1,-.7){};	
				\node[draw,circle,thick,scale=1.25,label=above:{2}] (3) at (1,0){};
				\node[draw,circle,thick,scale=1.25,label=above:{2}] (4) at (2.1,0){};
				\node[draw,circle,thick,scale=1.25,label=above:{2}] (5) at (3.3,0){};	
				\node[draw,circle,thick,scale=1.25,label=above:{1}] (6) at (4.6,0){};	
				\draw[thick] (1) to (3) to (4);
				\draw[thick] (2) to (3);
				\draw[ultra thick, loosely dotted] (4) to (5) {};
				\draw[thick] (3.5,-0.05) --++ (.9,0){};
				\draw[thick] (3.5,+0.05) --++ (.9,0){};
				\draw[thick]
					(3.9,0) --++ (60:.25)
					(3.9,0) --++ (-60:.25);
			\end{tikzpicture}\end{array}$}&
			
\scalebox{1}{$\begin{array}{c} \begin{tikzpicture}
				\node[draw,circle,thick,scale=1.25,fill=black,label=above:{1}] (1) at (-.1,.7){};
				\node[draw,circle,thick,scale=1.25,label=above:{1}] (2) at (-.1,-.7){};	
				\node[draw,circle,thick,scale=1.25,label=above:{2}] (3) at (1,0){};
				\node[draw,circle,thick,scale=1.25,label=above:{2}] (4) at (2.1,0){};
				\node[draw,circle,thick,scale=1.25,label=above:{2}] (5) at (3.3,0){};	
				\node[draw,circle,thick,scale=1.25,label=above:{1}] (6) at (4.6,.7){};	
								\node[draw,circle,thick,scale=1.25,label=above:{1}] (7) at (4.6,-.7){};	
				\draw[thick] (1) to (3) to (4);
				\draw[thick] (2) to (3);
				\draw[ultra thick, loosely dotted] (4) to (5) {};
				\draw[thick] (5) to (6); 
				\draw[thick] (5) to (7);
				\draw[<->,>=stealth',semithick,dashed]  (5,-.5) arc (-30:30:1.2cm);
			\end{tikzpicture}\end{array}$}
			
			\\\hline
			\begin{array}{c}
			     \text{I}_{2\ell+2}^{\text{ns}}
			\\
			\\
			\vrule width 0pt height 3ex 
 \widetilde{\text{C}}_{\ell+1}^t  \\
						 \\
						 (\ell\geq 1)
 \end{array}
   &
			\scalebox{.95}{$\begin{array}{c} \begin{tikzpicture}
				\node[draw,circle,thick,scale=1.25,fill=black,label=above:{1}] (1) at (-.2,0){};
				\node[draw,circle,thick,scale=1.25,label=above:{1}] (3) at (.8,0){};
				\node[draw,circle,thick,scale=1.25,label=above:{1}] (4) at (1.8,0){};
				\node[draw,circle,thick,scale=1.25,label=above:{1}] (5) at (2.8,0){};	
				\node[draw,circle,thick,scale=1.25,label=above:{1}] (6) at (3.8,0){};	
				\node[draw,circle,thick,scale=1.25,label=above:{1}] (7) at (4.8,0){};	
				\draw[thick]   (3) to (4);
				\draw[thick] (5) to (6);
				\draw[ultra thick, loosely dotted] (4) to (5) {};
				\draw[thick] (4.,-0.05) --++ (.6,0){};
				\draw[thick] (4.,+0.05) --++ (.6,0){};
								\draw[thick] (-.2,-0.05) --++ (.8,0){};
				\draw[thick] (-.2,+0.05) --++ (.8,0){};

				\draw[thick]
					(4.4,0) --++ (-120:.25)
					(4.4,0) --++ (120:.25);
					\draw[thick]
					(0.2,0) --++ (-60:.25)
					(0.2,0) --++ (60:.25);
			\end{tikzpicture}\end{array}$}

			&

			\scalebox{.95}{$\begin{array}{c} \begin{tikzpicture}
				\node[draw,circle,thick,scale=1.25,fill=black,label=above:{1}] (1) at (-.2,0){};	
				\node[draw,circle,thick,scale=1.25,label=below:{1}] (3a) at (.8,-.8){};
				\node[draw,circle,thick,scale=1.25,label=below:{1}] (4a) at (1.8,-.8){};
				\node[draw,circle,thick,scale=1.25,label=below:{1}] (5a) at (2.8,-.8){};	
				\node[draw,circle,thick,scale=1.25,label=below:{1}] (6a) at (3.8,-.8){};	
								\node[draw,circle,thick,scale=1.25,label=above:{1}] (3b) at (.8,.8){};
				\node[draw,circle,thick,scale=1.25,label=above:{1}] (4b) at (1.8,.8){};
				\node[draw,circle,thick,scale=1.25,label=above:{1}] (5b) at (2.8,.8){};	
				\node[draw,circle,thick,scale=1.25,label=above:{1}] (6b) at (3.8,.8){};	

				\node[draw,circle,thick,scale=1.25,label=above:{1}] (7) at (4.8,0){};	
				\draw[thick]   (4b)--(3b)--(1)--(3a)--(4a);
				\draw[thick]   (5b)--(6b)--(7)--(6a)--(5a);
				\draw[ultra thick, loosely dotted] (4a) to (5a) {};
								\draw[ultra thick, loosely dotted] (4b) to (5b) {};
\draw[<->,>=stealth',semithick,dashed] ($(4a)+(0,0.3)$) --($(4b)-(0,0.3)$) {};
\draw[<->,>=stealth',semithick,dashed] ($(5a)+(0,0.3)$) --($(5b)-(0,0.3)$) {};
\draw[<->,>=stealth',semithick,dashed] ($(6a)+(0,0.3)$) --($(6b)-(0,0.3)$) {};
\draw[<->,>=stealth',semithick,dashed] ($(3a)+(0,0.3)$) --($(3b)-(0,0.3)$) {};
			\end{tikzpicture}\end{array}$}

			\\\hline

			\begin{array}{c}
			     \text{I}_{2\ell+3}^{\text{ns}}
			\\
			\\
			\vrule width 0pt height 3ex 
 \widetilde{\text{C}}_{\ell+1}^t   \\
						 \\
						 (\ell\geq 1)
 \end{array}
   &
			\scalebox{.95}{$\begin{array}{c} \begin{tikzpicture}
				\node[draw,circle,thick,scale=1.25,fill=black,label=above:{1}] (1) at (-.2,0){};
				\node[draw,circle,thick,scale=1.25,label=above:{1}] (3) at (.8,0){};
				\node[draw,circle,thick,scale=1.25,label=above:{1}] (4) at (1.8,0){};
				\node[draw,circle,thick,scale=1.25,label=above:{1}] (5) at (2.8,0){};	
				\node[draw,circle,thick,scale=1.25,label=above:{1}] (6) at (3.8,0){};	
				\node[draw,circle,thick,scale=1.25,label=above:{1}] (7) at (4.8,0){};	
				\draw[thick]   (3) to (4);
				\draw[thick] (5) to (6);
				\draw[ultra thick, loosely dotted] (4) to (5) {};
				\draw[thick] (4.,-0.05) --++ (.6,0){};
				\draw[thick] (4.,+0.05) --++ (.6,0){};
								\draw[thick] (-.2,-0.05) --++ (.8,0){};
				\draw[thick] (-.2,+0.05) --++ (.8,0){};

				\draw[thick]
					(4.4,0) --++ (-120:.25)
					(4.4,0) --++ (120:.25);
					\draw[thick]
					(0.2,0) --++ (-60:.25)
					(0.2,0) --++ (60:.25);
			\end{tikzpicture}\end{array}$}

			&
			
				\scalebox{.95}{$\begin{array}{c} \begin{tikzpicture}
				\node[draw,circle,thick,scale=1.25,fill=black,label=above:{1}] (1) at (-.2,0){};	
				\node[draw,circle,thick,scale=1.25,label=below:{1}] (3a) at (.8,-.8){};
				\node[draw,circle,thick,scale=1.25,label=below:{1}] (4a) at (1.8,-.8){};
				\node[draw,circle,thick,scale=1.25,label=below:{1}] (5a) at (2.8,-.8){};	
				\node[draw,circle,thick,scale=1.25,label=below:{1}] (6a) at (3.8,-.8){};	
								\node[draw,circle,thick,scale=1.25,label=above:{1}] (3b) at (.8,.8){};
				\node[draw,circle,thick,scale=1.25,label=above:{1}] (4b) at (1.8,.8){};
				\node[draw,circle,thick,scale=1.25,label=above:{1}] (5b) at (2.8,.8){};	
				\node[draw,circle,thick,scale=1.25,label=above:{1}] (6b) at (3.8,.8){};	

				\node[draw,circle,thick,scale=1.25,label=below:{1}] (7a) at (4.8,-.8){};
				\node[draw,circle,thick,scale=1.25,label=above:{1}] (7b) at (4.8,.8){};	
				\draw[thick]   (4b)--(3b)--(1)--(3a)--(4a);
				\draw[thick]   (5b)--(6b)--(7b)--(7a)--(6a)--(5a);
				\draw[ultra thick, loosely dotted] (4a) to (5a) {};
								\draw[ultra thick, loosely dotted] (4b) to (5b) {};

\draw[<->,>=stealth',semithick,dashed] ($(4a)+(0,0.3)$) --($(4b)-(0,0.3)$) {};
\draw[<->,>=stealth',semithick,dashed] ($(5a)+(0,0.3)$) --($(5b)-(0,0.3)$) {};
\draw[<->,>=stealth',semithick,dashed] ($(6a)+(0,0.3)$) --($(6b)-(0,0.3)$) {};
\draw[<->,>=stealth',semithick,dashed] ($(3a)+(0,0.3)$) --($(3b)-(0,0.3)$) {};
\draw[<->,>=stealth',semithick,dashed] (5.2,-1) arc (-70:70:1) {};
			\end{tikzpicture}\end{array}$}

			\\\hline
			\begin{array}{c}
			\text{IV}^{* \text{ns}}
			\\ 
			\vrule width 0pt height 3ex 
\widetilde{\text{F}}_4^t      
\end{array}
&			

\scalebox{1}{$\begin{array}{c}\begin{tikzpicture}
				\node[draw,circle,thick,scale=1.25,fill=black,label=above:{1}] (1) at (0,0){};
				\node[draw,circle,thick,scale=1.25,label=above:{2}] (2) at (1,0){};
				\node[draw,circle,thick,scale=1.25,label=above:{3}] (3) at (2,0){};
				\node[draw,circle,thick,scale=1.25,label=above:{2}] (4) at (3,0){};
				\node[draw,circle,thick,scale=1.25,label=above:{1}] (5) at (4,0){};
				\draw[thick] (1) to (2) to (3);
				\draw[thick]  (4) to (5);
				\draw[thick] (2.2,0.05) --++ (.6,0);
				\draw[thick] (2.2,-0.05) --++ (.6,0);
				\draw[thick]
					(2.4,0) --++ (60:.25)
					(2.4,0) --++ (-60:.25);
			\end{tikzpicture}\end{array}$} &
			\scalebox{1}{$\begin{array}{c}\begin{tikzpicture}
				\node[draw,circle,thick,scale=1.25,fill=black,label=below:{1}] (0) at (0,0){};
				\node[draw,circle,thick,scale=1.25,label=below:{2}] (1) at (.8,0){};
				\node[draw,circle,thick,scale=1.25,label=below:{3}] (2) at (.8*2,0){};
				\node[draw,circle,thick,scale=1.25,label=below:{2}] (3) at (.8*3,0){};
				\node[draw,circle,thick,scale=1.25,label=below:{1}] (4) at (.8*4,0){};
								\node[draw,circle,thick,scale=1.25,label=left:{2}] (5) at (.8*2,.8*1){};
																\node[draw,circle,thick,scale=1.25,label=above:{1}] (6) at (.8*2,.8*2){};
				\draw[thick] (0)--(1)--(2)--(3)--(4);
				\draw[thick] (2)--(5)--(6);
				\draw[<->,>=stealth',semithick,dashed]  (2.5,0.3) arc (25:65:1.3cm);
				\draw[<->,>=stealth',semithick,dashed]  (3.3,0.3) arc (25:65:3cm);

			\end{tikzpicture}\end{array}$}
		
			\\\hline
			\begin{array}{c}
			\text{I}^{*\text{ss}}_{0}
			\\
			\widetilde{\text{B}}_3^t
\end{array}
 &			\scalebox{1}{$\begin{array}{c}\begin{tikzpicture}
				\node[draw,circle,thick,scale=1.25,label=below:{1}] (1) at (0,-.5){};
				\node[draw,circle,thick,scale=1.25,label=above:{2}] (2) at (1.3,0){};
				\node[draw,circle,thick,scale=1.25,label=above:{1}] (3) at (2.6,0){};
								\node[draw,circle,thick,scale=1.25,label=above:{1}, fill=black] (4) at (0,.5){};
				\draw[thick] (1) to (2);\draw[thick] (2) to (4);
				\draw[thick] (1.5,0.09) --++ (.9,0);
				\draw[thick] (1.5,-0.09) --++ (.9,0);
				\draw[thick]
					(1.9,0) --++ (60:.25)
					(1.9,0) --++ (-60:.25);
			\end{tikzpicture}\end{array}
		$} & 
		\scalebox{.8}{$\begin{array}{c}\begin{tikzpicture}
				\node[draw,circle,thick,scale=1.25,label=30:{2}] (0) at (0,0){};
				\node[draw,circle,thick,scale=1.25,label=below:{1}, fill=black] (1) at (-1,0){};
				\node[draw,circle,thick,scale=1.25,label=below:{1}] (2) at (1,0){};
				\node[draw,circle,thick,scale=1.25,label=above:{1}] (3) at (90:1){};      								\node[draw,circle,thick,scale=1.25,label=below:{1}] (4) at (90:-1){};
				\draw[thick] (0) to (1);
				\draw[thick] (0) to (2);
				\draw[thick] (0) to (3);
				\draw[thick] (0) to (4);
								\draw[<->,>=stealth',semithick,dashed]  (1.1,0.3) arc (25:65:1.7cm);
			\end{tikzpicture}\end{array}
		$}

				\\\hline

			\begin{array}{c}
			\text{I}^{*\text{ns}}_{0}
			\\
			\widetilde{\text{G}}_2^t
\end{array}
 &			\scalebox{1}{$\begin{array}{c}\begin{tikzpicture}
				\node[draw,circle,thick,scale=1.25,label=above:{1}, fill=black] (1) at (0,0){};
				\node[draw,circle,thick,scale=1.25,label=above:{2}] (2) at (1.3,0){};
				\node[draw,circle,thick,scale=1.25,label=above:{1}] (3) at (2.6,0){};
				\draw[thick] (1) to (2);
				\draw[thick] (1.5,0.09) --++ (.9,0);
				\draw[thick] (1.5,-0.09) --++ (.9,0);
				\draw[thick] (1.5,0) --++ (.9,0);
				\draw[thick]
					(1.9,0) --++ (60:.25)
					(1.9,0) --++ (-60:.25);
			\end{tikzpicture}\end{array}
		$} & 
		\scalebox{.8}{$\begin{array}{c}\begin{tikzpicture}
				\node[draw,circle,thick,scale=1.25,label=30:{2}] (0) at (0,0){};
				\node[draw,circle,thick,scale=1.25,label=above:{1}, fill=black] (1) at (-1,0){};
				\node[draw,circle,thick,scale=1.25,label=right:{1}] (2) at (1,0){};
				\node[draw,circle,thick,scale=1.25,label=above:{1}] (3) at (90:1){};
								\node[draw,circle,thick,scale=1.25,label=below:{1}] (4) at (90:-1){};
				\draw[thick] (0) to (1);
				\draw[thick] (0) to (2);
				\draw[thick] (0) to (3);
				\draw[thick] (0) to (4);
				\draw[<->,>=stealth',semithick,dashed]  (1.1,0.3) arc (25:65:1.7cm);
				\draw[<->,>=stealth',semithick,dashed]  (1.1,-0.3) arc (-25:-65:1.7cm);
			\end{tikzpicture}\end{array}
		$}

				\\\hline

				\end{array}$
	\end{center}
	\caption{
{ Dual graphs for singular fibers elliptic fibrations with non-geometrically irreducible fiber components.}  \label{Table:DualGraph}
}
\end{table}

\clearpage

\appendix 

\section{Weierstrass models and Deligne's formulaire }
\label{sec:Wmodel}

In this section, we follow the notation of \cite{Formulaire}. 
Let  $\mathscr{L}$ be a line bundle over  a normal quasi-projective variety  $B$.  We define the projective bundle (of lines)
\begin{equation}
\pi: X_0=\mathbb{P}_B[\mathscr{O}_B\oplus \mathscr{L}^{\otimes 2}\oplus \mathscr{L}^{\otimes 3}]\longrightarrow B.
\end{equation} 
The relative projective coordinates of $X_0$ over $B$ are denoted $[z:x:y]$,  where $z$, $x$, and $y$ are defined  by the natural injection of 
 $\mathscr{O}_B$,   $\mathscr{L}^{\otimes 2}$, and $\mathscr{L}^{\otimes 3}$ into $\mathscr{O}_B\oplus \mathscr{L}^{\otimes 2}\oplus \mathscr{L}^{\otimes 3}$, respectively. Hence, 
  $z$ is a section of $\mathscr{O}_{X_0}(1)$, $x$ is a section of $\mathscr{O}_{X_0}(1)\otimes \pi^\ast \mathscr{L}^{\otimes 2}$, and
$y$ is a section of  $\mathscr{O}_{X_0}(1)\otimes \pi^\ast \mathscr{L}^{\otimes 3}$.

\begin{defn}
 A  Weierstrass model is an elliptic fibration $\varphi: Y\to B$  cut out by the zero locus of  a section of the  
line bundle $\mathscr{O}(3)\otimes \pi^\ast \mathscr{L}^{\otimes 6}$ in $X_0$. 
\end{defn}
The most general Weierstrass equation is written in the notation of Tate as
\begin{equation}
y^2z+ a_1 xy z + a_3  yz^2 -(x^3+ a_2 x^2 z + a_4 x z^2 + a_6 z^3) =0,
\end{equation} 
where $a_i$ is a section of $\pi^\ast \mathscr{L}^{\otimes i}$. 
The line bundle $\mathscr{L}$ is called the {\em fundamental line bundle} of the Weierstrass model $\varphi:Y\to B$. It can be defined directly from $Y$ as 
$\mathscr{L}=R^1 \varphi_\ast Y$. 
Following Tate and Deligne, we introduce the following quantities 
\begin{align}
\begin{cases}
b_2 &= a_1^2 + 4 a_2\\
b_4 &= a_1 a_3 + 2 a_4\\
b_6 &= a_3^2 + 4 a_6\\
b_8 &= a_1^2 a_6 - a_1 a_3 a_4 + 4 a_2 a_6 + a_2 a_3^2 - a_4^2\\
c_4 &= b_2^2 - 24 b_4\\
c_6 &= -b_2^3 + 36 b_2 b_4 - 216 b_6\\
\Delta &= -b_2^2 b_8 - 8 b_4^3 - 27 b_6^2 + 9 b_2 b_4 b_6\\
j& = {c_4^3}/{\Delta}
\end{cases}
\end{align}
The  $b_i$ ($i=2,3,4,6)$ and $c_i$   ($i=4,6$) are  sections of $\pi^\ast \mathscr{L}^{\otimes i}$. 
The discriminant $\Delta$ is a section of $\pi^\ast \mathscr{L}^{\otimes 12}$. 
They satisfy the two relations
\begin{align}
1728 \Delta=c_4^3-c_6^2, \quad 4b_8 = b_2 b_6 - b_4^2.
\end{align}
Completing the square in $y$ gives 
\begin{equation}
zy^2 =x^3 +\tfrac{1}{4}b_2 x^2 + \tfrac{1}{2} b_4 x + \tfrac{1}{4} b_6.
\end{equation}
Completing the cube in $x$ gives the short form of the Weierstrass equation
\begin{equation}
zy^2 =x^3 -\tfrac{1}{48} c_4 x z^2 -\tfrac{1}{864} c_6 z^3.
\end{equation}

\section{Representation theory of F$_4$}

F$_4$ is studied in Planche VIII of \cite{Boubaki.Lie46}.
The exceptional group F$_4$ has rank $4$, Coxeter number $12$, dimension $52$, and a trivial center. 
Its root system consists of $48$ roots, half of which are short roots. The Weyl group $W(\text{F}_4)$ of F$_4$ 
is the semi-direct product of the symmetric group $\mathfrak{S}_3$
with the semi-direct product of the symmetric group $\mathfrak{S}_4$ and $(\mathbb{Z}/2\mathbb{Z})^3$. Hence, $W(F_4)$
has dimension $3! 4! 2^3=2^7\times 3^2=1152$.  $W(\text{F}_4)$ is also a solvable group isomorphic to the symmetry group of the $24$-cell. 
The long roots of F$_4$ form a sublattice of index $4$.  
The outer automorphism group of F$_4$ is trivial. Hence, F$_4$ has neither complex nor quaternionic representations, and all its representations are real. 
Its smallest representation has dimension $26$ and is usually called the fundamental representation of F$_4$.  
F$_4$ is not simply laced and can be described from E$_6$ by a $\mathbb{Z}/2\mathbb{Z}$ folding.

\begin{figure}[htb]
\begin{center}
\begin{tikzpicture}
				\node[draw,circle,thick,scale=1.25,label=above:{$\alpha_1$}] (2) at (1,0){};
				\node[draw,circle,thick,scale=1.25, label=above:{$\alpha_2$}] (3) at (2,0){};
				\node[draw,circle,thick,scale=1.25,label=above:{$\alpha_3$}] (4) at (3,0){};
				\node[draw,circle,thick,scale=1.25, label=above:{$\alpha_4$}] (5) at (4,0){};
				\draw[thick] (2) to (3);
				\draw[thick]  (4) to (5);
				\draw[thick] (2.2,0.05) --++ (.6,0);
				\draw[thick] (2.2,-0.05) --++ (.6,0);
				\draw[thick]
					(2.6,0) --++ (-120:.25)
					(2.6,0) --++ (120:.25);
	\end{tikzpicture}
	\quad \quad \quad 
	$
\begin{pmatrix}
2 & -1 & 0 & 0 \\
-1& 2 & -2  & 0 \\
0 & -1 & 2 & -1 \\
0 & 0 & -1 & 2
\end{pmatrix}
$
	\end{center}
	\caption{Dynkin diagram and Cartan matrix of of F$_4$}
		\end{figure}
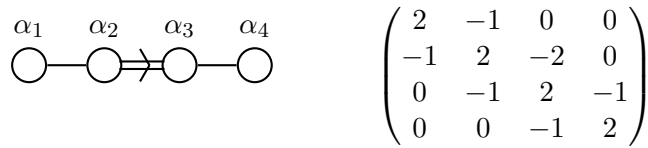

\begin{center}
\begin{tabular}{|c|c|}
\hline
Highest weight & Dimension \\
\hline
$(1,0,0,0)$ & $52$ \\
$(0,1,0,0)$ & $1274$\\
$(0,0,1,0)$ & $273$\\
$(0,0,0,1)$ & $26$ \\
\hline 
\end{tabular}
\end{center}

F$_4$ contains both B$_3$ and C$_3$, as  is clear by removing the first or last node.  
It is less trivial to see that F$_4$ also contains  B$_4$. One way to see it is to remember that F$_4=su(3,\mathbb{O})$ while $\mathfrak{so}_9=\mathfrak{su}(2,\mathbb{O})$.
The coset manifold F$_4/\text{Spin}(9)$ is the octonionic projective plane $\mathbb{O}\mathbb{P}^2$.
It follows that we have the isomorphism \cite{Bernadoni,Figueroa}
$$
\pi_i(\text{F}_4)=\pi_i(\text{Spin}(9)), \quad i\leq 6.
$$

The compact real form of F$_4$ can be described as the automorphism group of the Jordan Lie algebra $J_3$ of dimension $27$ \cite{Bernadoni}. 
Maybe a more geometrically familiar picture is to describe the  compact real form of F$_4$ as the {\em Killing superalgebra} of the $8$-sphere $S^{8}$
\cite{Figueroa}. 
In this form, the Lie algebra F$_4$ decomposes (as a vector space) as the direct sum of the Lie algebra B$_4\cong \mathfrak{so}_9$ and its  spin representation.

\section{A double cover of a ruled surface branch along $2b$ fibers}

A ruled surface is by definition a $\mathbb{P}^1$-bundle over a smooth curve of genus $g$.   
Let $p: Y\to S$ be the projection of a ruled  surface $Y$ to its base curve $S$. There exists a non-negative  number $n$ such that $S^2=-n$.
We will now construct the double cover $X$ of $Y$ branched along $2b$ fibers. 

We denote the projection map as $\pi: X\to Y$. 
We assume that the ramification locus of $\pi$ consists of $2b$ distinct fibers of $Y$. Then  $X$ is smooth and defines a flat double cover of $Y$. 
  Let $f$ be the class of a  generic fiber of $Y$ and define $\ell$ such that $2\ell=\pi^* f$.  We have
\begin{align}
K_X &=\pi^* K_Y+ b \pi^* f=-2 \overline{S}+(2(2g+b-1)-2-2n) \ell \nonumber \\
K_X^2&= 8(2-2g-b), \quad c_2 =4(2-2g-b), \quad
\chi(\mathscr{O}_X)=2 -2g-b. \nonumber
\end{align}
The curve $\overline{S}=\pi^* S$  is a double cover of $S$ branched at $2b$ distinct points. Hence,  $\overline{S}$ is a smooth curve of genus $2g+b-1$. 
The self-intersection of $\overline{S}$ in $X$ is $-2n$ by a pushforward argument: 
  $$
\int_X  \overline{S}^2=\int_Y \pi_* \overline{S}^2=2\int_Y S^2=-2n,
  $$
  where we used the fact that $\pi$ is a finite map of degree $2$. 
The $\mathbb{P}^1$-bundle over $S$ pulls back to a $\mathbb{P}^1$-bundle over $\overline{S}$. Hence, 
$X$ is a ruled surface over a curve $\overline{S}$ of genus $(2g+b-1)$  and self-intersection $-2n$.  
By the universal property of the Stein factorization, $X\overset{f'}{\longrightarrow } \overline{S}\overset{\pi}{\longrightarrow } S$ is the Stein factorization of 
$X\to S$.

\end{document}